\documentclass[%
reprint,
superscriptaddress,
bibnotes,
amsmath,amssymb,
aps,
prx,
floatfix,
]{revtex4-2}
\usepackage{graphicx}
\usepackage{dcolumn}
\usepackage{bm}
\usepackage[breaklinks=true]{hyperref}
\usepackage{natbib}
\usepackage{mathrsfs,mathtools,color,wasysym,dsfont,here}
\usepackage{amsthm}
\usepackage{physics}
\usepackage[all]{xy}
\usepackage{amsfonts}
\usepackage{array,booktabs}
\usepackage{comment}
\usepackage{cases}

\newcommand{\ad}{\operatorname{ad}}
\renewcommand{\Tr}{\operatorname{Tr}}
\newcommand{\Ker}{\operatorname{Ker}}
\newcommand{\opn}{|\!|\!|}

\newcommand{\cst}{\mathcal{C}^\mathrm{Sch} }
\newcommand{\aast}{\mathcal{A}^\mathrm{Sch} }
\newcommand{\cint}{\mathcal{C}^\mathrm{Int} }
\newcommand{\aint}{\mathcal{A}^\mathrm{Int} }

\theoremstyle{plain}
\newtheorem{thm}{Theorem}
\newtheorem{lem}[thm]{Lemma}

\theoremstyle{definition}
\newtheorem{dfn}[thm]{Definition}
\newtheorem{con}[thm]{Condition}

\theoremstyle{remark}

\newtheorem{ex}{Example}

\begin{document}

\title{Theory of Steady States for Lindblad Equations beyond Time-Independence: Classification, Uniqueness and Symmetry}

\author{Hironobu Yoshida}
\email{hironobu-yoshida57@g.ecc.u-tokyo.ac.jp}

\affiliation{Department of Physics, Graduate School of Science, The University of Tokyo, 7-3-1, Hongo, Bunkyo-ku, Tokyo, 113-0033, Japan}

\affiliation{Nonequilibrium Quantum Statistical Mechanics RIKEN Hakubi Research Team,
RIKEN Pioneering Research Institute (PRI), Wako, Saitama 351-0198, Japan}

\author{Ryusuke Hamazaki}
\affiliation{Nonequilibrium Quantum Statistical Mechanics RIKEN Hakubi Research Team,
RIKEN Pioneering Research Institute (PRI), Wako, Saitama 351-0198, Japan}
\affiliation{RIKEN Center for Interdisciplinary Theoretical and Mathematical Sciences (iTHEMS), RIKEN, Wako 351-0198, Japan}
\begin{abstract}
We present a rigorous and comprehensive classification of the asymptotic behavior of time-dependent Gorini-Kossakowski-Sudarshan-Lindblad (GKSL) equations under the assumption of Hermitian jump operators. Our results apply to a broad class of GKSL equations whose time dependence is assumed to be recurrent, including time-independent, periodic, quasiperiodic, and certain classes of random time dependence.
Our main contributions are twofold: first, we establish a criterion for the uniqueness of steady states. The criterion is formulated in terms of the algebra generated by the GKSL generators and provides a necessary and sufficient condition when the generators are analytic functions of time. 
We demonstrate the utility of our criterion through prototypical examples, including quantum many-body spin chains.
Second, we extend the concept of strong symmetry for time-dependent GKSL equations by introducing two distinct forms,  strong symmetry in the Schr\"odinger picture and that in the interaction picture, and completely classify the asymptotic dynamics with them.  
More concretely, we rigorously uncover that the strong symmetry in the interaction picture is responsible for non-trivial \textit{time-dependent} steady states, such as coherent oscillations, whereas that in the Schr\"odinger picture controls the existence of time-independent steady states. 
This classification not only encompasses established mechanisms underlying non-trivial oscillatory steady states, such as strong dynamical symmetry and Floquet dynamical symmetry, but also reveals symmetry-predicted, time-dependent asymptotic dynamics in a novel class of open quantum systems. Our framework thus provides a rigorous foundation for controlling dissipative quantum systems in a time-dependent manner.
\end{abstract}

\maketitle

\section{Introduction}

The Gorini–Kossakowski–Sudarshan–Lindblad~(GKSL) equation~\cite{gorini_CompletelyPositiveDynamical_1976,lindblad_GeneratorsQuantumDynamical_1976} is one of the most fundamental frameworks for modeling open quantum systems. It is microscopically derived under suitable approximations~\cite{breuer_TheoryOpenQuantum_2007,rivas_OpenQuantumSystems_2012,trushechkin_UnifiedGoriniKossakowskiLindbladSudarshanQuantum_2021}, which is physically justified in several cases~\cite{mori_FloquetStatesOpen_2023,shiraishi_QuantumMasterEquation_2025}.
Widely describing open-system dynamics considered to be Markovian, the GKSL equation predicts various interesting phenomena, such as dissipative time crystals~\cite{gong_DiscreteTimeCrystallineOrder_2018,iemini_BoundaryTimeCrystals_2018,kessler_ObservationDissipativeTime_2021a,kongkhambut_ObservationContinuousTime_2022} and quantum synchronization~\cite{tindall_QuantumSynchronisationEnabled_2020,schmolke_NoiseInducedQuantumSynchronization_2022,buca_AlgebraicTheoryQuantum_2022}. In addition to physical implications, its generality and tractability have led researchers to establish a rich body of mathematical results concerning its structure and properties~\cite{wolf_AssessingNonMarkovianQuantum_2008,rivas_OpenQuantumSystems_2012}.

One of the most crucial questions regarding the GKSL equation is whether the steady state is unique or not, which is not only of fundamental interest but also closely related to various applications such as dissipative state preparation~\cite{kraus_PreparationEntangledStates_2008,verstraete_QuantumComputationQuantumstate_2009,harrington_EngineeredDissipationQuantum_2022}. Indeed, if the steady state is unique, every initial state relaxes to it; otherwise, the initial state must be carefully prepared to obtain desired states. Even in the latter case, we may be able to engineer some exotic dynamics such as entangled steady states~\cite{li_HilbertSpaceFragmentation_2023} and constrained dynamics~\cite{stannigel_ConstrainedDynamicsZeno_2014,maity_KineticallyConstrainedModels_2024}.

Owing to their importance, rigorous conditions for the uniqueness of steady states of the time-independent GKSL equation have been studied extensively over a long history~\cite{spohn_AlgebraicConditionApproach_1977,frigerio_QuantumDynamicalSemigroups_1977,frigerio_StationaryStatesQuantum_1978,spohn_KineticEquationsHamiltonian_1980,fagnola_SubharmonicProjectionsQuantum_2002,baumgartner_AnalysisQuantumSemigroups_2008,baumgartner_AnalysisQuantumSemigroups_2008a,nigro_UniquenessSteadystateSolution_2019,amato_AsymptoticsQuantumChannels_2023,yoshida_UniquenessSteadyStates_2024,zhang_CriteriaDaviesIrreducibility_2024,amato_NumberSteadyStates_2024,seltmann_UniqueNESSGraphTheory_2026,hamazaki_IntroductionMonitoredQuantum_2026,fagnola_IrreducibilityQuantumMarkov_2025} beginning with Spohn’s seminal work. These conditions, represented in light of the algebraic structure of the GKSL generators,  have been revisited in recent years and shown to be connected to a distinctive symmetry in open quantum systems called strong symmetry~\cite{buca_NoteSymmetryReductions_2012,albert_SymmetriesConservedQuantities_2014}. It is well known that the presence of strong symmetries leads to degeneracy in the time-independent steady state. 

As another line of research, the so-called strong dynamical symmetries are found to prevent systems from decaying to trivial time-independent steady states
even in the presence of dissipation. In contrast with the strong symmetry, strong dynamical symmetries protect coherent and oscillatory evolution at long times~\cite{buca_NonstationaryCoherentQuantum_2019,piccitto_SymmetriesConservedQuantities_2021,taheri_AllopticalDissipativeDiscrete_2022,hajdusek_SeedingCrystallizationTime_2022,buca_AlgebraicTheoryQuantum_2022} in time-independent GKSL dynamics. Therefore, the study of steady-state structure has evolved beyond a purely theoretical concept and has become a rich source of intriguing physics.

Although substantial progress has been made in the case of time-independent GKSL equations, very little is known about the mathematical properties of \textit{time-dependent} GKSL generators, despite their growing interest. Time-dependent driving leads to richer physical structures,
an important class of which includes time-periodic generators~\cite{prosen_NonequilibriumPhaseTransition_2011,vorberg_GeneralizedBoseEinsteinCondensation_2013,haddadfarshi_CompletelyPositiveApproximate_2015,dai_FloquetTheoremOpen_2016,fitzpatrick_ObservationDissipativePhase_2017,hartmann_AsymptoticFloquetStates_2017,ikeda_GeneralDescriptionNonequilibrium_2020,schnell_ThereFloquetLindbladian_2020,fedorov_PhotonTransportBoseHubbard_2021,schnell_HighfrequencyExpansionsTimeperiodic_2021,mizuta_BreakdownMarkovianityInteractions_2021,ikeda_NonequilibriumSteadyStates_2021,mori_FloquetStatesOpen_2023,chen_PeriodicallyDrivenOpen_2024}. In this case, various states emerge, such as steady states with the same period as the periodic drive or states with periods that are integer multiples of the drive~\cite{gong_DiscreteTimeCrystallineOrder_2018,chinzei_TimeCrystalsProtected_2020,kessler_ObservationDissipativeTime_2021a}. Moreover, steady states can be analyzed using high-frequency expansions, which is useful for dissipative state preparation~\cite{ikeda_NonequilibriumSteadyStates_2021,schnell_HighfrequencyExpansionsTimeperiodic_2021,mori_FloquetStatesOpen_2023}. 
Additionally, quasiperiodically driven systems subjected to external multi-frequency drives~\cite{ho_SemiclassicalManymodeFloquet_1983,casati_AndersonTransitionOneDimensional_1989,martin_TopologicalFrequencyConversion_2017,novicenko_FloquetAnalysisQuantum_2017,cubero_AsymptoticTheoryQuasiperiodically_2018,zhao_FloquetTimeSpirals_2019,nathan_QuasiperiodicFloquetThoulessEnergy_2021,malz_TopologicalTwodimensionalFloquet_2021,beatrez_ObservationCriticalPrethermal_2023,park_ControllableFloquetEdge_2023,roy_SingleMultifrequencyDriving_2024,he_ExperimentalRealizationDiscrete_2025}, Fibonacci drives~\cite{maity_FibonacciSteadyStates_2019,lapierre_FineStructureHeating_2020,pilatowsky-cameo_CompleteHilbertSpaceErgodicity_2023}, and Thue-Morse drives~\cite{mori_RigorousBoundsHeating_2021,zhao_RandomMultipolarDriving_2021,das_PeriodicallyQuasiperiodicallyDriven_2023,yan_PrethermalizationAperiodicallyKicked_2024,tiwari_PeriodicallyAperiodicallyThueMorse_2025,pilatowsky-cameo_CriticallySlowHilbertSpace_2025} have drawn recent theoretical interest. While time-quasiperiodic unitary dynamics have been intensively studied in recent years as they can exhibit exotic dynamics, the properties of time-quasiperiodic GKSL equations remain largely undeveloped, even though dissipation is unavoidable in real systems.

The uniqueness of time-dependent GKSL equations has been addressed in a few studies~\cite{menczel_LimitCyclesPeriodically_2019,meglio_AsymptoticRelaxationQuantum_2024}, which considered sufficient conditions for uniqueness based solely on the jump operators.
However, many fundamental issues remain unresolved.
First, it is crucial to understand how the presence of a Hamiltonian term affects uniqueness, as is well known in the time-independent case~\cite{frigerio_QuantumDynamicalSemigroups_1977,wolf_QuantumChannelsOperations_2012,yoshida_UniquenessSteadyStates_2024,zhang_CriteriaDaviesIrreducibility_2024,hamazaki_IntroductionMonitoredQuantum_2026,fagnola_IrreducibilityQuantumMarkov_2025}.
Moreover, it remains unclear whether necessary conditions for uniqueness exist.
Most importantly, a general formulation of strong symmetries in time-dependent systems, and their relation to the uniqueness and classification of steady states, have never been discussed.

In this work, addressing the above issues with rigorous analysis, we present a novel framework for classifying the steady states in finite-dimensional GKSL equations beyond time-independence. Our assumption is that the generators are recurrent in time and that the jump operators are Hermitian, which includes many important situations (e.g., time-periodic and quasiperiodic GKSL equations with dephasing and depolarizing noise). We first provide an algebraic criterion that provides a necessary and sufficient condition for the uniqueness, using both the Hamiltonian and jump operators. We demonstrate the practical advantage of our criterion through concrete examples, such as a dissipative quantum many-body spin chain. 
We also uncover that there are two distinct forms of time-dependent versions of strong symmetry, i.e., that in the Schr\"odinger picture and interaction picture. We show that the former and the latter symmetries are responsible for time-independent and time-dependent multiple steady states, respectively.
This fact leads to the novel classification of asymptotic behavior of the GKSL dynamics through the two types of strong symmetries.
Our classification encompasses established mechanisms underlying non-trivial oscillatory steady states, such as strong dynamical symmetry.
Furthermore, it predicts a new type of symmetry-predicted, time-dependent asymptotic dynamics through time-quasiperiodic driving, which cannot be understood from the previously known mechanisms.

\begin{figure*}
 \centering
 \includegraphics[width=\linewidth]{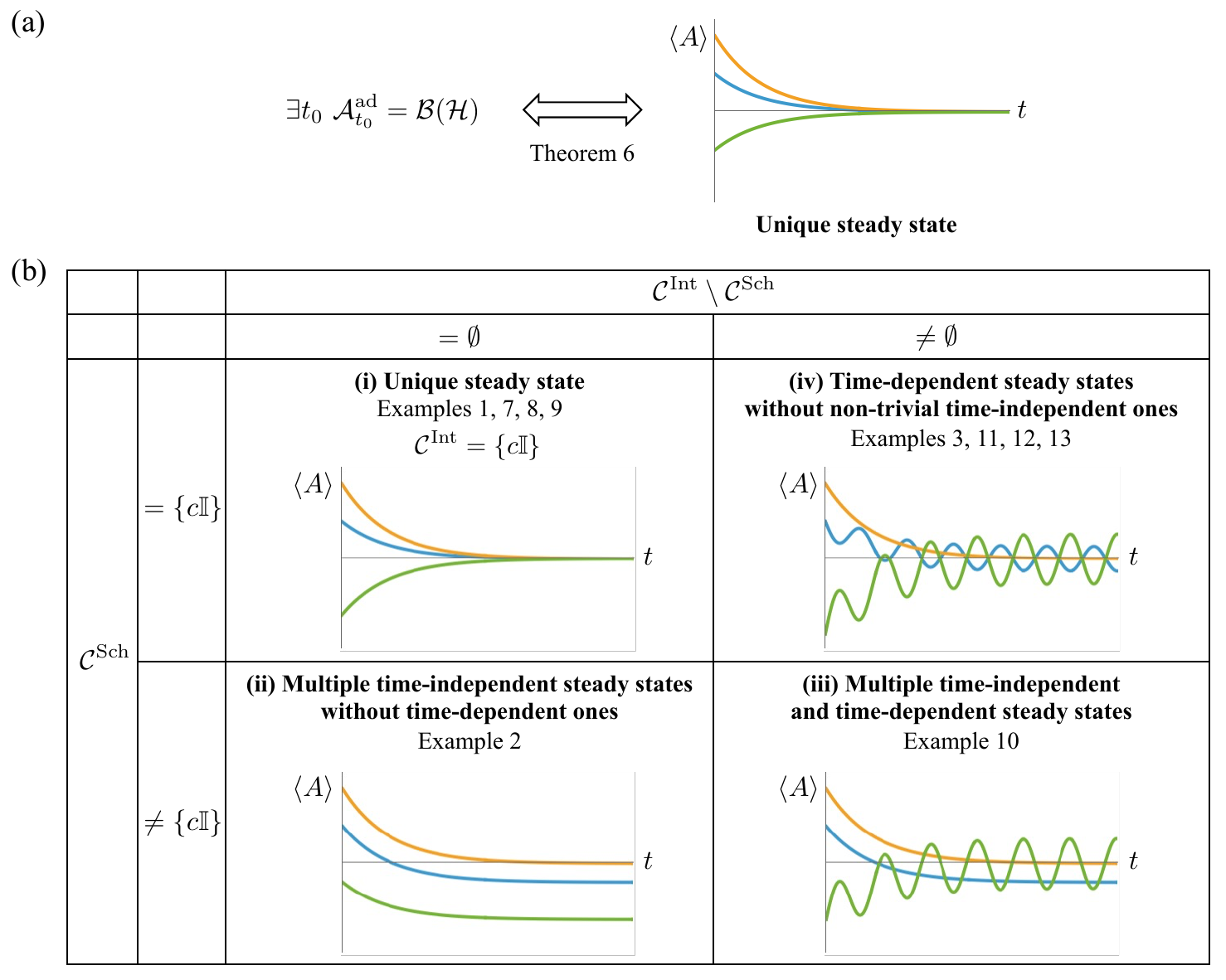}
  \caption{Schematic illustration of our main results. We present a novel framework to analyze the steady-state structure of open quantum systems governed by time-dependent Gorini-Kossakowski-Sudarshan-Lindblad (GKSL) equations with Hermitian jump operators with rigorous analysis. (a) If the algebra $\mathcal{A}^\mathrm{ad}_{t_0}$ defined in \eqref{eq:aad} coincides with the full operator algebra $\mathcal{B}(\mathcal{H})$ for some $t_0$, then every state relaxes to the completely mixed state $\mathbb{I}/d$ for time-recurrent GKSL generators   
  (Theorem~\ref{thm:main2}). Conversely, if $\mathcal{A}^\mathrm{ad}_{t_0}\neq \mathcal{B}(\mathcal{H})$ for all $t_0$, we prove that the steady state is not unique under the assumption that the GKSL generators are analytic at all times.
  (b) For time-dependent GKSL equations, we define strong symmetries characterized by two symmetry algebras: 
  $\cst$ and $\cint$ defined in \eqref{eq:def_cst} and \eqref{eq:def_cint}, respectively. Since $\cst \subseteq \cint$ holds, the following four possibilities arise: 
  $\cst = \{ c\mathbb{I} \mid c \in \mathbb{C} \}$ or $\cst \neq \{ c\mathbb{I} \mid c \in \mathbb{C} \}$, and 
  $\cint \setminus \cst = \emptyset$ or $\cint \setminus \cst \neq \emptyset$. The table summarizes the possible steady states for each case. For example, (iv) corresponds to the existence of time-dependent steady states without non-trivial time-independent ones and represents a class unique to time-dependent GKSL equations. Note that the completely mixed state $\mathbb{I}/d$ is always a trivial steady state. See Definition~\ref{def:steady_state} for the precise meaning of the classification and Sec.~\ref{sec:summary_classification} for an explanation.
  }
\label{fig:schematic}
\end{figure*}

\subsection{Summary of results}
\label{sec:summary}
Here, we summarize the main contributions of this study (see also Fig.~\ref{fig:schematic}).

\subsubsection{Algebraic necessary and sufficient condition for the uniqueness of steady states}
\label{sec:summary_algebraic}
We present a general rigorous criterion for the uniqueness of steady states in GKSL equations with time-dependent generators, under the assumption of Hermitian jump operators (Theorem~\ref{thm:main2} in Sec.~\ref{sec:useful_criterion}).  
The GKSL equation is given by
\begin{align}
  &\partial_t \rho_t
  =\mathcal{L}_t (\rho_t)\nonumber\\
  &=-i[H_t, \rho_t]+\sum_{m=1}^M \left(L_{m,t} \rho_t L_{m,t}-\frac{1}{2}\left\{(L_{m,t})^2, \rho_t\right\}\right),
\end{align}
where $t$ denotes time, $H_t$ is the Hamiltonian, and $L_{m,t}$ are the jump operators. The superoperator $\mathcal{L}_t$ is called the Liouvillian, which we mainly assume to be time-recurrent, including time-independent, periodic, quasiperiodic, and certain classes of random time dependences (see Condition~\ref{con:recurrent} for the definition and Examples~\ref{ex:multi_frequency_drives}~and~\ref{ex:recurrent_drives} for examples).

We first show the condition for the uniqueness by introducing
 the following algebra $\mathcal{A}^\mathrm{ad}_{t}$:
\begin{align}\label{eq:ad_intro}
\mathcal{A}_{t}^{\mathrm{ad}}=\langle\,\mathbb{I},\{\ad^n_{t}(L_{m,t})\}_{m=1,\ldots,M;n=0,1,\ldots}\rangle,
\end{align}
where $\ad_{t}$ is the adjoint operation defined as
\begin{equation}    
    \ad_{t}(A):=i[H_{t},A]+\partial_t A.
\end{equation}
If $\mathcal{A}^\mathrm{ad}_{t}$ coincides with the full operator algebra $\mathcal{B}(\mathcal{H})$ for some $t=t_0$, then every state relaxes to the completely mixed state $\mathbb{I}/d$.

Note that this is a non-trivial generalization of the time-independent case, where the uniqueness of steady states is characterized by~Eq.~\eqref{eq:algebra_indep}; indeed, Eq.~\eqref{eq:ad_intro} includes the time-derivative factor absent in \eqref{eq:algebra_indep}. When the GKSL generators are analytic at all times, the criterion provides a necessary and sufficient condition for the uniqueness of steady states (see Fig.~\ref{fig:schematic}(a)). 

We illustrate the criterion’s applicability through examples involving two-level systems and a quantum spin chain, for which the uniqueness of the steady states cannot be proven by the existing methods in Refs.~\cite{menczel_LimitCyclesPeriodically_2019,meglio_AsymptoticRelaxationQuantum_2024}, which do not use information from the Hamiltonian part of the dynamics.
    
We remark that, for the periodic GKSL equation, integrating over one period of it leads to a time-independent quantum channel, and the problem of the uniqueness of the steady state reduces to the problem of the uniqueness of the fixed point of the channel~\cite{wolf_QuantumChannelsOperations_2012,burgarth_ErgodicMixingQuantum_2013}. However, such an integration procedure is generally difficult to perform, and thus, methods for determining uniqueness directly from the generator of the GKSL equation are practically useful. Furthermore, the integration procedure is not applicable to the non-periodic GKSL equation, which highlights the fundamental advantage of our method.

\subsubsection{Generalization of the strong symmetry for time-dependent GKSL equations}
\label{sec:summary_generalization}
For time-independent GKSL equations, the strong symmetry is defined as a symmetry that commutes with both the Hamiltonian and all the jump operators that describe dissipation due to coupling to the environment. Mathematically, strong symmetry is characterized by the commutant algebra. Non-trivial strong symmetry gives rise to conserved quantities, which lead to degeneracy of the steady states. Conversely, for time-independent GKSL equations, when the jump operators are Hermitian, the steady state is unique in the absence of non-trivial strong symmetry (see Sec.~\ref{sec:review}).
    
In Sec.~\ref{sec:insufficiency}, we point out that it is non-trivial to generalize strong symmetry to time-dependent GKSL equations while retaining similar properties. To fill the gap, we identify distinct strong symmetries characterized by two commutant algebras, $\cst$ and $\cint$ defined as
\begin{align}
 \cst &=\{O\in\mathcal{B}(\mathcal{H})\mid [H_t,O]=[L_{m,t},O]=0\ \forall m,t\},\\
  \cint &=\{O\in\mathcal{B}(\mathcal{H})\mid [\tilde{L}_{m,t},O]=0\ \forall m,t\},
\end{align}
where
\begin{align}
    U_t=\mathcal{T}e^{-i\int_0^t H_{t^{\prime}}dt^\prime},\ \tilde{L}_{m,t}=U_t^\dagger L_{m,t} U_t.
\end{align}
We refer to $\cst$ as the strong symmetry in the Schr\"odinger picture, and $\cint$ as the strong symmetry in the interaction picture. They satisfy the inclusion relation $\cst\subseteq \cint$. As we see below, this framework is useful for classifying steady states when the GKSL generator is recurrent in time. 
For example, we can prove that there exist time-dependent steady states if and only if $\cint \setminus \cst \neq \emptyset$ (Theorem~\ref{thm:time_dependent} in Sec.~\ref{sec:existence}).

\subsubsection{Classification of the steady-state structure in terms of the strong symmetry}
\label{sec:summary_classification}
We rigorously elucidate that the steady-state structure of time-recurrent GKSL equations is characterized by two types of strong symmetries $\cst$ and $\cint$ (Theorems~\ref{thm:indep} and \ref{thm:time_dependent} in Sec.~\ref{sec:implication}). Since $\cst \subseteq \cint$ holds, the following possibilities arise: $\cst = \{ c\mathbb{I} \mid c \in \mathbb{C} \}$ or $\cst \neq \{ c\mathbb{I} \mid c \in \mathbb{C} \}$, and $\cint \setminus \cst = \emptyset$ or $\cint \setminus \cst \neq \emptyset$. Figure~\ref{fig:schematic}(b) summarizes the possible steady states for each case in the sense of Definition~\ref{def:steady_state}. We also provide concrete examples for each of the cases.
Note that the assumption of time-recurrence is not merely a technical one; in non-recurrent cases, there are counterexamples of our theorem (Example~\ref{ex:exp_decay_dephasing} in Sec.~\ref{sec:necessary_and_sufficient}). 

The top-left panel [class (i)] shows the case where $\cst = \{ c\mathbb{I} \mid c \in \mathbb{C} \}$ and $\cint \setminus \cst = \emptyset$, which is equivalent to $\cint = \{ c\mathbb{I} \mid c \in \mathbb{C} \}$. We prove in Theorem~\ref{thm:main1} that the steady state is unique if and only if this condition is satisfied, which is a non-trivial generalization of the time-independent case. 
    
The bottom-left panel [class (ii)] illustrates the case where $\cst \neq \{ c\mathbb{I} \mid c \in \mathbb{C} \}$ and $\cint \setminus \cst = \emptyset$. In this case, there are multiple time-independent steady states, whereas there are no time-dependent steady states.

The top-right panel [class (iv)] shows the case where $\cst = \{ c\mathbb{I} \mid c \in \mathbb{C} \}$ and $\cint \setminus \cst \neq \emptyset$. In this case, there must be a time-dependent steady state, but there must not be a time-independent steady state other than the maximally mixed state. Notably, this case is unique to time-dependent GKSL equations, because for time-independent generators, coherently oscillating modes are always accompanied by multiple time-independent steady states~\cite{wolf_QuantumChannelsOperations_2012}.
Furthermore, we also provide more non-trivial dynamics belonging to this class, analyzing quasiperiodically driven many-body systems.
    
Finally, the bottom-right panel [class (iii)] illustrates the case where $\cst \neq \{ c\mathbb{I} \mid c \in \mathbb{C} \}$ and $\cint \setminus \cst \neq \emptyset$. In this situation, there exist multiple steady states, both time-independent and time-dependent. Coherent oscillations within time-independent GKSL equations are classified here. It is known that strong dynamical symmetry gives rise to such oscillations~\cite{buca_NonstationaryCoherentQuantum_2019,chinzei_TimeCrystalsProtected_2020,buca_AlgebraicTheoryQuantum_2022}, and one can verify that this symmetry ensures $\cst \neq \{ c\mathbb{I} \mid c \in \mathbb{C} \}$ and $\cint \setminus \cst \neq \emptyset$ (see Sec.~\ref{sec:strong_dynamical_symmetry}). Therefore, strong dynamical symmetry can be reinterpreted within the framework of our classification.

The remainder of the paper is organized as follows.
We begin with defining how to classify the steady-state structure of time-dependent GKSL equations with Hermitian jump operators in Sec.~\ref{sec:classification}. 
After the brief review of the theory of the uniqueness in the case of time-independent GKSL equations in Sec.~\ref{sec:review}, we provide a criterion for the unique steady states for time-recurrent GKSL equations in Sec.~\ref{sec:criteria_for_unique}. First, we prove that the steady state is unique if and only if $\cint=\{c\mathbb{I}|c\in\mathbb{C}\}$ (Theorem~\ref{thm:main1}). Then, we rephrase the condition in the language of the GKSL generators $\{H_t,L_{m,t}\}$ in the Schr\"odinger picture (Theorem~\ref{thm:main2}), whose practical advantage is demonstrated with concrete examples. 
In Sec.~\ref{sec:implication}, we prove two theorems that justify the classification of the steady states given in Sec.~\ref{sec:classification} from the two distinct strong symmetries, with providing concrete applications. Specifically, Theorem~\ref{thm:indep} ensures the absence of multiple time-independent steady states under $\cst=\{c\mathbb{I}|c\in\mathbb{C}\}$ and Theorem~\ref{thm:time_dependent} states that there are time-dependent steady states if and only if $\cint\setminus\cst\neq \emptyset$. Finally, Sec.~\ref{sec:conclusion} presents a conclusion and outlook. All detailed mathematical proofs and calculations can be found in the Appendices.

\section{Classification of steady states for time-dependent Liouvillians with Hermitian jump operators}
\label{sec:classification}
Throughout the manuscript, we focus on open quantum systems in a finite-dimensional Hilbert space $\mathcal{H}$ with $\mathrm{dim}[\mathcal{H}]=d$, whose time evolution is governed by the GKSL equation in the Schr\"odinger picture of the form
\begin{align}
  &\partial_t \rho_t
  =\mathcal{L}_t (\rho_t)\nonumber\\
  &=-i[H_t, \rho_t]+\sum_{m=1}^M \left(L_{m,t} \rho_t L_{m,t}-\frac{1}{2}\left\{(L_{m,t})^2, \rho_t\right\}\right).
  \label{eq:GKSL}
\end{align}
Here, $t$ denotes time, $H_t$ is the Hamiltonian, and $L_{m,t}$ are the jump operators that act on $\mathcal{H}$. The superoperator $\mathcal{L}_t$ is called the Liouvillian. We assume that all $L_{m,t}$ are Hermitian.
Examples described by this equation include a quantum system driven by an external field in the presence of dephasing, as well as a time-dependent quantum system subjected to classical white noise~\cite{caldeira_PathIntegralApproach_1983,lashkari_FastScramblingConjecture_2013,xu_LocalityQuantumFluctuations_2019,ogunnaike_UnifyingEmergentHydrodynamics_2023,moudgalya_SymmetriesGroundStates_2024,chen_StrongtoweakSymmetryBreaking_2025,ziereis_StrongtoWeakSymmetryBreaking_2025}.  In particular, in the latter situation, one can naturally engineer time-dependent jump operators by coupling white noise to a time-dependent external field~\cite{chenu_QuantumSimulationGeneric_2017,schmolke_NoiseInducedQuantumSynchronization_2022}.

In this section, we propose a classification of long-time steady states.
For this purpose, we define steady states  $\rho^*_t$, which can be time-dependent, as those satisfying
\begin{equation}
    \lim_{t\to \infty} (\rho_t-\rho^*_t)=0
    \label{eq:def_steady_state}
\end{equation}
for some initial condition $\rho_0$. In what follows, the notation $\lim_{t\to \infty} \rho_t=\rho^*_t$ is used to mean \eqref{eq:def_steady_state}. Again, $\rho^*_t=\mathbb{I}/d$ is always a time-independent steady state by taking $\rho_0=\mathbb{I}/d$, due to the Hermiticity of $L_m$. Therefore, the asymptotic relaxation of the system can be classified into the following four scenarios (see Fig.~\ref{fig:schematic}):
\begin{dfn} We say that the GKSL equation has
\label{def:steady_state}
  \begin{enumerate} 
    \item[(i)] a unique steady state if
    \begin{equation}
        \lim_{t\to \infty} \rho_t=\mathbb{I}/d
    \end{equation}
    for any initial condition $\rho_0$.
    \item[(ii)] multiple time-independent steady states without time-dependent ones if every initial state relaxes to a time-independent state, and there exists $\rho^*\neq\mathbb{I}/d$ such that
    \begin{equation}
        \lim_{t\to \infty} \rho_t=\rho^*,
    \end{equation}
    for some initial condition.
    \item[(iii)] multiple time-independent and time-dependent steady states if there exist time-independent $\rho^*\neq\mathbb{I}/d$ such that
    \begin{equation}
        \lim_{t\to \infty} \rho_t=\rho^*,
    \end{equation}
    for some initial condition and time-dependent $\rho^*_t$ such that 
    \begin{equation}
        \lim_{t\to \infty} \rho_t=\rho^*_t,
    \end{equation}
    for some initial condition, where $\rho^*_t$ does not converge to any time-independent density matrix as $t\to \infty$.
    \item[(iv)] time-dependent steady states without non-trivial time-independent ones if the conditions in classes (i), (ii), or (iii) are not satisfied.  
\end{enumerate}  
\end{dfn}

In the following sections, we discuss rigorous criteria to determine which class a GKSL dynamics  belongs to from the information of $\mathcal{L}_t$.
In Secs.~\ref{sec:review} and \ref{sec:criteria_for_unique} we mainly focus on the uniqueness of steady states, i.e., the condition for class (i).
In Sec.~\ref{sec:implication}, we further provide criteria to ensure the other classes.

\section{Review: criteria on the uniqueness for time-independent generators}
\label{sec:review}
In this section, before discussing time-dependent dynamics, we provide a brief review of steady states of the GKSL equation with time-independent generators, where $H_t=H$ and $L_{m,t}=L_m$ are satisfied.
In this case, the degeneracy of the steady state $\rho^*$, which is a density matrix given by an eigenoperator of $\mathcal{L}$
with eigenvalue $0$, is determined by symmetry of systems. When $L_m$ is Hermitian, the following relation holds~\cite{wolf_QuantumChannelsOperations_2012}:
\begin{align}
\mathcal{L}(O)=0
&\iff [H,O]=[L_m,O]=0.
\label{eq:commutant_degen}
\end{align}
The operators $O$ that satisfy the right-hand side are called the strong symmetry~\cite{buca_NoteSymmetryReductions_2012,albert_SymmetriesConservedQuantities_2014}. 
If we write the identity operator as $\mathbb{I}$, it always commutes with $H$ and $L_m$ and is regarded as a (trivial) strong symmetry. Then, Eq.~\eqref{eq:commutant_degen} indicates $\mathcal{L}(\mathbb{I})=0$, i.e., the maximally mixed state $\mathbb{I}/d$ is always the steady state. 

Let us write the set of strong symmetries as
\begin{equation}
    \mathcal{C}=\{O\in\mathcal{B}(\mathcal{H})\mid [H,O]=[L_m,O]=0\ \forall m\}.
    \label{eq:commutant_0}
\end{equation}
Here, $\mathcal{B}(\mathcal{H})$ is the algebra generated by all the operators on $\mathcal{H}$. From Eq.~\eqref{eq:commutant_degen}, $\Ker{\mathcal{L}}=\mathcal{C}$, and thus the steady state can be obtained by taking a positive semidefinite element of $\mathcal{C}$.
In particular, the following relation can be used to prove the uniqueness of the steady states:
\begin{align}
    \Ker \mathcal{L}=\{c\mathbb{I}\mid c\in \mathbb{C}\} 
    &\iff \mathcal{C}=\{c\mathbb{I}\mid c\in \mathbb{C}\}.
    \label{eq:commutant_unique}
\end{align}
In other words, the uniqueness of the maximally mixed steady state (left-hand side) is equivalent to the fact that the strong symmetry is trivial, i.e., all elements of $\mathcal{C}$ are proportional to the identity operator (right-hand side).

Next, let us rephrase the uniqueness condition \eqref{eq:commutant_unique} in terms of the algebraic structure of the Liouvillian, which is often more practically useful to verify than the criterion by strong symmetry. Let $\mathcal{A}\:(\subset\mathcal{B}(\mathcal{H}))$ be a subalgebra generated by the addition, multiplication, and scalar multiplication of $\{\mathbb{I},H,\{L_m\}\}$, which we denote as 
\begin{equation}
    \mathcal{A}=\langle \mathbb{I},H,\{L_m\} \rangle,
    \label{eq:algebra_indep}
\end{equation}
where $\{L_m\}$ describes a set composed of $L_1,\cdots, L_M$.
Then, $\mathcal{C}$ can be seen as the commutant of $\mathcal{A}$, i.e., the set of all the operators that commute with all the elements of $\mathcal{A}$. Moreover, because $\mathcal{A}$ is closed under taking adjoints, von Neumann’s bicommutant theorem~\cite{moudgalya_HilbertSpaceFragmentation_2022,moudgalya_SymmetriesCommutantAlgebras_2023,landsman_LectureNotesCalgebras_1998} tells us that $\mathcal{A}$ is the commutant of $\mathcal{C}$. In particular, the following relation holds:
\begin{align}
    \mathcal{C}=\{c\mathbb{I}\mid c\in \mathbb{C}\}\iff \mathcal{A}=\mathcal{B}(\mathcal{H}),\label{eq:bicommutant}
\end{align}
which can be demonstrated as follows. If $\mathcal{A} = \mathcal{B}(\mathcal{H})$, then any element of $\mathcal{C}$ must commute with all operators on $\mathcal{H}$. This implies that $\mathcal{C} = \{c\mathbb{I} \mid c \in \mathbb{C}\}$, since the only operators that commute with all operators on a Hilbert space are scalar multiples of the identity. Conversely, if  
$\mathcal{C} = \{c\mathbb{I} \mid c \in \mathbb{C}\}$, then $\mathcal{A}$ must be the set of all operators that commute with $\mathbb{I}$, which is simply $\mathcal{B}(\mathcal{H})$. 

By summarizing \eqref{eq:commutant_unique} and \eqref{eq:bicommutant},  we have
\begin{align}\label{equiv_timeindep}
    \Ker \mathcal{L}=\{c\mathbb{I}\mid c\in \mathbb{C}\} 
    &\iff \mathcal{C}
    =\{c\mathbb{I}\mid c\in \mathbb{C}\}\\
    &\iff \mathcal{A}
    =\mathcal{B}(\mathcal{H}).
\end{align}

\begin{ex}
 For example, consider a time-independent GKSL equation on $\mathcal{H}=\mathbb{C}^2$ given by:
\begin{equation}
    H=\frac{\omega}{2}\sigma^x,\,L=\sqrt{\kappa}\sigma^z ,
\end{equation}
where $\omega,\kappa>0$ are constants and $\sigma^\mu$ $(\mu=x,y,z)$ are Pauli operators.
Then, $\mathbb{I}$, $\sigma^x$ and $\sigma^z$ generate all the operators in $\mathcal{B}(\mathcal{H})$ (for example, $\sigma^y=i\sigma^x\sigma^z$). Thus $\mathcal{A}=\mathcal{B}(\mathcal{H})$ and  $\mathcal{C}= \{c\mathbb{I}\mid c\in \mathbb{C}\}$. Therefore, the steady state is unique and given by $\mathbb{I}/2$.   
\end{ex}

\begin{ex}
Next, we consider a GKSL equation given by:
\begin{equation}
    H=\frac{\omega}{2}\sigma^z,\,L=\sqrt{\kappa}\sigma^z ,
\end{equation}
where $\omega,\kappa>0$ are constants and $\sigma^\mu$ $(\mu=x,y,z)$ are Pauli operators.
Then $\sigma^z$ commutes with both $H$ and $L$, and we have $\mathcal{C}=\{\alpha\mathbb{I}+\beta\sigma^z\mid\alpha,\beta\in\mathbb{C}\}$. Thus $\Ker \mathcal{L}=\{\alpha\mathbb{I}+\beta\sigma^z\mid\alpha,\beta\in\mathbb{C}\}$. Since the density matrix is trace-one and positive semidefinite, steady states can be written as $\rho^*=(\mathbb{I}+a\sigma^z)/2$, where $-1\leq a\leq 1$ is a constant determined by the initial state $\rho_0$.
\end{ex}

While the above examples respectively correspond to class (i) and class (ii) in Sec.~\ref{sec:classification}, we here mention the other possibilities as well.
First, if $\mathcal{L}$ has a non-zero purely imaginary eigenvalue, it is called oscillating coherence~\cite{albert_SymmetriesConservedQuantities_2014}. In such a case, $\rho_t$ might not converge to a time-independent density matrix as $t\to\infty$. In our definition, such a case is classified in (iii) in Definition~\ref{def:steady_state}, and we say that $\rho_t$ converges to a time-dependent steady state $\rho^*_t$. Next, class (iv) in Definition~\ref{def:steady_state} is excluded for the time-independent GKSL equation, because if a time-independent $\rho^*$ that satisfies $\mathcal{L}(\rho^*)=0$ is unique, then there is no oscillating coherence~\cite{wolf_QuantumChannelsOperations_2012}.

As we have seen, in the time‑independent case, the steady state is characterized by the spectrum of the Liouvillian. However, when the Liouvillian is time‑dependent, analyzing the spectrum at each instant is not sufficient; as we will discuss below, one must carefully analyze the convergence to steady states.

\section{Criteria for unique steady states for time-dependent Liouvillians}
\label{sec:criteria_for_unique}
In this section, we rigorously identify the criteria of uniqueness of steady states, i.e., class (i) in our classification.
For this purpose, we first discuss that the naive extension of the strong symmetry for time-independent cases, which we call the strong symmetry in the Schr\"odinger picture, fails to characterize the uniqueness. 
Then, by introducing the strong symmetry in the interaction picture, we prove that the trivial strong symmetry provides a necessary and sufficient condition for the uniqueness by focusing on time-recurrent GKSL equations.
Moreover, we present a criterion for the uniqueness with respect to the algebraic structure of the GKSL generators, which is practically easier to verify than the criterion by the strong symmetry.
Finally, we demonstrate the usefulness of our results by concrete examples.

\subsection{Insufficiency of naive generalization for time-dependent generators}
\label{sec:insufficiency}

To judge whether a given GKSL equation has a unique steady state, one naive approach is to consider the algebra defined through the jump operators and Hamiltonian at an instantaneous time,
\begin{equation}
    \aast_t=\langle\mathbb{I}, H_t,\{L_{m,t}\} \rangle,   \label{eq:A_st}
\end{equation}
and its commutant 
\begin{equation}
    \cst_t=\{O\in\mathcal{B}(\mathcal{H})\mid[H_t,O]=[L_{m,t},O]=0\ \forall m\},
    \label{eq:commutant_st}
\end{equation}
in analogy with Eqs.~\eqref{eq:algebra_indep} and~\eqref{eq:commutant_0} in the time-independent case.
Then, as in the discussion around Eq.~\eqref{equiv_timeindep}, it holds that 
\begin{align}
    \Ker \mathcal{L}_t=\{c\mathbb{I}\mid c\in \mathbb{C}\} 
    &\iff \cst_t=\{c\mathbb{I}\mid c\in \mathbb{C}\}\\
    &\iff \aast_t=\mathcal{B}(\mathcal{H}).\label{eq:bicommutant_2}
\end{align}

Given this fact, it would be natural to define \emph{the strong symmetry in the Schr\"odinger picture} as follows:

\begin{align}
    \cst
    &=\bigcap_{t=0}^\infty\cst_t\\
    &=\{O\in\mathcal{B}(\mathcal{H})\mid [H_t,O]=[L_{m,t},O]=0\ \forall m,t\}.
    \label{eq:def_cst}
\end{align}
If $\cst\neq \{c\mathbb{I}\mid c\in \mathbb{C}\}$, the steady state is not unique. This is because we can construct a \textit{time-independent} steady state satisfying $\mathcal{L}_t({\rho}^*)=0$ for all $t$ with $\rho^*=\mathbb{I}/d+\epsilon O\neq \mathbb{I}/d$ using the non-trivial traceless operator $O$ common to all $t$ in Eq.~\eqref{eq:commutant_st}.
Here, $\epsilon$ is a sufficiently small variable such that $\rho^*$ becomes positive semidefinite.
This raises a question whether the inverse is true, i.e., if $\cst= \{c\mathbb{I}\mid c\in \mathbb{C}\}$ ensures the unique steady state.

Remarkably, even when $\cst_t= \{c\mathbb{I}\mid c\in \mathbb{C}\}$  (equivalent to $\aast_t=\mathcal{B}(\mathcal{H})$) for all $t$, which is a stronger condition than $\cst= \{c\mathbb{I}\mid c\in \mathbb{C}\}$, we find that the steady states might not be unique.
This is due to the possibility of \textit{time-dependent} ones with $\lim_{t\to \infty}\mathcal{L}_t({\rho}_t^*)\neq0$; namely, we cannot rule out the possibility of class (iv) in our classification.

To see this, we consider the following example:

\begin{ex}
\label{ex:rotating_dephasing}
Consider the GKSL equation on $\mathcal{H}=\mathbb{C}^2$ defined by:\
\begin{equation}
    H_t=\frac{\omega}{2}\sigma^z,\,L_t=\sqrt{\kappa}[\sigma^x \cos(\omega t)+\sigma^y \sin(\omega t)],
    \label{eq:example_1}
\end{equation}
where $\omega,\kappa>0$ are constants and $\sigma^\mu$ $(\mu=x,y,z)$ are Pauli operators.
In this case, $\aast_t=\mathcal{B}(\mathcal{H})$ and $\cst_t= \{c\mathbb{I}\mid c\in \mathbb{C}\}$ for all $t$ (and thus $\cst= \{c\mathbb{I}\mid c\in \mathbb{C}\}$), yet the steady states are not unique and given by:
\begin{equation}
    \rho^*_t= \frac{1}{2}\begin{pmatrix}
1 & ae^{-i\omega t} \\
ae^{i\omega t}  & 1 \\
\end{pmatrix} \label{eq:solution}
\end{equation}
where $-1 \leq a \leq 1$ is a constant determined by the initial state $\rho_0$.  
\end{ex}
This clearly demonstrates that the use of the algebras in \eqref{eq:A_st}
 and \eqref{eq:commutant_st}, which seem naive extensions of the time-independent case, is actually not satisfactory in discussing the unique steady state in class (i) for time-dependent cases.

\subsection{Interaction picture and strong symmetry}
\label{sec:interaction_picture}

To see why the above type of degeneracy appears, we consider the rotating frame
\begin{align}
    U_t=\mathcal{T}e^{-i\int_0^t H_{t^{\prime}}dt^\prime},\,\tilde{\rho}_t=U_t^\dagger \rho_t U_t,\ \tilde{L}_{m,t}=U_t^\dagger L_{m,t} U_t,
    \label{eq:transform_int}
\end{align}
Then, the GKSL equation \eqref{eq:GKSL} reads
\begin{align}
  \frac{d \tilde{\rho}_t}{d t}
  =\mathcal{\tilde{L}}_t (\tilde{\rho}_t)
  =\sum_{m}
  \left(\tilde{L}_{m,t} \tilde{\rho}_t
  \tilde{L}_{m,t}-\frac{1}{2}\left\{(\tilde{L}_{m,t})^{2}, \tilde{\rho}_t\right\}\right),
  \label{eq:GKSL_int}
\end{align}
which is called the GKSL equation in the interaction picture. In this representation, the model~\eqref{eq:example_1} can be described by the time-independent generator
$\tilde{L}_t=\sqrt{\kappa}\sigma^x,$
which commutes with $\sigma^x$ itself.
Then, $\tilde{{\rho}}_t^*=\mathbb{I}/2+a\sigma^x/2$ is a fixed point of $\mathcal{\tilde{L}}_t$ for all $-1\leq a\leq 1$.
This leads to the time-dependent steady state $\rho_t^*=U_t\tilde{\rho}_tU_t^\dag$ described in \eqref{eq:solution}. 

Motivated by this observation, we newly introduce a strong symmetry distinct from $\cst$, which we call
\emph{the strong symmetry in the interaction picture}, characterized by
\begin{equation}
    \cint=\{O\in\mathcal{B}(\mathcal{H})\mid [\tilde{L}_{m,t},O]=0\ \forall m,t\}.
    \label{eq:def_cint}
\end{equation}
One immediate consequence of the non-trivial strong symmetry in the interaction picture, $\cint\neq \{c\mathbb{I}\mid c\in \mathbb{C}\}$, is that it
gives a sufficient condition for multiple steady states (see proof of Theorem~\ref{thm:main1} (\emph{2}. $\to$ \emph{1}.) in Sec.~\ref{sec:necessary_and_sufficient}).
In other words, $\cint= \{c\mathbb{I}\mid c\in \mathbb{C}\}$ is a necessary condition for the unique steady state. 

The strong symmetry in the interaction picture $\cint$ is weaker than that in the Schr\"odinger picture $\cst$, i.e., there is an inclusion relation $\cst\subseteq\cint$. As we see in Sec.~\ref{sec:existence}, if and only if $\cint\setminus\cst\neq \emptyset$, there exists a time-dependent steady state. Example~\ref{ex:rotating_dephasing} satisfies this condition, because $\cst= \{c\mathbb{I}\mid c\in \mathbb{C}\}$ and $\cint= \{\alpha\mathbb{I}+\beta\sigma^x\mid \alpha,\beta\in \mathbb{C}\}$. As we will rigorously justify later, this corresponds to class (iv) in our classification.

Interestingly, even for time-independent Liouvillians, these two symmetries can be different. We will see that this is also related to coherent and oscillatory evolution protected by dynamical strong symmetry~\cite{buca_NonstationaryCoherentQuantum_2019}, leading to class (iii) in our classification.

\medskip

\subsection{Necessary and sufficient condition for unique steady states for time-recurrent Liouvillians}
\label{sec:necessary_and_sufficient}
While we have seen that $\cint= \{c\mathbb{I}\mid c\in \mathbb{C}\}$ gives a necessary condition for the unique steady state, it is highly non-trivial that it gives a sufficient condition.
For general time-dependent Liouvillians, this is indeed not true. To see this, let us consider the following time-dependent GKSL equation:

\begin{ex}[Exponentially decaying dephasing]
\label{ex:exp_decay_dephasing}
Consider the time-dependent GKSL equation with
\begin{equation}
    H=0, L_{1,t}=\sigma^xe^{-t}, L_{2,t}=\sigma^ye^{-t}.
\end{equation}
Then, $\cint= \{c\mathbb{I}\mid c\in \mathbb{C}\}$, but the steady state is not unique. To see this, we write the initial state as 
\begin{equation}
   \rho_0=\mathbb{I}/2+c_x\sigma^x+c_y\sigma^y+c_z\sigma^z,
\end{equation}
where $c_x,c_y,c_z$ are real constants. By setting
\begin{equation}
    \langle \sigma^\mu\rangle_t=\Tr[\sigma^\mu \rho_t]\ (\mu=x,y,z),
\end{equation}
Then, they obey
\begin{align}
    \partial_t\langle \sigma^\nu\rangle_t&=-2e^{-2t}\langle \sigma^\nu\rangle_t\ (\nu=x,y),\\
    \partial_t\langle \sigma^z\rangle_t&=-4e^{-2t}\langle \sigma^z\rangle_t.
\end{align}
By solving this equation, we have
\begin{align}
    \langle \sigma^\nu\rangle_t&=c_\nu e^{e^{-2t}-1}\ (\nu=x,y),\\
    \langle \sigma^z\rangle_t&=c_ze^{2(e^{-2t}-1)}.
\end{align}
Therefore,
\begin{equation}
    \lim_{t\to \infty}\rho_t=\mathbb{I}/2+e^{-1}c_x\sigma^x+e^{-1}c_y\sigma^y+e^{-2}c_z\sigma^z,
\end{equation}
which depends on the initial state. 
\end{ex}

This example illustrates that $\cint=\{c\mathbb{I}\mid c\in \mathbb{C}\}$ does not guarantee the uniqueness of the steady state for general time-dependent Liouvillians. Still, by imposing an additional condition, it can be proved that $\cint=\{c\mathbb{I}\mid c\in \mathbb{C}\}$ gives a necessary and sufficient condition for the unique steady state, i.e., class (i) in our classification.

\begin{thm}
\label{thm:main1}
Consider the time-dependent GKSL equation~\eqref{eq:GKSL} that satisfies {Condition~\ref{con:recurrent}}. Then, the following statements are equivalent:
\begin{enumerate}
    \item $\cint=\{c\mathbb{I}\mid c\in \mathbb{C}\}$.
    \item For any initial state $\rho_0$, the solution $\rho_t$ of the GKSL equation converges to the maximally mixed state:
    \begin{equation}
     \lim_{t\to \infty}\rho_t=\mathbb{I}/d.
    \end{equation}
    as the unique steady state. 
\end{enumerate}
\end{thm}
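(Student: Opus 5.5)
The plan is to work entirely in the interaction picture \eqref{eq:transform_int}. Since $\rho_t-\mathbb{I}/d=U_t(\tilde\rho_t-\mathbb{I}/d)U_t^\dagger$ and conjugation by $U_t$ preserves the Hilbert--Schmidt norm, statement 2 is equivalent to $\tilde\rho_t\to\mathbb{I}/d$. The single structural fact used throughout is that, for Hermitian jump operators, \eqref{eq:GKSL_int} can be written as
\begin{equation*}
\tilde{\mathcal{L}}_t(X)=-\frac12\sum_m[\tilde L_{m,t},[\tilde L_{m,t},X]].
\end{equation*}
Hence $\tilde{\mathcal L}_t$ is self-adjoint and negative semidefinite with respect to the Hilbert--Schmidt inner product, and its kernel is the commutant of $\{\tilde L_{m,t}\}_m$.

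\emph{2.\ $\to$ 1.} I argue by contraposition. Suppose $\cint$ is nontrivial. Because it is a $*$-algebra (the $\tilde L_{m,t}$ are Hermitian), it contains a nonzero traceless Hermitian $O$. The dual generator equals $\tilde{\mathcal L}_t$ itself and annihilates $O$, so $\Tr[O\tilde\rho_t]$ is conserved. Taking $\tilde\rho_0=\mathbb{I}/d+\epsilon O$ with small $\epsilon>0$ gives $\Tr[O\tilde\rho_t]=\epsilon\Tr[O^2]\neq 0$ for all $t$. Therefore $\tilde\rho_t\not\to\mathbb{I}/d$, and statement 2 fails.

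\emph{1.\ $\to$ 2.} Set $X_t=\tilde\rho_t-\mathbb{I}/d$, which is traceless and evolves by the same linear equation. The Lyapunov identity is
\begin{equation*}
\frac{d}{dt}\|X_t\|_2^2=-\sum_m\|[\tilde L_{m,t},X_t]\|_2^2\le 0,
\end{equation*}
so $\|X_t\|_2$ is nonincreasing. It therefore suffices to exhibit infinitely many disjoint time windows, on each of which $\|X\|_2^2$ contracts by a uniform factor $1-\delta'<1$. The argument has three steps.
\begin{enumerate}
\item The commutants $\mathcal C_T=\{O\mid[\tilde L_{m,t},O]=0\ \forall m,\ t\in[0,T]\}$ form a decreasing family of finite-dimensional subspaces whose intersection is $\cint=\mathbb{C}\mathbb{I}$. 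Hence some finite $T$ already has $\mathcal C_T=\mathbb{C}\mathbb{I}$.
\item Let $\Phi_{t}$ denote the propagator on $[0,t]$, and define $f(X)=\int_0^T\sum_m\|[\tilde L_{m,t},\Phi_tX]\|_2^2\,dt$ on traceless $X$. If $f(X)=0$, then $\partial_t\Phi_tX=0$ on $[0,T]$, so $\Phi_tX=X$ and $X\in\mathcal C_T$, which forces $X=0$. By continuity and compactness of the unit sphere, $f(X)\ge\delta\|X\|_2^2$ for some $\delta>0$. The Lyapunov identity then gives $\|\Phi_TX\|_2^2\le(1-\delta)\|X\|_2^2$.
\item Use Condition~\ref{con:recurrent} to find windows $[s_k,s_k+T]$ with $s_k\to\infty$ on which the generator is uniformly close to its values on $[0,T]$. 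Continuity of the propagator with respect to the generator (a Gr\"onwall estimate) then gives a contraction factor $1-\delta/2$ on each such window. Combined with monotonicity between windows, this yields $\|X_t\|_2\to0$.
\end{enumerate}

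The main obstacle is step 3. Recurrence is naturally stated for the Schr\"odinger-picture data $(H_t,L_{m,t})$, but the contraction estimate needs closeness of $\tilde L_{m,t}=U_t^\dagger L_{m,t}U_t$ on the shifted window. The factor $U_{s_k}$ need not be close to $\mathbb{I}$, so this closeness does not follow directly. To handle this, I would note that on $[s_k,s_k+T]$ the interaction-picture generator is $U_{s_k}^\dagger\hat{\mathcal L}U_{s_k}$, where $\hat{\mathcal L}$ is the generator built from the shifted data with the unitary restarted at $s_k$. Conjugation by the fixed unitary $U_{s_k}$ preserves the Hilbert--Schmidt norm. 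Moreover, the bound $f\ge\delta\|X\|^2$ for the restarted dynamics transfers, because $U_{s_k}X U_{s_k}^\dagger$ ranges over the same traceless unit sphere as $X$. Thus only closeness of the restarted generators to those on $[0,T]$ is required, and that is exactly what recurrence provides.
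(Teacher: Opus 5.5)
Your proposal is correct and follows essentially the same route as the paper: your steps 1--2 correspond to Lemma~\ref{lem:lambda0_bound}, and step 3 corresponds to Lemma~\ref{lem:lambdan_bound} followed by the same product bound. Your stabilization of the decreasing commutants $\mathcal{C}_T$ is in fact a cleaner way to produce the contracting window than the paper's contrapositive argument.

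The only real difference is how the step-3 obstacle is handled. The paper avoids it by applying Gr\"onwall directly to the Schr\"odinger-picture propagators $\mathcal{V}_{t_n,t_n+\delta}$; their norm $\opn\cdot\opn_2^\prime$ equals that of the interaction-picture propagators, since the two differ only by unitary conjugations at the endpoints. Your restarted-unitary detour also works, but it needs one unstated observation: closeness of $\mathcal{L}_t$ implies closeness of its anti-self-adjoint part $-i[H_t,\cdot]$ and of its self-adjoint dissipator separately.
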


\begin{con}
\label{con:recurrent}
The Liouvillian $\mathcal{L}_t$ is continuous, bounded, i.e., there exists a constant $M>0$ such that $\opn \mathcal{L}_t\opn_2\leq M$ for all $t$, and recurrent in time in the sense that for all $\varepsilon>0$, $\delta>0$ and $t_0\geq0$, there exists an infinite sequence of mutually disjoint time intervals $[t_n,t_n+\delta]$ $(n=0,1,\ldots)$ such that
\begin{equation}
    \opn\mathcal{L}_{t_n+t} - \mathcal{L}_{t_0+t} \opn_2<\varepsilon \quad \forall t\in [0,\delta].
    \label{eq:quasiperiodic}
\end{equation}
Here, $\opn\cdot\opn_2$ is the spectral norm of superoperators
\begin{equation}
    \opn\Phi\opn_2=\max_{\|A\|_2=1} \|\Phi (A)\|_2,
    \label{eq:spectral_norm}
\end{equation}
induced by the Hilbert-Schmidt norm of operators
\begin{equation}
\|A\|_2=\sqrt{\Tr[A^\dagger A]}. 
\label{eq:hilbert_schmidt_norm}
\end{equation}
\end{con}

See Fig.~\ref{fig:condition3} for schematic illustration of Condition~\ref{con:recurrent}. The first half of Condition~\ref{con:recurrent} states that the Hamiltonian and jump operators do not diverge in time, and this is a physically natural assumption. This latter half is automatically satisfied when the Liouvillian is time-independent or time-periodic. Furthermore, it includes time-quasiperiodic systems subjected to external multi-frequency drives~\cite{ho_SemiclassicalManymodeFloquet_1983,casati_AndersonTransitionOneDimensional_1989,martin_TopologicalFrequencyConversion_2017,malz_TopologicalTwodimensionalFloquet_2021,beatrez_ObservationCriticalPrethermal_2023,park_ControllableFloquetEdge_2023,roy_SingleMultifrequencyDriving_2024,he_ExperimentalRealizationDiscrete_2025}, Fibonacci drives~\cite{maity_FibonacciSteadyStates_2019,lapierre_FineStructureHeating_2020,pilatowsky-cameo_CompleteHilbertSpaceErgodicity_2023}, Thue-Morse drives~\cite{mori_RigorousBoundsHeating_2021,zhao_RandomMultipolarDriving_2021,das_PeriodicallyQuasiperiodicallyDriven_2023,yan_PrethermalizationAperiodicallyKicked_2024,tiwari_PeriodicallyAperiodicallyThueMorse_2025,pilatowsky-cameo_CriticallySlowHilbertSpace_2025}, and even a certain class of random drive, as explained below.

Note that, to the best of our knowledge, these examples present the first formulation of a time-quasiperiodically driven GKSL equation.

\begin{figure}
 \centering
 \includegraphics[width=\linewidth]{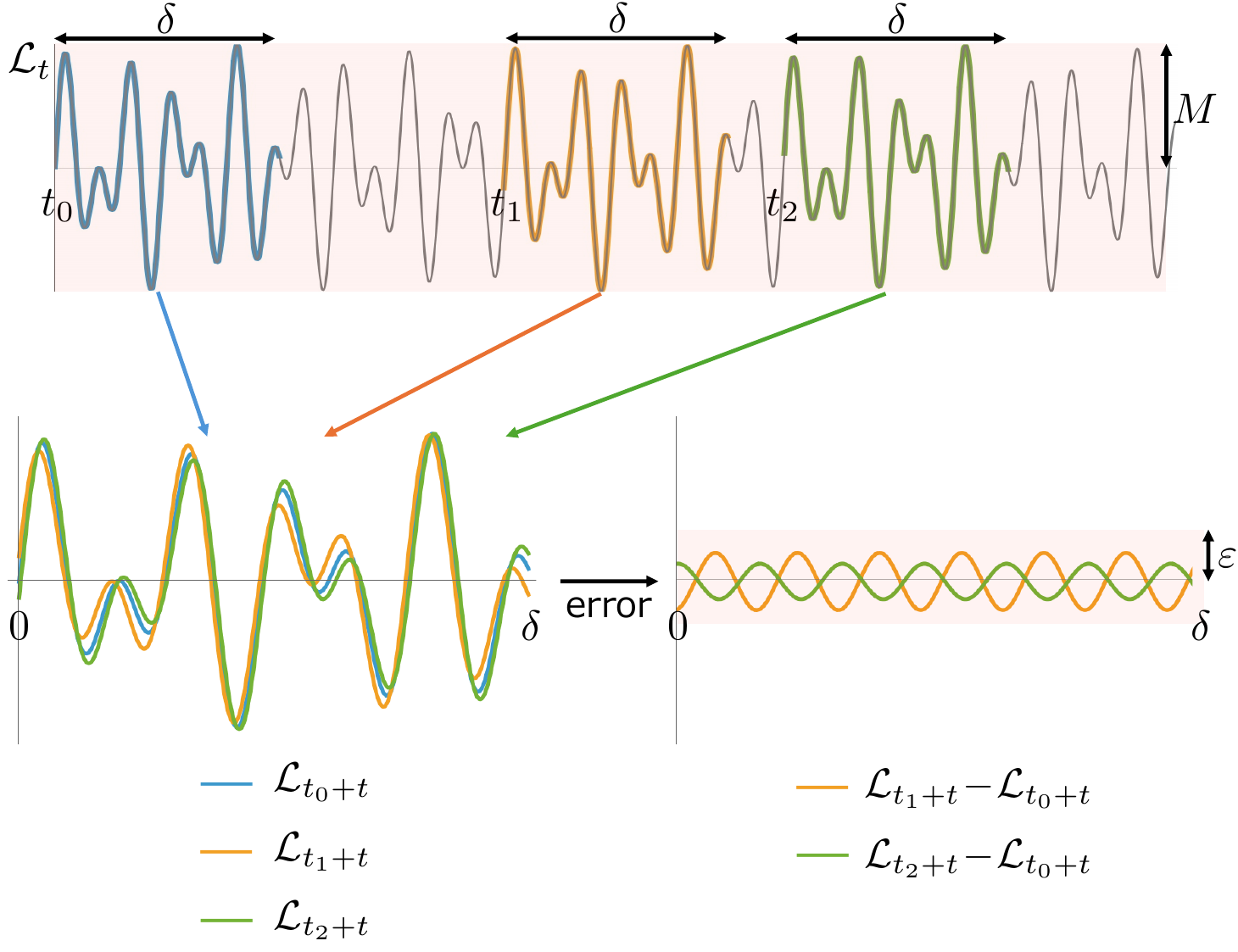}
  \caption{Schematics of $\mathcal{L}_t$ that satisfies Condition~\ref{con:recurrent}. Although $\mathcal{L}_t$ is actually a superoperator, it is depicted here as if it were a real-valued function. (top) The spectral norm of $\mathcal{L}_t$ is bounded by $M$ for any $t$. Moreover, for any $\varepsilon>0$, $\delta>0$, and $t_0\geq0$, one can take an infinite sequence of mutually disjoint time intervals $[t_n,t_n+\delta]$ $(n=0,1,\ldots)$. (bottom left) Plots of $\mathcal{L}_{t_n+t}$ $(0\leq t\leq \delta)$. (bottom right) Plot of the errors $\mathcal{L}_{t_n+t} - \mathcal{L}_{t_0+t}$. For any $n$ and $0 \leq t \leq \delta$, the spectral norms of them are bounded by $\varepsilon$.
}
\label{fig:condition3}
\end{figure}

\medskip

\begin{ex}[Multi-frequency drives]
\label{ex:multi_frequency_drives}
An example that satisfies {Condition~\ref{con:recurrent}} is a system subjected to an external multifrequency drive, where the Hamiltonian and jump operators read
\begin{align}
    H_t&=\sum_{j=1}^{l_0}\cos(\omega_{0,j} t+\phi_{0,j})H^{(j)},\\ L_{m,t}&=\sum_{j=1}^{l_m}\cos(\omega_{m,j} t+\phi_{m,j})L_{m}^{(j)}.
\end{align}
Here, $\omega_{m,j},{\phi_{m,j}}\in \mathbb{R}$ and $l_m\in \mathbb{N}$ for all $m=0,1,\ldots$. In Appendix~\ref{sec:multi_freq_proof}, we prove that this example satisfies {Condition~\ref{con:recurrent}}.
\end{ex}

\medskip

\begin{ex}[Fibonacci, Thue–Morse,
and random drives]
\label{ex:recurrent_drives}
Another example that satisfies {Condition~\ref{con:recurrent}} is a system subjected to drives generated by recurrent words, e.g., Fibonacci, Thue–Morse, and random words. First, we consider the Liouvillian given by
\begin{equation}
    \mathcal{L}^{\text{pre}}_t=
    \begin{cases}
    \mathcal{L}^{(0)} & \text{if } W_{\lfloor {t/T} \rfloor}=0, \\
    \mathcal{L}^{(1)} & \text{if } W_{\lfloor {t/T} \rfloor}=1. 
    \end{cases}
\end{equation}
where $\mathcal{L}^{(i)}$ $(i=0,1)$ are time-independent Liouvillians, ${\lfloor \cdot \rfloor}$ is a floor function, and $\{W_i\}_{i\in \{0,1,2,\ldots\}}$ is a recurrent infinite binary word where every subword appears infinitely many times. For example, the Fibonacci word is constructed recursively via string concatenations of the previous two words:
$
F^{(0)}=0,\, F^{(1)}=01,\, F^{(n)}=F^{(n-1)}+F^{(n-2)}\,(n\geq2),\, 
F =\lim_{n\to \infty}F^{(n)}=0100101001001\cdots.
$ 
As another example, the Thue-Morse word is an infinite word defined by iterating a substitution.
Let $\mu$ be the substitution rule given by
$\mu(0)=01$, $\mu(1)=10$. Starting from $T^{(0)}=0$, define inductively
$T^{(n)}=\mu\bigl(T^{(n-1)}\bigr)$ $(n\ge 1)$, $T=\lim_{n\to\infty} T^{(n)} = 011010011001\cdots$.
Finally, a random infinite binary word generated by independent coin tosses is almost surely recurrent. This follows from standard results in probability theory often referred to as the infinite monkey theorem: a random infinite sequence over a finite alphabet almost surely contains every finite string infinitely often~\cite{borel_MecaniqueStatiqueLirreversibilite_1913}.

By construction, the Liouvillian $\mathcal{L}^{\mathrm{pre}}_t$ is bounded and recurrent. However, $\mathcal{L}^{\mathrm{pre}}_t$ is not continuous in $t$. To obtain a continuous Liouvillian, we introduce a coarse‑grained version. For any arbitrarily small $a>0$, the coarse-grained Liouvillian defined by
\begin{equation}
    \mathcal{L}_t := \frac{1}{a}\int_{t}^{t+a} \mathcal{L}^{\mathrm{pre}}_\tau \, d\tau
    \label{eq:coarse‑graining}
\end{equation}
is continuous in $t$ and satisfies Condition~\ref{con:recurrent}.
\end{ex}

\begin{proof}[Proof of Theorem~\ref{thm:main1}]
(\emph{2}. $\to$ \emph{1}.)  
If $\cint\neq \{c\mathbb{I}\mid c\in \mathbb{C}\}$, there exists a nonzero, Hermitian, traceless operator $\tilde{\pi}\in\cint$. 
Since $\tilde{\pi}$ satisfies $\tilde{\mathcal{L}}_t(\tilde{\pi})=0$ for all $t$, $\tilde{\pi}$ is a time-independent solution of the differential equation~\eqref{eq:GKSL_int}. Thus, by choosing the initial condition $\tilde{\rho}_0=\mathbb{I}/d+\epsilon\tilde{\pi}$ for a sufficiently small real constant $\epsilon$, we ensure that $\tilde{\rho}_0$ is a density matrix. Then $\lim_{t\to \infty}\tilde{\rho}_t=\mathbb{I}/d+\epsilon\tilde{\pi}$, and consequently $\lim_{t\to \infty}\rho_t=\mathbb{I}/d+\epsilon U_t \tilde{\pi} U_t^\dagger \neq \mathbb{I}/d$.

(\emph{1}. $\to$ \emph{2}.)  
    Assuming that $\cint=\{c\mathbb{I}\mid c\in\mathbb{C}\}$, we prove that $\lim_{t\to\infty}\rho_t=\mathbb{I}/d$ for any $\rho_0$. By writing $\rho_t=\mathbb{I}/d+\pi_t$, where $\pi_t$ is a traceless and Hermitian operator, it suffices to prove that $\lim_{t\to\infty}\pi_t=0$ for any initial $\pi_0$.

    To prove it, we use the Hilbert-Schmidt norm defined as~\eqref{eq:hilbert_schmidt_norm}, which is related to the purity of the density matrix 
    by $\Tr[\rho_t^2]=(\|\rho_t\|_2)^2$. Then, for any GKSL equations described by Hermitian jump operators, it does not increase in time, because 
    \begin{equation}
        \partial_t\Tr[\rho_t^2]=-\sum_{m=1}^M\Tr([L_{m,t},\rho_t]^2)\leq0.
        \label{eq:purity_monotone}
    \end{equation}
    The equality holds if and only if $[L_{m,t},\rho_t]=0$ for all $m$. Since $\partial_t\Tr[\pi_t^2]=\partial_t\Tr[\rho_t^2]$,
    equation~\eqref{eq:purity_monotone} implies that $\|\pi_t\|_2$ does not increase in time, and utilizing this fact, we will prove that $\lim_{t\to\infty}\|\pi_t\|_2=0$. 

    From now on, it is convenient to use the spectral norm defined as~\eqref{eq:spectral_norm}. 
    By writing the time-evolution superoperator as $\mathcal{V}_{s,t}= \mathcal{T}e^{\int_{s}^{t}\mathcal{L}_{t'}dt'}$, equation \eqref{eq:purity_monotone} implies
     \begin{equation}
         \opn\mathcal{V}_{s,t}\opn_2\leq1\quad \forall s,t \text{ such that }0\leq s\leq t.
     \end{equation}
     If $\Phi$ is trace preserving, $\Tr A=0$ implies $\Tr \Phi (A)=0$. By restricting the domain of $\Phi$ to those with $\Tr A=0$, we define the following:
    \begin{equation}
        \opn\Phi\opn_2^\prime=\max_{\|A\|_2=1, \Tr A=0} \|\Phi (A)\|_2. 
    \end{equation}
    By definition, we have 
    \begin{equation}
        \opn\Phi\opn_2^\prime\leq\opn\Phi\opn_2,
        \label{eq:inequality_prime}
    \end{equation}
    and therefore
    \begin{equation}
         \opn\mathcal{V}_{s,t}\opn^\prime_2\leq1\quad \forall s,t \text{ such that }0\leq s\leq t.
         \label{eq:bound_v}
    \end{equation}
    Furthermore, we have
    \begin{equation}
        \opn\Phi_1\Phi_2\opn_2^\prime \leq\opn\Phi_1\opn_2^\prime \opn\Phi_2\opn_2^\prime.
        \label{eq:bound_phi1phi2}
    \end{equation}
    Moreover, if we assume Condition~\ref{con:recurrent} and $\cint=\{c\mathbb{I}\mid c\in\mathbb{C}\}$, we can prove that there exists an infinite sequence of mutually disjoint time intervals $[t_n,t_n+\delta]$ $(n=0,1,\ldots)$ such that
    \begin{equation}
        \lambda_n:=\opn\mathcal{V}_{t_n,t_n+\delta}\opn_2^\prime<\Lambda
        \label{eq:bound_lambdan_Lambda}
    \end{equation}
    for some fixed $\Lambda<1$. To prove it, we use the following lemmas (See Appendices~\ref{sec:main1_lemma1} and \ref{sec:main1_lemma2} for proof.):
    \begin{lem}
    \label{lem:lambda0_bound}
        If $\mathcal{L}_t$ is continuous and $\cint=\{c\mathbb{I}\mid c\in\mathbb{C}\}$, there exists $t_0\geq0$ and $\delta>0$ such that
        \begin{equation}
             \lambda_0<1.
        \end{equation}
    \end{lem}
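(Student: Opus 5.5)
Since the set of traceless $A$ with $\|A\|_2=1$ is compact and $\mathcal{V}_{t_0,t_0+\delta}$ is a linear map, the maximum defining $\lambda_0=\opn\mathcal{V}_{t_0,t_0+\delta}\opn_2^\prime$ is attained. The monotonicity \eqref{eq:bound_v} already gives $\lambda_0\le 1$. It therefore suffices to rule out $\lambda_0=1$, and I will argue by contradiction. Suppose that for some fixed $t_0\ge 0$, say $t_0=0$, and every $\delta>0$ there is a traceless $A$ with $\|A\|_2=1$ and $\|\mathcal{V}_{t_0,t_0+\delta}(A)\|_2=1$. The claim holds for any $t_0$, so I fix $t_0=0$ and, if needed, some $\delta>0$. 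Because the GKSL map preserves Hermiticity, I can reduce to Hermitian traceless $A$: write $A=A_1+iA_2$ with $A_1,A_2$ Hermitian traceless. Since $\|\cdot\|_2^2$ splits over the Hermitian and anti-Hermitian parts, $\|\mathcal{V}(A_j)\|_2\le\|A_j\|_2$ together with the equality for $A$ forces equality for each nonzero $A_j$.

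Set $\pi_t=\mathcal{V}_{t_0,t}(A)$. By \eqref{eq:purity_monotone}, applied to $\pi_t$ (the identity part drops out of commutators),
\begin{equation}
\|\pi_{t_0+\delta}\|_2^2-\|A\|_2^2=\int_{t_0}^{t_0+\delta}\sum_m \Tr\bigl([L_{m,s},\pi_s]^2\bigr)\,ds .
\end{equation}
Each integrand $\Tr([L,\pi]^2)=-\|[L,\pi]\|_2^2$ is nonpositive and continuous in $s$, since $\mathcal{L}_t$ is continuous and hence so is $\pi_s$. Equality of norms therefore forces $[L_{m,s},\pi_s]=0$ for all $m$ and all $s\in[t_0,t_0+\delta]$. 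On this interval the dissipator then acts trivially on $\pi_s$: with $L$ Hermitian and $[L,\pi]=0$, we have $L\pi L-\tfrac12\{L^2,\pi\}=0$. Hence $\pi_s=U_{t_0,s}AU_{t_0,s}^\dagger$ evolves unitarily. Passing to the interaction picture \eqref{eq:transform_int}, $\tilde\pi_s=U_s^\dagger\pi_sU_s$ is constant, equal to $\tilde A:=U_{t_0}^\dagger A U_{t_0}$, and the commutation condition becomes $[\tilde L_{m,s},\tilde A]=0$ for $s\in[t_0,t_0+\delta]$.

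This only gives commutation on a finite window, while $\cint$ requires it for all $t$. This is the main obstacle, and I handle it by a compactness argument over $\delta$. Take $\delta=1,2,3,\dots$ and let $\tilde A_\delta$ be the corresponding traceless Hermitian unit-norm operators. The sets $S_\delta=\{\tilde A:\Tr\tilde A=0,\ \|\tilde A\|_2=1,\ [\tilde L_{m,s},\tilde A]=0\ \forall m,\ s\in[t_0,t_0+\delta]\}$ are compact, nonempty by assumption, and nested decreasing in $\delta$. Their intersection is therefore nonempty. An element of the intersection commutes with every $\tilde L_{m,s}$ for all $s\ge t_0$.

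Commutation for $s<t_0$ remains to be handled, and taking $t_0=0$ avoids the issue entirely. The intersection then yields a nonzero traceless element of $\cint$, contradicting $\cint=\{c\mathbb{I}\}$. Hence for $t_0=0$ there is a finite $\delta$ with $\lambda_0<1$. Continuity of $\mathcal{L}_t$ enters in three places: it guarantees that $\mathcal{V}$ and $\pi_s$ are well defined and continuous, it turns vanishing of the integral into pointwise vanishing, and it makes each $S_\delta$ closed. One technical point needs checking: that \eqref{eq:purity_monotone} holds for traceless Hermitian $\pi$, and not only for density matrices. This follows by linearity, since $\pi=\rho-\mathbb{I}/d$ up to scaling and the identity component drops out of all commutators.
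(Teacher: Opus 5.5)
Your proof is correct and follows the paper's route: norm conservation forces $[\tilde L_{m,t},\tilde\pi_t]=0$, so $\tilde\pi_t$ is constant and lies in $\cint$, and you work with $t_0=0$. Your nested-compact-set step over $\delta=1,2,\dots$ in fact supplies the justification the paper omits when it passes from ``$\lambda_0=1$ for every $\delta$'' to a single solution with $\|\pi_t\|_2$ conserved for all $t\ge0$. The only slip is your aside that the claim holds for any $t_0$, which is false for general continuous $\mathcal{L}_t$ (e.g.\ dissipation switched off after some time), but you never use it, since $t_0=0$ suffices for the lemma.
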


    \begin{lem}
    \label{lem:lambdan_bound}
       Under Condition~\ref{con:recurrent}, for any $\varepsilon>0$, $\delta>0$, and $t_0\geq0$, there exists an infinite sequence of mutually disjoint time intervals $[t_n,t_n+\delta]$ $(n=1,2,\ldots)$ such that
       \begin{equation}
       |\lambda_n-\lambda_0|<\varepsilon\delta e^{M\delta}.
       \end{equation}
    \end{lem}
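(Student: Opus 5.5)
The plan is to compare the two propagators $\mathcal{V}_{t_n,t_n+\delta}$ and $\mathcal{V}_{t_0,t_0+\delta}$ directly through a Duhamel (variation-of-constants) formula, using the closeness of the generators that Condition~\ref{con:recurrent} supplies. Fix $\varepsilon>0$, $\delta>0$ and $t_0\geq 0$. By Condition~\ref{con:recurrent}, there is an infinite sequence of mutually disjoint intervals $[t_n,t_n+\delta]$ such that
\begin{equation*}
\opn\mathcal{L}_{t_n+t}-\mathcal{L}_{t_0+t}\opn_2<\varepsilon \quad \forall t\in[0,\delta].
\end{equation*}
Since $t\mapsto \mathcal{L}_{t}$ is continuous, the left-hand side is a continuous function on the compact interval $[0,\delta]$. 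Its maximum is therefore attained and strictly less than $\varepsilon$; call it $\varepsilon'<\varepsilon$. This strictness is what will make the final inequality strict.

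Write $\mathcal{L}^{(n)}_s:=\mathcal{L}_{t_n+s}$ and $\mathcal{V}^{(n)}_{s,t}:=\mathcal{V}_{t_n+s,t_n+t}$ for $0\le s\le t\le\delta$. Differentiating $s\mapsto \mathcal{V}^{(n)}_{s,\delta}\mathcal{V}^{(0)}_{0,s}$ in $s$ and integrating from $0$ to $\delta$ gives
\begin{equation*}
\mathcal{V}^{(n)}_{0,\delta}-\mathcal{V}^{(0)}_{0,\delta}=\int_0^\delta \mathcal{V}^{(n)}_{s,\delta}\bigl(\mathcal{L}^{(n)}_s-\mathcal{L}^{(0)}_s\bigr)\mathcal{V}^{(0)}_{0,s}\,ds .
\end{equation*}
Next I bound the propagators. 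Grönwall's inequality together with $\opn\mathcal{L}_t\opn_2\le M$ gives $\opn\mathcal{V}_{s,t}\opn_2\le e^{M(t-s)}\le e^{M\delta}$. (The purity monotonicity \eqref{eq:purity_monotone} in fact gives the sharper bound $\le 1$, but the Grönwall bound already matches the stated constant.) Submultiplicativity of $\opn\cdot\opn_2$ then yields
\begin{equation*}
\opn\mathcal{V}_{t_n,t_n+\delta}-\mathcal{V}_{t_0,t_0+\delta}\opn_2\le \varepsilon'\delta e^{M\delta}<\varepsilon\delta e^{M\delta}.
\end{equation*}

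It remains to pass to the restricted norm $\opn\cdot\opn_2'$. On the subspace of traceless operators, $\opn\cdot\opn_2'$ is the operator norm of the restriction; both propagators are trace preserving, so both restrictions map traceless operators to traceless operators. The triangle inequality therefore holds for $\opn\cdot\opn_2'$, and with \eqref{eq:inequality_prime} it gives
\begin{equation*}
|\lambda_n-\lambda_0|\le \opn\mathcal{V}_{t_n,t_n+\delta}-\mathcal{V}_{t_0,t_0+\delta}\opn_2'\le\opn\mathcal{V}_{t_n,t_n+\delta}-\mathcal{V}_{t_0,t_0+\delta}\opn_2<\varepsilon\delta e^{M\delta},
\end{equation*}
which is the claim. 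The only delicate points are justifying the Duhamel identity for a merely continuous, time-ordered generator (the product $\mathcal{V}^{(n)}_{s,\delta}\mathcal{V}^{(0)}_{0,s}$ is differentiable in $s$ with the expected derivative) and obtaining the strict inequality via the compactness argument above. Neither is expected to be a real obstacle.
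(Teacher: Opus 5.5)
Your proof is correct and is essentially the paper's argument. The paper writes the linear ODE $\partial_t\mathcal{D}_t=\mathcal{L}_{t_n+t}\mathcal{D}_t+(\mathcal{L}_{t_n+t}-\mathcal{L}_{t_0+t})\mathcal{V}_{t_0,t_0+t}$ for $\mathcal{D}_t=\mathcal{V}_{t_n,t_n+t}-\mathcal{V}_{t_0,t_0+t}$ and applies Gr\"onwall in the restricted norm $\opn\cdot\opn_2'$; this is just the differential form of your Duhamel identity and gives the same constant $e^{M(\delta-s)}e^{Ms}=e^{M\delta}$. Your integral version, which works in $\opn\cdot\opn_2$ and restricts only at the end, is slightly cleaner because it never differentiates a norm, and your compactness step makes explicit the strict inequality that the paper leaves implicit.
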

    From Lemmas \ref{lem:lambda0_bound} and \ref{lem:lambdan_bound}, by choosing sufficiently small $\varepsilon$, we have \eqref{eq:bound_lambdan_Lambda}.
    Finally, from \eqref{eq:bound_v}, \eqref{eq:bound_phi1phi2}, and \eqref{eq:bound_lambdan_Lambda}, if $t\geq t_N+\delta$ $(N=0,1,\ldots)$,
    \begin{align} 
        \opn\mathcal{V}_{0,t}\opn_2^\prime
        \leq& \opn\mathcal{V}_{0,t_0}\opn_2^\prime\left[\prod_{n=0}^{N-1} \opn\mathcal{V}_{t_n,t_n+\delta}\opn_2^\prime\opn\mathcal{V}_{t_n+\delta,t_{n+1}}\opn_2^\prime\right]\nonumber\\
        &\times\opn\mathcal{V}_{t_{N},t_{N}+\delta}\opn_2^\prime \opn\mathcal{V}_{t_{N}+\delta,t}\opn_2^\prime \nonumber\\
        \leq& \prod_{n=0}^N \opn\mathcal{V}_{t_n,t_n+\delta}\opn_2^\prime
        = \prod_{n=0}^N \lambda_n
        \leq\Lambda^{N+1}.
        \end{align}
    Since $\lim_{N\to \infty}\Lambda^{N+1}=0$,
    \begin{equation}
        \lim_{t\to \infty}\opn\mathcal{V}_{0,t}\opn_2^\prime=0,
    \end{equation}
    which guarantees that $\lim_{t\to\infty}\|\pi_t\|_2=0$ for any initial $\pi_0$. Thus, 
    \begin{equation}
        \lim_{t\to \infty}\rho_t=\mathbb{I}/d
    \end{equation}
    for any initial state $\rho_0$.
    \end{proof}
    
\subsection{Useful criterion for the uniqueness via algebra of GKSL generators}
\label{sec:useful_criterion}

While Theorem~\ref{thm:main1} provides a necessary and sufficient condition for the uniqueness of steady states, one needs to calculate $\tilde{L}_{m,t}$ to verify $\cint=\{c\mathbb{I}\mid c\in\mathbb{C}\}$, which can be technically demanding. Therefore, having a criterion formulated in the Schr\"odinger picture, which corresponds to the algebraic one based on $\mathcal{A}$ in \eqref{eq:algebra_indep} for the time-independent case, would be practically useful.
Note that $\mathcal{A}^\mathrm{Sch}_t$ in \eqref{eq:A_st} is not appropriate for justifying the uniqueness, as we saw before.

For this purpose, assuming that $\tilde{L}_{m,t}$ is smooth at $t$, we first define the following symmetry:

\begin{align}
    \tilde{\mathcal{C}}_{t}^{\mathrm{ad}}=\{O\in \mathcal{B}(\mathcal{H})\mid[\partial^n_{t}\tilde{L}_{m,t},O]=0 \ \forall m,n\}.
\end{align}
It is straightforward to check that $\tilde{\mathcal{C}}_{t}^{\mathrm{ad}}\supseteq \cint$ (see the discussion around Eq.~\eqref{eq:proof_main2_5}). Moreover, let us additionally assume that $\tilde{L}_{m,t}$ is analytical at all $t=t_0$, i.e., $\tilde{L}_{m,t}$ admits a Taylor expansion: 
\begin{equation} 
\tilde{L}_{m,t} = \sum_{n=0}^\infty \frac{(t - t_0)^n}{n!}  \partial_t^n \tilde{L}_{m,t}|_{t=t_0}
\label{eq:taylor} 
\end{equation}
in some open neighborhood $U(t_0)$ of $t=t_0$.
Then, we can prove that $\tilde{\mathcal{C}}_{t}^{\mathrm{ad}}\subseteq \cint$ (see Appendix~\ref{sec:proof_main2}), which leads to $\tilde{\mathcal{C}}_{t}^{\mathrm{ad}}=\cint$. 

Furthermore, in the Schr\"odinger picture, we define
\begin{align}
    \mathcal{C}_{t}^{\mathrm{ad}}=\{O\in \mathcal{B}(\mathcal{H})\mid [\ad^n_{t}(L_{m,t}),O]=0 \ \forall m,n\},
\end{align}
where $\ad_{t}$ is adjoint operation defined as
\begin{equation}    
    \ad_{t}(A):=i[H_{t},A]+\partial_t A.
\end{equation}
Two symmetries $\mathcal{C}_{t}^{\mathrm{ad}}$ and $\tilde{\mathcal{C}}_{t}^{\mathrm{ad}}$ are related by a unitary transformation
\begin{equation}
   \mathcal{C}_{t}^{\mathrm{ad}}=U_t\tilde{\mathcal{C}}_{t}^{\mathrm{ad}} U_t^\dagger,
   \label{eq:unitary_cad}
\end{equation}
where $U_t$ is defined as \eqref{eq:transform_int}. 

Finally, we define the subalgebra of $\mathcal{B(\mathcal{H})}$ generated by the identity operator, jump operators, and operators obtained by applying $\ad_{t}$ arbitrary finite times to jump operators, namely, 
\begin{align}
\mathcal{A}_{t}^{\mathrm{ad}}=\langle\,\mathbb{I},\{\ad^n_{t}(L_{m,t})\}_{m=1,\ldots,M;n=0,1,\ldots}\rangle.
\label{eq:aad}
\end{align}
Then, $\mathcal{C}_{t}^{\mathrm{ad}}$ is the commutant of $\mathcal{A}_{t}^{\mathrm{ad}}$. By studying $\mathcal{A}_{t}^{\mathrm{ad}}$, which is much easier to calculate than $\cint$, we can prove the uniqueness of steady states [(i) in our classification].

Under these definitions, our results are summarized as follows:

\begin{thm}
\label{thm:main2}
Consider the time-dependent GKSL equation~\eqref{eq:GKSL} that satisfies {Condition~\ref{con:recurrent}}. Suppose there exists $t_0\geq0$ such that $H_{t}$ and $L_{m,t}$ are smooth at $t=t_0$ and $\mathcal{A}_{t_0}^{\mathrm{ad}}=\mathcal{B}(\mathcal
{H})$, or equivalently, $\mathcal{C}_{t_0}^{\mathrm{ad}}=\{c\mathbb{I}\mid c\in \mathbb{C}\}$. Then, for any initial state $\rho_0$, the solution $\rho_t$ of the GKSL equation converges to the maximally mixed state: 
\begin{equation}
     \lim_{t\to \infty}\rho_t=\mathbb{I}/d
\end{equation}
as the unique steady state. Furthermore, if $H_{t}$ and $L_{m,t}$ are analytical for every $t$, then the converse is also true, i.e., the steady state is unique if and only if $\mathcal{A}_{t_0}^{\mathrm{ad}}=\mathcal{B}(\mathcal
{H})$, or equivalently, $\mathcal{C}_{t_0}^{\mathrm{ad}}=\{c\mathbb{I}\mid c\in \mathbb{C}\}$ for some $t_0$.
\end{thm}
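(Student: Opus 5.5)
The plan is to reduce Theorem~\ref{thm:main2} to Theorem~\ref{thm:main1} by proving the chain of relations
\begin{equation}
\cint\subseteq\tilde{\mathcal{C}}^{\mathrm{ad}}_{t_0}=U_{t_0}^\dagger\mathcal{C}^{\mathrm{ad}}_{t_0}U_{t_0},
\end{equation}
with equality in the first inclusion in the analytic case.

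The first key computation is the identity
\begin{equation}
\partial_t^n\tilde{L}_{m,t}=U_t^\dagger\,\ad_t^n(L_{m,t})\,U_t .
\end{equation}
I will prove it by induction on $n$.
\begin{itemize}
\item[] Since $\partial_t U_t=-iH_tU_t$, for any differentiable $A_t$ we get $\partial_t(U_t^\dagger A_tU_t)=U_t^\dagger(i[H_t,A_t]+\partial_tA_t)U_t=U_t^\dagger\ad_t(A_t)U_t$.
\item[] Iterating this with $A_t=\ad_t^{n-1}(L_{m,t})$ gives the identity for all $n$. Smoothness of $H_t,L_{m,t}$ at $t_0$ makes all these derivatives exist.
\end{itemize}
Conjugation by the fixed unitary $U_{t_0}$ then yields Eq.~\eqref{eq:unitary_cad}. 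It also shows that $\mathcal{C}^{\mathrm{ad}}_{t_0}$ is trivial if and only if $\tilde{\mathcal{C}}^{\mathrm{ad}}_{t_0}$ is. The equivalence $\mathcal{A}^{\mathrm{ad}}_{t_0}=\mathcal{B}(\mathcal{H})\iff\mathcal{C}^{\mathrm{ad}}_{t_0}=\{c\mathbb{I}\}$ follows from the bicommutant argument of Eq.~\eqref{eq:bicommutant}. This requires $\mathcal{A}^{\mathrm{ad}}_{t_0}$ to be $*$-closed, which holds because $L_{m,t}$ is Hermitian and $\ad_t$ preserves Hermiticity ($i[H,A]$ is Hermitian for Hermitian $H$ and $A$).

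Next comes the inclusion $\cint\subseteq\tilde{\mathcal{C}}^{\mathrm{ad}}_{t_0}$.
\begin{itemize}
\item[] If $O\in\cint$, then $[\tilde{L}_{m,t},O]=0$ for all $t$.
\item[] Differentiating $n$ times at $t_0$ gives $[\partial_t^n\tilde{L}_{m,t}|_{t_0},O]=0$. Differentiation is legitimate because $O$ is constant.
\item[] For $t_0=0$ I use one-sided derivatives.
\end{itemize}
Hence, if $\mathcal{A}^{\mathrm{ad}}_{t_0}=\mathcal{B}(\mathcal{H})$, then $\tilde{\mathcal{C}}^{\mathrm{ad}}_{t_0}$ is trivial, so $\cint$ is trivial. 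Theorem~\ref{thm:main1} (1.$\to$2.) then gives $\rho_t\to\mathbb{I}/d$, which is the sufficiency part.

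For the converse in the analytic case, assume the steady state is unique. By Theorem~\ref{thm:main1} (2.$\to$1.), $\cint=\{c\mathbb{I}\}$. Suppose for contradiction that $\mathcal{C}^{\mathrm{ad}}_{t_0}\neq\{c\mathbb{I}\}$ for every $t_0$; it suffices to produce a contradiction at a single $t_0$.
\begin{itemize}
\item[] Analyticity of $H_t$ makes $U_t$ analytic, since it solves a linear ODE with analytic coefficients. Hence $\tilde{L}_{m,t}$ is analytic.
\item[] Take $O\in\tilde{\mathcal{C}}^{\mathrm{ad}}_{t_0}$. Then $f(t)=[\tilde{L}_{m,t},O]$ is analytic and all its derivatives vanish at $t_0$, so $f\equiv0$ on a neighborhood of $t_0$.
\item[] By the identity theorem for real-analytic functions on the connected interval $[0,\infty)$, $f\equiv0$ everywhere. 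Thus $O\in\cint$, giving $\tilde{\mathcal{C}}^{\mathrm{ad}}_{t_0}=\cint=\{c\mathbb{I}\}$, a contradiction.
\end{itemize}

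I expect this analytic-continuation step to be the main obstacle. It needs the Taylor expansion \eqref{eq:taylor} to hold near every point, not just near $t_0$, so that the local vanishing propagates along all of $[0,\infty)$ through the connectedness argument. It also needs care that analyticity of $H_t,L_{m,t}$ genuinely transfers to $U_t$ and hence to $\tilde{L}_{m,t}$; I would cite the Cauchy–Kovalevskaya/analytic ODE theorem for that transfer.
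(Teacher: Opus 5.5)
Your proposal is correct and follows essentially the same route as the paper. You reduce to Theorem~\ref{thm:main1} via $\mathcal{C}^{\mathrm{ad}}_{t_0}=U_{t_0}\tilde{\mathcal{C}}^{\mathrm{ad}}_{t_0}U_{t_0}^\dagger$, obtain $\cint\subseteq\tilde{\mathcal{C}}^{\mathrm{ad}}_{t_0}$ by differentiating $[\tilde{L}_{m,t},O]=0$ at $t_0$, and obtain the reverse inclusion in the analytic case by Taylor expansion plus connectedness of $[0,\infty)$, which is your identity-theorem step. Your inductive proof of $\partial_t^n\tilde{L}_{m,t}=U_t^\dagger\ad_t^n(L_{m,t})U_t$ and your check that $\mathcal{A}^{\mathrm{ad}}_{t_0}$ is $*$-closed fill in details the paper only asserts, without changing the argument.
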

A full proof is presented in Appendix \ref{sec:proof_main2}; only a brief outline is provided here.  From Theorem~\ref{thm:main1}, the steady state is unique if and only if $\cint=\{c\mathbb{I}\mid c\in\mathbb{C}\}$. Thus, it suffices to prove that 
\begin{equation}
\mathcal{C}_{t_0}^{\mathrm{ad}} =\{c\mathbb{I}\mid c\in\mathbb{C}\} \Longrightarrow
  \cint=\{c\mathbb{I}\mid c\in\mathbb{C}\}. 
\end{equation}
if $H_{t}$ and $L_{m,t}$ are smooth at $t=t_0$ and 
\begin{equation}
\cint=\{c\mathbb{I}\mid c\in\mathbb{C}\} \Longrightarrow\mathcal{C}_{t_0}^{\mathrm{ad}} =\{c\mathbb{I}\mid c\in\mathbb{C}\}
\end{equation}
for some $t_0$ if $H_{t}$ and $L_{m,t}$ are analytical for every $t$. 

We remark that the analyticity of $H_{t}$ and $L_{m,t}$ is essential in the converse part of Theorem~\ref{thm:main2}. In Appendix~\ref{sec:counterexample}, we construct a counterexample to the converse part of Theorem~\ref{thm:main2} in a non-analytical GKSL equation.

\subsection{Applications}
\label{sec:application_unique}

To demonstrate the applications of Theorem~\ref{thm:main2}, we consider examples of two-level systems and a boundary dissipated quantum spin chain.
Importantly, the uniqueness of the steady states for these systems cannot be proven by the existing methods in Refs.~\cite{menczel_LimitCyclesPeriodically_2019,meglio_AsymptoticRelaxationQuantum_2024}, which cannot use the information of the Hamiltonian part of the dynamics.

\begin{ex}
First, we consider a two-level system with $\mathcal{H}=\mathbb{C}^2$, whose GKSL equation is defined through
\begin{equation}
    H_t=\sigma^z+\sum_{j=1}^{l}B_j\cos(\omega_j t)\sigma^x,\,L_t=\sqrt{\kappa}\sigma^z,
    \label{eq:example_2}
\end{equation}
where $l=1,\ldots$ and $B_j,\omega_j,\kappa>0$ are constants. This GKSL equation satisfies Condition~\ref{con:recurrent}. If $l=1$, the Liouvillian is periodic in time; if $l>1$ and the ratio of the frequency is irrational, it is quasiperiodic in time.  

Then, by Theorem~\ref{thm:main2}, the steady state is unique, because
\begin{align}
    L_t|_{t=0}\propto \sigma^z, \ \ad_t[L_t]|_{t=0}\propto\sigma^y,
\end{align}
and therefore the set of operators $\{\mathbb{I},L_t|_{t=0},\ad_t[L_t]|_{t=0}\}$  generates the full operator algebra $\mathcal{B}(\mathbb{C}^2)$.  
\end{ex}

\begin{ex}
Next, consider the GKSL equation defined through
\begin{equation}
    H_t=0,\,L_t=\sqrt{\kappa}[\sigma^x \cos(\omega t)+\sigma^y \sin(\omega t)].
\end{equation}
where $\omega, \kappa>0$ are constants. Then, the steady state is unique from Theorem~\ref{thm:main2}, because
\begin{align}
    L_t|_{t=0}\propto\sigma^x,\ \ad_t[L_t]|_{t=0}\propto \sigma^y,
\end{align}
and therefore the set of operators $\{\mathbb{I},L_t|_{t=0},\ad_t[L_t]|_{t=0}\}$ generates all the full operator algebra $\mathcal{B}(\mathbb{C}^2)$.
\end{ex}

\begin{ex}
Finally, we consider an example of a boundary-dissipated quantum spin chain described by $\mathcal{H}=(\mathbb{C}^{2})^{\otimes V}$. 
Using Pauli operators, we define operators on $\mathcal{H}=(\mathbb{C}^{2})^{\otimes V}$. For $\mu=x,y,z$ and $j=1,\ldots,V$, we define
\begin{equation}
\sigma^\mu_j= \mathbb{I}\otimes \cdots \otimes\sigma^\mu\otimes \cdots \otimes\mathbb{I}
\end{equation}
where $\sigma^\mu$ acts on $j$th Hilbert space. 
Then, consider the GKSL equation defined through
\begin{align}
    H_t=\sum_{j=1}^{V-1}\sigma^z_j \sigma^z_{j+1}+B\cos(\omega t)\sum_{j=1}^{V}\sigma^x_j,\ L_t=\sqrt{\kappa}\sigma^z_1,
    \label{eq:example_3}
\end{align}
where $B,\omega,\kappa>0$ are constants and $V$ denotes the number of sites. 

We can show that this periodically driven system with boundary dephasing has a unique steady state.
Note that this also indicates the uniqueness if we instead consider the case of bulk dephasing, where the jump operator reads $L_{j,t}=\sqrt{\kappa}\sigma_j^z$.

To facilitate the proof of the uniqueness, it is convenient to introduce the Majorana operators defined by
\begin{align}
\gamma_{2j-1}=\left(\prod_{k=1}^{j-1}\sigma^x_k\right)\sigma^z_j,\ \gamma_{2j}=\left(\prod_{k=1}^{j-1}\sigma^x_k\right)\sigma^y_j,
\end{align}
which satisfy the relations
\begin{equation}
  \gamma_{j}^\dagger=\gamma_j,\  \{\gamma_{j},\gamma_{k}\}=2\delta_{jk}.
\end{equation}
Using these operators, equation~\eqref{eq:example_3} can be rewritten as
\begin{align}
    H_t=iB\cos(\omega t)\sum_{j=1}^{V}\gamma_{2j-1}\gamma_{2j}+i\sum_{j=1}^{V-1}\gamma_{2j}\gamma_{2j+1},\ L_t=\sqrt{\kappa}\gamma_1.
    \label{eq:example_3_majorana}
\end{align}
In Appendix~\ref{sec:proof_lem7}, we prove the following lemma, which ensures the uniqueness of the steady state for arbitrary $V$ through our {Theorem~\ref{thm:main2}}.
\end{ex}

\begin{lem}
\label{lem:tfim2}
Consider the GKSL equation defined by Eq.~\eqref{eq:example_3_majorana}.
For $n=0,\ldots, 2V-1$, the set of operators $\{\mathbb{I},\ad_t^{m}(L_t)|_{t=0}\mid m=0,1,\ldots,n\}$ generates all polynomials of $\gamma_1,\ldots,\gamma_{n+1}$ under addition, multiplication, and scalar multiplication.
\end{lem}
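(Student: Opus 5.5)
The plan is an induction on $n$, driven by one structural fact: the Hamiltonian in Eq.~\eqref{eq:example_3_majorana} is quadratic in Majorana operators. Conjugating by a quadratic Hamiltonian therefore maps the real span of $\{\gamma_k\}$ into itself. Concretely, for $a\neq b$ one has
\begin{equation}
[i\gamma_a\gamma_b,\gamma_c]=2i\left(\delta_{bc}\gamma_a-\delta_{ac}\gamma_b\right).
\end{equation}
Write $H_t=\sum_{k=1}^{2V-1} h_k(t)\, i\gamma_k\gamma_{k+1}$, where $h_k(t)=B\cos(\omega t)$ for odd $k$ and $h_k(t)=1$ for even $k$. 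This is a nearest-neighbour ``hopping chain'' on the Majorana sites $1,\dots,2V$. It follows that $i[H_t,\gamma_k]$ is a linear combination of $\gamma_{k-1}$ and $\gamma_{k+1}$, with coefficients $\mp 2h_{k-1}(t)$ and $\pm 2h_k(t)$ respectively.

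First, I will show by induction on $n$ that, for every $t$,
\begin{equation}
\ad_t^n(L_t)=\sqrt{\kappa}\sum_{k=1}^{n+1}c^{(n)}_k(t)\,\gamma_k ,
\end{equation}
with smooth scalar coefficients, and with leading coefficient
\begin{equation}
c^{(n)}_{n+1}(t)=\prod_{k=1}^{n}\bigl(\pm 2h_k(t)\bigr).
\end{equation}
The inductive step applies $\ad_t(A)=i[H_t,A]+\partial_t A$ to the expression for $n-1$.

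The commutator part raises the maximal index by at most one. Its coefficient on $\gamma_{n+1}$ comes only from the term $c^{(n-1)}_n\gamma_n$, which contributes $\pm 2h_n(t)\,c^{(n-1)}_n(t)$.

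The derivative part $\partial_t$ acts only on the scalar coefficients $c^{(n-1)}_k(t)$, which are supported on indices $k\le n$. So it cannot contribute to $\gamma_{n+1}$.

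This bookkeeping is the main point to check carefully: the explicit time derivative in $\ad_t$, which is absent in the time-independent case, must never feed the leading Majorana. The support argument above settles this. The specific form of the lower coefficients is irrelevant. At $t=0$, $\cos(0)=1$, so $c^{(n)}_{n+1}(0)=\pm 2^n B^{\lceil n/2\rceil}\neq 0$ because $B>0$.

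Next, the lemma follows by induction on $n$. Denote by $\mathcal{A}_n$ the algebra generated by $\{\mathbb{I},\ad_t^m(L_t)|_{t=0}\}_{m\le n}$, and by $\mathcal{P}_n$ the algebra of polynomials in $\gamma_1,\dots,\gamma_{n+1}$.

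For the base case $n=0$, $L_0=\sqrt{\kappa}\gamma_1$, so $\mathcal{A}_0=\mathcal{P}_0$.

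For the inclusion $\mathcal{A}_n\subseteq\mathcal{P}_n$, note that each generator $\ad_t^m(L_t)|_{t=0}$ with $m\le n$ is a linear combination of $\gamma_1,\dots,\gamma_{m+1}$, hence lies in $\mathcal{P}_n$.

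For the reverse inclusion, assume $\mathcal{A}_{n-1}=\mathcal{P}_{n-1}$. Then
\begin{equation}
\gamma_{n+1}=\frac{1}{\sqrt{\kappa}\,c^{(n)}_{n+1}(0)}\Bigl(\ad_t^n(L_t)|_{t=0}-\sqrt{\kappa}\sum_{k\le n}c^{(n)}_k(0)\gamma_k\Bigr)\in\mathcal{A}_n ,
\end{equation}
since $\gamma_1,\dots,\gamma_n\in\mathcal{P}_{n-1}=\mathcal{A}_{n-1}\subseteq\mathcal{A}_n$. Therefore $\mathcal{P}_n\subseteq\mathcal{A}_n$.

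Finally, for the application to Theorem~\ref{thm:main2}, take $n=2V-1$. Then all $2V$ Majoranas lie in $\mathcal{A}^{\mathrm{ad}}_{0}$. Polynomials in $\gamma_1,\dots,\gamma_{2V}$ span the full Clifford algebra, which is isomorphic to $\mathcal{B}((\mathbb{C}^2)^{\otimes V})$: indeed $\sigma^z_j$, $\sigma^y_j$ and $\sigma^x_j=-i\gamma_{2j-1}\gamma_{2j}$ are all recovered. Hence $\mathcal{A}^{\mathrm{ad}}_{0}=\mathcal{B}(\mathcal{H})$, and the steady state is unique.
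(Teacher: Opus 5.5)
Your proposal is correct and follows essentially the paper's route. Like the paper, you first prove by induction that $\ad_t^n(L_t)=f_{n,t}\gamma_{n+1}+(\text{a combination of }\gamma_1,\dots,\gamma_n)$ with $f_{n,0}=2^n\sqrt{\kappa}B^{\lceil n/2\rceil}\neq 0$, where your product formula reproduces the paper's $f_{2m,t}$ and $f_{2m+1,t}$, and then solve for $\gamma_{n+1}$ by a second induction. Your uniform nearest-neighbour treatment of $H_t$ is a little tidier than the paper's separate even/odd steps, and it explicitly covers the lower-order terms $i[H_t,g_{n-1,t}]$ that the paper's displayed formulas for $g_{n,t}$ omit.

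One harmless slip in your closing remark: $\gamma_{2j-1}\gamma_{2j}=\sigma^z_j\sigma^y_j=-i\sigma^x_j$, so $\sigma^x_j=i\gamma_{2j-1}\gamma_{2j}$, which changes nothing in the conclusion.
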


\section{Implication of strong symmetries $\cst$ and $\cint$ for steady-state structure}
\label{sec:implication}

So far, we have discussed the criteria for the unique steady states, i.e., class (i) in our classification introduced in 
Sec.~\ref{sec:classification}.
In this section, we rigorously show that the two distinct strong symmetries $\cst$ and $\cint$, introduced in Sec.~\ref{sec:criteria_for_unique}, completely characterize the four classification 
in 
Sec.~\ref{sec:classification}
for time-dependent GKSL equations.
Moreover, providing concrete examples for the four classes, we elucidate how the previously proposed mechanisms underlying non-trivial oscillatory steady states, such as strong dynamical symmetry and Floquet dynamical symmetry, are naturally understood in our framework.
Furthermore, we demonstrate a new type of symmetry-predicted, time-dependent steady-state dynamics driven by time-quasiperiodic driving, which cannot be explained by the previously known mechanisms.

\subsection{Absence of multiple time-independent steady states under $\cst=\{c\mathbb{I}\mid c\in\mathbb{C}\}$}
\label{sec:absence}
As mentioned in Sec.~\ref{sec:insufficiency}, if there is a strong symmetry in the Schr\"odinger picture, i.e., $\cst\neq\{c\mathbb{I}\mid c\in\mathbb{C}\}$, then the GKSL equation has multiple time-independent steady states [(ii) or (iii) in our classification]. Then, a natural question is whether $\cst=\{c\mathbb{I}\mid c\in\mathbb{C}\}$ implies the absence of multiple time-independent steady states [(i) or (iv) in our classification]. For a general time-dependent Liouvillian, this is not the case. 
In the case of Example~\ref{ex:exp_decay_dephasing}, $\cst=\{c\mathbb{I}\mid c\in\mathbb{C}\}$, but there are multiple time-independent steady states.

However, if we restrict Liouvillian to a time-recurrent one, we can prove that $\cst=\{c\mathbb{I}\mid c\in\mathbb{C}\}$ implies the absence of multiple time-independent steady states:
\begin{thm}
\label{thm:indep}
Consider the time-dependent GKSL equation~\eqref{eq:GKSL} that satisfies {Condition~\ref{con:recurrent}}. Let $\rho^*$ be a time-independent density matrix. Then, the following statements are equivalent:
\begin{enumerate}
    \item $\displaystyle \lim_{t\to \infty} \rho_t = \rho^*$ for some initial condition $\rho_0$.
    \item $\rho^* \in \cst$.
\end{enumerate}
\end{thm}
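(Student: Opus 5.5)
The implication (\emph{2}. $\to$ \emph{1}.) is immediate. If $\rho^*\in\cst$, then $[H_t,\rho^*]=[L_{m,t},\rho^*]=0$ for all $m,t$, so $\mathcal{L}_t(\rho^*)=0$ for every $t$. Choosing $\rho_0=\rho^*$ gives $\rho_t=\rho^*$ for all $t$, and hence $\lim_{t\to\infty}\rho_t=\rho^*$.

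For (\emph{1}. $\to$ \emph{2}.) I will first show $\mathcal{L}_{t_0}(\rho^*)=0$ for every $t_0$, and then read off the commutators. Fix $t_0\geq 0$, $\delta>0$ and $\varepsilon>0$. Condition~\ref{con:recurrent} supplies disjoint intervals $[t_n,t_n+\delta]$ on which $\opn\mathcal{L}_{t_n+s}-\mathcal{L}_{t_0+s}\opn_2<\varepsilon$ for all $s\in[0,\delta]$. Since the intervals are disjoint, $t_n\to\infty$, and so $\sup_{s\in[0,\delta]}\|\rho_{t_n+s}-\rho^*\|_2\to 0$. We have
\begin{equation}
\rho_{t_n+\delta}-\rho_{t_n}=\int_0^\delta \mathcal{L}_{t_n+s}(\rho_{t_n+s})\,ds .
\end{equation}
The left-hand side tends to $0$. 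On the right, I replace $\rho_{t_n+s}$ by $\rho^*$, with error controlled by the bound $M$, and $\mathcal{L}_{t_n+s}$ by $\mathcal{L}_{t_0+s}$, with error at most $\varepsilon\|\rho^*\|_2$. This gives $\|\int_0^\delta\mathcal{L}_{t_0+s}(\rho^*)\,ds\|_2\leq\varepsilon\delta\|\rho^*\|_2$. Letting $\varepsilon\to0$ with $\delta$ fixed yields $\int_0^\delta\mathcal{L}_{t_0+s}(\rho^*)\,ds=0$ for every $\delta>0$. Dividing by $\delta$ and letting $\delta\to0$, continuity of $\mathcal{L}_t$ gives $\mathcal{L}_{t_0}(\rho^*)=0$, and this holds for every $t_0$.

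It remains to turn $\mathcal{L}_t(\rho^*)=0$ into the commutation relations. For Hermitian jump operators one has
\begin{equation}
\Tr[\rho^*\mathcal{L}_t(\rho^*)]=-\tfrac12\sum_m\|[L_{m,t},\rho^*]\|_2^2 ,
\end{equation}
because the Hamiltonian term contributes $-i\Tr(\rho^*[H_t,\rho^*])=0$; this is the same computation as in \eqref{eq:purity_monotone}. Hence $[L_{m,t},\rho^*]=0$ for all $m$ and $t$. The dissipator then vanishes on $\rho^*$, so $0=\mathcal{L}_t(\rho^*)=-i[H_t,\rho^*]$, and therefore $\rho^*\in\cst$.

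The delicate step is the order of limits in the middle paragraph. The sequence $t_n$ depends on $\varepsilon$, $\delta$ and $t_0$, so I will take $n\to\infty$ first for fixed $(\varepsilon,\delta,t_0)$, then $\varepsilon\to0$, then $\delta\to0$. The error terms must be bounded uniformly on $[0,\delta]$ using the norm bound $M$, so that each limit is legitimate. As a cross-check I could also use purity monotonicity directly: $\int_0^\infty\sum_m\|[L_{m,t},\rho_t]\|_2^2\,dt<\infty$, so the contribution of these recurrent windows vanishes. Recurrence is genuinely needed here, since Example~\ref{ex:exp_decay_dephasing} shows the statement fails without it.
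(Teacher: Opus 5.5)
Your proof is correct and takes essentially the same route as the paper. In both, the recurrence windows $[t_n,t_n+\delta]$ with $t_n\to\infty$, the convergence $\rho_t\to\rho^*$ together with the bound $M$, and integration over a window force $\mathcal{L}_{t_0}(\rho^*)=0$ for every $t_0$, and Hermiticity of the jump operators then gives $\rho^*\in\cst$. The differences are cosmetic: you argue directly in Hilbert--Schmidt norm and conclude by the fundamental theorem of calculus, whereas the paper argues by contradiction on a single matrix element; and you derive the kernel--commutant step from the purity identity rather than citing Eq.~\eqref{eq:commutant_degen}.
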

See Appendix~\ref{sec:proof_thm_indep} for proof. From Theorem~\ref{thm:indep}, $\cst=\{c\mathbb{I}\mid c\in\mathbb{C}\}$ is equivalent to the absence of multiple time-independent steady states, i.e., classes (i) and (iv) in our classification, for time-dependent GKSL equations. 

\subsection{Existence of time-dependent steady states under $\cint\setminus\cst\neq \emptyset$}
\label{sec:existence}
An interesting situation is that $\cst=\{c\mathbb{I}\mid c\in\mathbb{C}\}$ but $\cint\neq\{c\mathbb{I}\mid c\in\mathbb{C}\}$. 
In this case, from Theorems~\ref{thm:main1} and ~\ref{thm:indep}, we can expect that the GKSL equation has time-dependent steady states, as exemplified in Example~\ref{ex:rotating_dephasing}.
Here, more generally, we show that there exists a time-dependent steady state [(iii) and (iv) in our classification] if and only if $\cint\setminus\cst\neq \emptyset$.

\begin{thm}
\label{thm:time_dependent}
Consider the time-dependent GKSL equation~\eqref{eq:GKSL} that satisfies {Condition~\ref{con:recurrent}}.
Then, the following statements are equivalent:
\begin{enumerate}
    \item $\cint\setminus\cst\neq \emptyset$.
    \item There exists a time-dependent steady state, i.e., there is a time-dependent $\rho^*_t$ that does not converge to any time-independent density matrix as $t\to \infty$ such that 
    \begin{equation}
    \lim_{t\to \infty} \rho_t=\rho^*_t,
    \end{equation}
    for some initial condition $\rho_0$.
\end{enumerate}
\end{thm}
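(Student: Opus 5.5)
The plan is to prove the two implications separately. For (\emph{1}.\ $\to$ \emph{2}.) I would reuse the construction from the proof of Theorem~\ref{thm:main1} and then rule out convergence by means of Theorem~\ref{thm:indep}. For (\emph{2}.\ $\to$ \emph{1}.) I would argue by contraposition: assuming $\cint=\cst$, I would show that every solution converges to a time-independent state. This needs a refinement of the contraction argument used in Theorem~\ref{thm:main1}.

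\emph{1.\ $\to$ 2.} Both $\cint$ and $\cst$ are $*$-closed algebras containing $\mathbb{I}$, since $H_t$, $L_{m,t}$ and $\tilde{L}_{m,t}$ are Hermitian. Take $O\in\cint\setminus\cst$. Either $O+O^\dagger$ or $i(O-O^\dagger)$ is then not in $\cst$, and subtracting a multiple of $\mathbb{I}$ does not change this. Hence we may take $O$ to be Hermitian and traceless. As in Theorem~\ref{thm:main1}, the choice $\tilde\rho_0=\mathbb{I}/d+\epsilon O$ is a stationary solution of \eqref{eq:GKSL_int}, so $\rho_t=\mathbb{I}/d+\epsilon O_t$ with $O_t:=U_tOU_t^\dagger$. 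We set $\rho_t^*:=\rho_t$.

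Suppose, for contradiction, that $O_t\to X$ for some time-independent $X$. Theorem~\ref{thm:indep} then gives $\mathbb{I}/d+\epsilon X\in\cst$, so $X$ commutes with every $H_t$. The difference $O_t-X$ therefore obeys $\partial_t(O_t-X)=-i[H_t,O_t-X]$, so $\|O_t-X\|_2$ is conserved in time. Since this norm tends to $0$, it must vanish identically. Evaluating at $t=0$ gives $O=X\in\cst$, a contradiction. Thus $\rho^*_t$ is a steady state that does not converge to any time-independent density matrix.

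\emph{2.\ $\to$ 1.} Assume $\cint=\cst$, and let $\mathcal{P}$ be the Hilbert--Schmidt orthogonal projection onto $\cint$. Because each $\tilde L_{m,t}$ is Hermitian, $\tilde{\mathcal{L}}_t$ is self-adjoint with respect to the Hilbert--Schmidt inner product and annihilates $\cint$. Consequently the interaction-picture propagator $\tilde{\mathcal{V}}_{s,t}$ fixes $\cint$ and leaves $\cint^\perp$ invariant. The key claim is
\begin{equation}
\lim_{t\to\infty}\opn\tilde{\mathcal{V}}_{0,t}|_{\cint^\perp}\opn_2=0 .
\end{equation}
This is exactly the proof of Theorem~\ref{thm:main1}, with the traceless subspace replaced by $\cint^\perp$ and the norm $\opn\cdot\opn_2^\prime$ replaced by the restricted norm. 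Granting the claim, $\tilde\rho_t\to\mathcal{P}(\tilde\rho_0)$. Since $\mathcal{P}(\tilde\rho_0)\in\cst$, it commutes with every $H_t$, and hence with $U_t$. Therefore $\rho_t\to\mathcal{P}(\tilde\rho_0)$, a time-independent limit, which contradicts \emph{2}.

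The main obstacle is the analogue of Lemmas~\ref{lem:lambda0_bound} and \ref{lem:lambdan_bound} on $\cint^\perp$, because Condition~\ref{con:recurrent} concerns $\mathcal{L}_t$ and not $\tilde{\mathcal{L}}_t$. I would move to the Schr\"odinger picture. There the relevant subspace is $U_t\cint^\perp U_t^\dagger$; under $\cint=\cst$ this equals $\cst^\perp$, a time-independent subspace. This is precisely where the hypothesis is used. The Schr\"odinger propagator $\mathcal{V}_{s,t}$ preserves $\cst^\perp$, because its adjoint fixes $\cst$, and it is a contraction there by \eqref{eq:purity_monotone}. For a strict contraction on some window $[t_0,t_0+\delta]$, I would argue as in Lemma~\ref{lem:lambda0_bound}. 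If $\|\mathcal{V}_{t_0,t_0+\delta}(A)\|_2=\|A\|_2$ for some nonzero $A\in\cst^\perp$, then the purity-loss integrand forces $[L_{m,t},A_t]=0$ throughout the window. Hence $A_t$ evolves unitarily, and $\tilde A:=U_t^\dagger A_tU_t$ is constant and commutes with $\tilde L_{m,t}$ on the window. To upgrade this to $\tilde A\in\cint$ at all times, I would choose the window large enough, for instance by compactness, running over a decreasing family of finite-window commutants, which stabilizes in finite dimension. That gives $\tilde A\in\cint$, so $A_{t_0}\in U_{t_0}\cint U_{t_0}^\dagger=\cst$, contradicting $A\perp\cst$. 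The transfer to the other windows $[t_n,t_n+\delta]$ then follows verbatim from Lemma~\ref{lem:lambdan_bound}. That lemma only compares Schr\"odinger propagators on a fixed subspace, and the fixed subspace $\cst^\perp$ makes the comparison legitimate.
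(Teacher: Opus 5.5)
Your proof is correct. The direction (\emph{1}.\ $\to$ \emph{2}.) is essentially the paper's. The paper also takes a density matrix in $\cint\setminus\cst$ and uses Theorem~\ref{thm:indep} to put any time-independent limit $\sigma^*$ into $\cst$. It then uses unitary invariance of the Hilbert--Schmidt norm to force $\sigma^*=\rho^*$, and your conservation of $\|O_t-X\|_2$ is the same observation.

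For (\emph{2}.\ $\to$ \emph{1}.) your route is genuinely different. The paper derives it from Lemma~\ref{lem:cint}. That lemma is proved by decomposing $\mathcal{H}=\bigoplus_\alpha\mathcal{H}^{(\mathcal{A})}_\alpha\otimes\mathcal{H}^{(\mathcal{C})}_\alpha$ with respect to $\aint$ and $\cint$, applying Theorem~\ref{thm:main1} blockwise via Schur's lemma, and adapting the argument to the off-diagonal blocks. This buys a stronger statement: every asymptotic state has the form $U_t\tilde\rho^*U_t^\dagger$ with $\tilde\rho^*\in\cint$, even when $\cint\neq\cst$. However, the block equations are written in the interaction picture. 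Recurrence of $\tilde{\mathcal{L}}_t$ is not part of Condition~\ref{con:recurrent}, and the paper does not check it.

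You instead rerun the contraction argument directly on the fixed subspace $\cst^\perp$ in the Schr\"odinger picture. This gives only what this theorem needs, but it relies solely on the recurrence of $\mathcal{L}_t$ that is actually assumed. Your observation that $U_t\cint^\perp U_t^\dagger=\cst^\perp$ under $\cint=\cst$ is exactly what makes this possible. Your stabilization of the finite-window commutants also makes explicit the compactness step that the paper's proof of Lemma~\ref{lem:lambda0_bound} glosses over.

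One detail should be stated explicitly: take the windows to start at $t_0=0$. Windows $[t_0,t_0+T]$ with $t_0>0$ stabilize to the commutant over $[t_0,\infty)$, which could a priori be larger than $\cint$. Lemma~\ref{lem:lambda0_bound} only needs some $t_0$, so this choice costs nothing.
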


The proof is given in Appendix~\ref{sec:proof_thm_time_dependent}.
There, we use the following Lemma proven in Appendix~\ref{sec:proof_thm_cint}, which follows from Theorem~\ref{thm:main1}.

\begin{lem}
\label{lem:cint}
Let $\rho^*_t$ be a density matrix that may depend on time $t$. Under Condition~\ref{con:recurrent}, the following statements are equivalent:
\begin{enumerate}
    \item $\lim_{t\to \infty} \rho_t = \rho^*_t$ for some initial condition $\rho_0$.
    \item $\rho^*_t=U_t \tilde{\rho}^*U_t^\dagger$ for some time-independent density matrix $\tilde{\rho}^*\in \cint$.
\end{enumerate}
\end{lem}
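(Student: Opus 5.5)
The plan is to work in the interaction picture \eqref{eq:transform_int}. There the evolution $\tilde{\mathcal V}_{s,t}$ is generated by $\tilde{\mathcal L}_t$, which has only Hermitian jump operators and no Hamiltonian. Two structural facts drive everything. First, $\cint$ is the commutant of a self-adjoint family, so it is a unital $*$-algebra. Second, every $W\in\cint$ satisfies $\tilde{\mathcal L}_t(W)=0$ for all $t$, so $\tilde{\mathcal V}_{0,t}(W)=W$.

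\emph{2.\ $\to$ 1.} This direction is immediate. Since $U_0=\mathbb I$, take $\rho_0=\tilde\rho^*$. The interaction-picture solution is then constant, $\tilde\rho_t=\tilde\rho^*$, so $\rho_t=U_t\tilde\rho^*U_t^\dagger$ exactly, and the limit holds trivially.

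\emph{1.\ $\to$ 2.} Let $P$ be the Hilbert--Schmidt orthogonal projection onto $\cint$. Because $\cint$ is a unital $*$-subalgebra, $P$ is the trace-preserving conditional expectation onto it, which is completely positive. Hence $Q:=P(\rho_0)$ is a density matrix in $\cint$. The goal is to show
\[
\lim_{t\to\infty}\bigl(\rho_t-U_tQU_t^\dagger\bigr)=0 .
\]
Since limits are unique, this forces $\rho^*_t-U_tQU_t^\dagger\to0$. One can then take $\tilde\rho^*=Q$, which is legitimate up to this asymptotic identification; this is exactly the sense of ``$\lim\rho_t=\rho^*_t$'' used throughout the paper.

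Set $Y_t:=\rho_t-U_tQU_t^\dagger$. By linearity $Y_t$ solves \eqref{eq:GKSL}, and it is Hermitian. For every $W\in\cint$, the quantity $\Tr[Y_tU_tWU_t^\dagger]=\Tr[\tilde Y_tW]$ is conserved, because $\tilde{\mathcal L}_t$ is Hilbert--Schmidt self-adjoint and annihilates $W$. This quantity vanishes at $t=0$, so $\tilde Y_t\perp\cint$ for all $t$. By \eqref{eq:purity_monotone},
\[
\partial_t\|Y_t\|_2^2=-\sum_m\|[L_{m,t},Y_t]\|_2^2\le0 .
\]
Thus $\|Y_t\|_2\downarrow c\ge0$, and the dissipation integral is finite. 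It remains to show $c=0$.

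Suppose instead $c>0$. Apply Condition~\ref{con:recurrent} with $t_0=0$ and a sequence $\delta_k\to\infty$, $\varepsilon_k\to0$, and diagonalize to obtain times $t_n\to\infty$. By compactness, pass to a subsequence along which $Y_{t_n}\to Y^*$ and $U_{t_n}\to V$ for some unitary $V$. The Dyson-series estimate behind Lemma~\ref{lem:lambdan_bound} shows that the evolution on $[t_n,t_n+s]$ converges, uniformly for $s$ in compacts, to the evolution on $[0,s]$; the same holds for the unitary parts generated by $H$. Consequently the solution $Z_s$ starting from $Z_0=Y^*$ satisfies $\|Z_s\|_2=c$ for all $s\ge0$: its norm is a limit of $\|Y_{t_n+s}\|_2$, and these tend to $c$. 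Hence $[L_{m,s},Z_s]=0$ for all $s$, so $Z_s=U_sY^*U_s^\dagger$ and $Y^*\in\cint$.

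The same limiting argument applies to $U_{t_n}WU_{t_n}^\dagger$ with $W\in\cint$. Each of these is a dissipation-free solution from time $t_n$, so their limit $VWV^\dagger$ lies in $\cint$. Hence $V\cint V^\dagger=\cint$, since the dimensions agree. On the other hand, $Y_{t_n}\perp U_{t_n}\cint U_{t_n}^\dagger$ passes to the limit as $Y^*\perp V\cint V^\dagger=\cint$. Together with $Y^*\in\cint$ this gives $Y^*=0$, contradicting $\|Y^*\|_2=c>0$.

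I expect the main obstacle to be this compactness step. One must justify that recurrence of $\mathcal L_t$ alone, without recurrence of $U_t$, gives convergence of the shifted dissipative dynamics and of the shifted unitary dynamics on arbitrarily long windows. The key point to handle is identifying the limit frame $V$ as an automorphism of $\cint$, which is what closes the argument.
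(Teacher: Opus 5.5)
Your proof is correct and follows a genuinely different route from the paper's. The paper uses the bicommutant decomposition $\mathcal{H}=\bigoplus_\alpha\mathcal{H}^{(\mathcal{A})}_\alpha\otimes\mathcal{H}^{(\mathcal{C})}_\alpha$ of $\aint$ and $\cint$. The $\mathcal{C}$-factors are frozen, and each $\mathcal{A}$-block obeys a sector equation \eqref{eq:GKSL_int_sector_A} whose commutant is trivial by Schur's lemma. Theorem~\ref{thm:main1} (diagonal blocks) or a variant of its proof (off-diagonal blocks) then drives these blocks to multiples of $\mathbb{I}_{d_\alpha}$, resp.\ to $0$; with the correct weights, the limit is exactly your $Q=P(\rho_0)$. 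You bypass this structure theory: conserved overlaps keep $Y_t$ orthogonal to the moving algebra $U_t\cint U_t^\dagger$, and compactness along recurrence windows of growing length forces the limiting norm $c$ to vanish. What this buys is an explicit treatment of the rotating frame. The paper's sector equations live in the interaction picture, with jumps built from $U_t^\dagger L_{m,t}U_t$, and Condition~\ref{con:recurrent} is never verified for them when Theorem~\ref{thm:main1} is invoked. Lemma~\ref{lem:lambdan_bound} controls $\mathcal{V}_{t_n,t_n+\delta}$, but not $\tilde{\mathcal{V}}_{t_n,t_n+\delta}=\mathcal{U}_{t_n+\delta}^{-1}\mathcal{V}_{t_n,t_n+\delta}\,\mathcal{U}_{t_n}$ with $\mathcal{U}_t=U_t(\cdot)U_t^\dagger$, whose frames $U_{t_n}$ are uncontrolled. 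Your observation that every limit frame $V$ satisfies $V\cint V^\dagger=\cint$ is precisely what makes recurrence usable there. The paper's route, in exchange, exhibits the sector structure of the asymptotic states. Two small remarks. First, the obstacle you flag is already resolved by your own argument. Both $Y^*\in\cint$ and $VWV^\dagger\in\cint$ use only $\mathcal{V}_{t_n,t_n+s}\to\mathcal{V}_{0,s}$ (the Gr\"onwall bound of Lemma~\ref{lem:lambdan_bound}, valid for $\opn\cdot\opn_2$) plus constancy of the norm of the limiting solution. You never need the shifted unitary propagators; they would converge anyway, since for Hermitian jumps $-i[H_t,\cdot]=\tfrac12(\mathcal{L}_t-\mathcal{L}_t^\dagger)$ is the Hilbert--Schmidt anti-self-adjoint part of $\mathcal{L}_t$. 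Second, like the paper, you prove 1$\to$2 only up to asymptotic equivalence. That is the only reading under which it can hold, since condition 1 determines $\rho^*_t$ only as $t\to\infty$.
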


From Theorems \ref{thm:indep}, and \ref{thm:time_dependent}, we can rigorously complete the characterization of the four steady-state classes (i)-(iv) introduced in Sec.~\ref{sec:classification} via
the two strong symmetries $\cst$ and $\cint$.
This is summarized as Fig.~\ref{fig:schematic} (b).

\subsection{Applications}
In this section, we discuss applications of Theorems \ref{thm:indep} and \ref{thm:time_dependent}. We first see that strong dynamical symmetry~\cite{buca_NonstationaryCoherentQuantum_2019,sanchezmunoz_SymmetriesConservationLaws_2019,dogra_DissipationinducedStructuralInstability_2019,tindall_QuantumSynchronisationEnabled_2020} and Floquet dynamical symmetry~\cite{medenjak_RigorousBoundsDynamical_2020,chinzei_TimeCrystalsProtected_2020}, which are proposed as mechanisms for the long-time oscillations, can be understood from our framework based on $\cst$ and $\cint$. 
Moreover, we propose novel systems where an asymptotic time-dependent steady state due to quasiperiodic driving is protected by the symmetry structure $\cint\setminus\cst\neq \emptyset$.

\subsubsection{Strong dynamical symmetry}
\label{sec:strong_dynamical_symmetry}
To see a situation such that  $\cint\setminus\cst\neq \emptyset$ holds, we first consider a time-independent Liouvillian $\mathcal{L}$ and focus on  a symmetry $A$ that satisfies the following condition:
\begin{equation}
[H,A]=\Omega A \text{ and } [L_m,A]=0\ \forall m
\label{eq:dynamical_symmetry_static}
\end{equation}
with non-zero real $\Omega$. Such a symmetry is often called a strong dynamical symmetry~\cite{buca_NonstationaryCoherentQuantum_2019}, and with this type of symmetry, the Liouvillian has purely imaginary eigenvalues that lead to oscillating coherence. In such a case, $A,A^\dagger\notin \cst$ by definition, but $A,A^\dagger\in \cint$, because
\begin{align}
    U_t AU_t^\dagger=e^{-i \Omega t}A
\end{align}
and therefore
\begin{align}
    [\tilde{L}_{m,t},A]
    &=[U_t^\dagger L_m U_t,A]\\
    &=U_t^\dagger (L_m U_t AU_t^\dagger-U_t AU_t^\dagger L_m)U_t\\
    &=e^{-i \Omega t}U_t^\dagger [L_m,A]U_t=0,\\
    [\tilde{L}_{m,t},A^\dagger]&=-[\tilde{L}_{m,t},A]^\dagger=0.
\end{align}
Also, $A^\dagger A,\,AA^\dagger\in \cst$, because 
\begin{equation}
    [H,A^\dagger]=-\Omega A^\dagger \text{ and } [L_m,A^\dagger ]=0\ \forall m
\end{equation}
and thus
\begin{equation}
    [H,A^\dagger A]=[H,A A^\dagger]=[L_m,A^\dagger A]=[L_m, AA^\dagger]=0\ \forall m.
\end{equation}
Note that either $A^\dagger A$ or $AA^\dagger$ is not an identity operator, that is, $A$ is not unitary. To see this, let $\ket{\psi}$ be an eigenstate of $H$ with eigenvalue $E$. Then $A^n \ket{\psi}$ is an eigenstate of $H$ with eigenvalue $E+n\Omega$. If $A$ were a unitary operator, it would be invertible, so $A^n\ket{\psi}$ would not vanish for any $n$. Thus, $H$ would have arbitrarily large eigenvalues, but this contradicts the boundedness of $H$. Therefore, $A^\dagger A$ or $AA^\dagger$ is not an identity operator. Therefore, strong dynamical symmetry~\eqref{eq:dynamical_symmetry_static} leads to $\cst\neq\{c\mathbb{I}\mid c\in \mathbb{C}\}$ and  $\cint\setminus\cst\neq \emptyset$. From Fig.~\ref{fig:schematic}, this implies class (iii), i.e., multiple time-independent and time-dependent steady states in the sense of {Definition~\ref{def:steady_state}}. 

\begin{ex}[\cite{buca_NonstationaryCoherentQuantum_2019}]
\label{ex:buca}
In Ref.~\cite{buca_NonstationaryCoherentQuantum_2019}, the authors proposed a dissipative Hubbard model with strong dynamical symmetry, whose Hamiltonian and jump operators are given by
\begin{align}
    H=
    &-\tau\sum_{\langle j,j^\prime\rangle}\sum_{\sigma=\uparrow,\downarrow}\left(c_{j,\sigma}^\dagger c_{j^\prime,\sigma}+\text{h.c.}\right)\nonumber\\
    &+\sum_{j}\left[Un_{j,\uparrow}n_{j,\downarrow}+\mu_j n_j+\frac{B}{2}(n_{j,\uparrow}-n_{j,\downarrow})\right]\\
    L_j=&\sqrt{\kappa_j}n_j,
\end{align}
where $j,j^\prime$ denote sites of a finite bipartite lattice, $c_{j,\sigma}^\dagger$ and $c_{j,\sigma}$ denote creation and annihilation operators of fermion on site $j$ with spin $\sigma=\uparrow,\downarrow$. The particle number operator for each spin is $n_{j,\sigma}=c_{j,\sigma}^\dagger c_{j,\sigma}$, the total particle number is  $n_{j}=n_{j,\uparrow}+n_{j,\downarrow}$, and $\tau,U,\mu_j,B,\kappa_j$ are real constants that satisfy $\tau,B\neq0$ and $\kappa_j>0$. 

The dynamical symmetry of the model is given by 
\begin{align}
S^+=\sum_{j}c_{j,\uparrow}^\dagger c_{j,\downarrow}&,\ S^-=\sum_{j}c_{j,\downarrow}^\dagger c_{j,\uparrow} \\
[H,S^\pm]=\pm BS^\pm&,\ [L_j,S^\pm]=0.
\label{eq:hubbard_indep_dynamical}
\end{align}
Also, the model has $U(1)\times U(1)$ strong symmetry:
\begin{align}
N=\sum_{j}n_j&,\ S^z=\sum_{j}(n_{j,\uparrow}-n_{j,\downarrow}), \\
[H,N]=[L_j,N]&=[H,S^z]=[L_j,S^z]=0.
\label{eq:hubbard_indep_static}
\end{align}
From the above, we identify two strong symmetries $\cst$ and $\cint$. First, from \eqref{eq:hubbard_indep_dynamical}, we have $\langle\mathbb{I},S^+S^-,S^-S^+\rangle\subseteq\cst$ and $\langle\mathbb{I},S^+,S^-\rangle\subseteq\cint$. From the commutation relation $[S^+,S^-]=2S^z$, we find
$\langle\mathbb{I},S^+S^-,S^-S^+\rangle=\langle\mathbb{I},S^+S^-,S^z\rangle$, and $\langle\mathbb{I},S^+,S^-\rangle=\langle\mathbb{I},S^+,S^-,S^z\rangle$.
By considering \eqref{eq:hubbard_indep_static}, we have
\begin{align}
    \langle\mathbb{I},N,S^+S^-,S^z\rangle&\subseteq\cst,\\
    \langle\mathbb{I},N,S^+,S^-,S^z\rangle&\subseteq\cint,\\
    \operatorname{span}\{(S^+)^m (S^-)^n\langle \mathbb{I},N,S^z\rangle\mid m\neq n\}&\subseteq\cint\setminus\cst.
\end{align}
Owing to $\cint\setminus\cst\neq0$, there is a time-dependent steady state in our Definition~\ref{def:steady_state}, which is consistent with the existence of oscillating coherence protected by strong dynamical symmetry discussed in~\cite{buca_NonstationaryCoherentQuantum_2019}.
Note that since $\cst\neq\{c\mathbb{I}\mid c\in \mathbb{C}\}$, there exist multiple time-independent steady states as well.
In fact, multiple time-dependent steady states always accompany multiple time-independent ones for time-independent GKSL equations~\cite{wolf_QuantumChannelsOperations_2012}, leading to class (iii) in our classification.
\end{ex}

\subsubsection{Floquet dynamical symmetry}
\label{sec:floquet_dynamical_symmetry}
The counterpart of the strong dynamical symmetry in a time-periodic Liouvillian is called the Floquet dynamical symmetry~\cite{chinzei_TimeCrystalsProtected_2020,medenjak_RigorousBoundsDynamical_2020}. For a time-periodic Hamiltonian $H_{t}=H_{t+T}$ and time-indepedent jump operators $\{L_m\}$, $A$ is called Floquet dynamical symmetry if
\begin{align}
    U_T A U_T^\dagger&=e^{-i\Omega T}A\quad (\Omega\neq0)\\
    [L_m,U_t AU_t^\dagger]&=0 \quad \forall m,t,
    \label{eq:floquet dynamical symmetry}
\end{align}
where $U_t=\mathcal{T}e^{-i\int_0^t H_{t^{\prime}}dt^\prime}$ (see also \eqref{eq:transform_int}). Then, $A,A^\dagger\in \cint$, bacause
\begin{align}
    [\tilde{L}_{m,t},A]
    &=[U_t^\dagger L_m U_t,A]\\
    &=U_t^\dagger [L_m,U_t AU_t^\dag]U_t=0,\\
    [\tilde{L}_{m,t},A^\dagger]&=-[\tilde{L}_{m,t},A]^\dagger=0.
\end{align}
by \eqref{eq:floquet dynamical symmetry}. However, $A,A^\dagger\notin \cst$, bacause if so, $\Omega=0$. Moreover, unlike the strong dynamical symmetry in the time-independent case, $A^\dagger A,\, AA^\dagger \notin \cst$ in general. Thus, it is possible that $\cst=\{c\mathbb{I}\mid c\in \mathbb{C}\}$ and  $\cint\setminus\cst\neq \emptyset$, which implies class (iv), i.e., time-dependent steady states without non-trivial time-independent ones in {Definition~\ref{def:steady_state}}.

\begin{ex}
\label{ex:chinzei}
In Ref.~\cite{chinzei_TimeCrystalsProtected_2020}, the authors proposed a dissipative Hubbard model with Floquet dynamical symmetry. They discuss the general Bose and Fermi-Hubbard model, but for simplicity, we focus on the Fermi-Hubbard model whose Hamiltonian and jump operator are given by
\begin{align}
    H_t=
    &-\tau\sum_{\langle j,j^\prime\rangle}\sum_{\sigma=\uparrow,\downarrow}\left(c_{j,\sigma}^\dagger c_{j^\prime,\sigma}+\text{h.c.}\right)\nonumber\\
    &+\sum_{j}\left[Un_{j,\uparrow}n_{j,\downarrow}+\mu_j n_j\right]\nonumber\\
    &+\frac{B}{2}\left[e^{-i\omega t}S^++e^{i\omega t}S^-\right]\\
    L_j=&\sqrt{\kappa_j}n_j,
\end{align}
where we used the notations in Example~\ref{ex:buca}. 
Note that we have slightly modified the model from that given in Ref.~\cite{chinzei_TimeCrystalsProtected_2020}; while they considered nearest-neighbor interaction $n_jn_{j^\prime}$, we instead consider a site-dependent potential $\mu_j n_j$ to align with Example~\ref{ex:buca}. The Floquet dynamical symmetry of the model is given by 
\begin{align}
[U_T,S^\pm_h]=e^{\pm i(\sqrt{\omega^2+B^2}) t}S^\pm_h&,\ [L_j,U_tS^\pm_h U_t^\dagger]=0\ \forall t,
\label{eq:hubbard_periodic_dynamical}
\end{align}
where $S^\pm_h$ denotes the spin raising and lowering operator along vector $(B,0,\omega)$~\footnote{Explicit form of $S^\pm_h$ is given by:
\begin{equation} S^+_h=\sum_{j}\sum_{\sigma,\tau=\uparrow,\downarrow}s_h^{\sigma\tau}c_{j,\sigma}^\dagger c_{j,\tau},\,S^-=(S^+_h)^\dagger
\end{equation}
where
\begin{align}
    -s_h^{\uparrow\uparrow}&=s_h^{\downarrow\downarrow}=\frac{\xi}{2\sqrt{1+\xi^2}}, \\
    s_h^{\uparrow\downarrow}&=\frac{1}{2}\left(\frac{1}{\sqrt{1+\xi^2}}+1\right),\\
    s_h^{\downarrow\uparrow}&=\frac{1}{2}\left(\frac{1}{\sqrt{1+\xi^2}}-1\right),
\end{align}
and $\xi=\frac{B}{\omega}$.
}.
Also, the model has $U(1)$ strong symmetry in the Schr\"odinger picture:
\begin{align}
[H_t,N]=[L_{j},N]=0\ \forall t.
\label{eq:hubbard_periodic_static}
\end{align}

From the above, we identify two strong symmetries $\cst$ and $\cint$. First, from \eqref{eq:hubbard_periodic_dynamical}, we have $\langle\mathbb{I},S^+_h,S^-_h\rangle\subseteq\cint$. From the commutation relation $[S^+_h,S^-_h]=S^z_h$, $\langle\mathbb{I},S^+_h,S^-_h\rangle=\langle\mathbb{I},S^+_h,S^-_h,S^z_h\rangle$, where $S^z_h$ denotes the spin $z$ operator along vector $(B,0,\omega)$. Note that $\langle\mathbb{I},S^+_h,S^-_h,S^z_h\rangle=R_h^\dagger\langle\mathbb{I},S^+,S^-,S^z\rangle R_h$, where $R_h$ denotes a spin rotation operator defined by $R_h=\operatorname{exp}[-i\theta_h S^y]$, where $\theta_h =\arctan(\frac{B}{\omega})$ and $S^y=\frac{1}{2i}(S^+-S^-)$. Since $R_h\in\langle\mathbb{I},S^+,S^-,S^z\rangle$, we have $\langle\mathbb{I},S^+_h,S^-_h,S^z_h\rangle=\langle\mathbb{I},S^+,S^-,S^z\rangle$. Thus, $\langle\mathbb{I},S^+,S^-,S^z\rangle\subseteq\cint$. However, $\langle\mathbb{I},S^+,S^-,S^z\rangle \cap \cst=\{c\mathbb{I}\mid c\in\mathbb{C}\}$, because the elements of $\cst$ should commute with $S^+$ and $S^-$, but such operators in $\langle\mathbb{I},S^+,S^-,S^z\rangle$ are limited to $\{c\mathbb{I}\mid c\in\mathbb{C}\}$.
Finally, from \eqref{eq:hubbard_periodic_static}, we have
\begin{align}
    \langle\mathbb{I},N\rangle&\subseteq\cst,\\
   \label{eq:chinzei_cst} \langle\mathbb{I},N,S^+,S^-,S^z\rangle&\subseteq\cint,\\
    \label{eq:chinzei_cint} 
    \operatorname{span}\{(S^+)^m (S^-)^n\langle \mathbb{I},N\rangle\mid mn\neq0\}&\subseteq\cint\setminus\cst.
\end{align}
Since $\cint\setminus\cst\neq0$, there is a time-dependent steady state in our Definition~\ref{def:steady_state}, which is consistent with the existence of coherent oscillation protected by Floquet dynamical symmetry discussed in~\cite{chinzei_TimeCrystalsProtected_2020}. 
We can also show that the equality in~\eqref{eq:chinzei_cst} holds when $\tau,\kappa_j,B,\omega \neq 0$ (see Appendix~\ref{sec:proof_hubbard_commutant} for the proof). This implies that, if the particle number is fixed, the Liouvillian admits time-dependent steady states but no non-trivial time-independent ones, corresponding to class (iv). This is in contrast to Example~\ref{ex:buca}, where $S^+S^-$ was a conserved quantity and multiple time-independent steady states also exist even after the particle number is fixed.

The results of the numerical simulations performed using the fourth-order Runge–Kutta method for this example are shown in Figs.~\ref{fig:hubbard_quasi}(a)--(c).
Figure~\ref{fig:hubbard_quasi}(a) shows the trajectory of the expectation value of
$
S^x_1=\frac{1}{2}(c_{1,\uparrow}^\dagger c_{1,\downarrow}+c_{1,\downarrow}^\dagger c_{1,\uparrow})
$ and 
$
S^y_1=\frac{1}{2i}(c_{1,\uparrow}^\dagger c_{1,\downarrow}-c_{1,\downarrow}^\dagger c_{1,\uparrow}),
$ 
while Figure~\ref{fig:hubbard_quasi}(b) displays the time evolution of $S^y_1$. From these results, a time-dependent steady state is clearly observed, which is consistent with the discussion above.
Figure~\ref{fig:hubbard_quasi}(c) plots the absolute value of the Fourier transform of Fig.~\ref{fig:hubbard_quasi}(b).
In this case, three distinct peaks are observed, indicating that the time evolution of $S^y_1$ is a superposition of oscillations with three different frequencies.
This implies that the steady state is quasiperiodic.
These results are consistent with those observed in Ref.~\cite{chinzei_TimeCrystalsProtected_2020}.
\end{ex}

\begin{figure*}
 \centering \includegraphics[width=\linewidth]{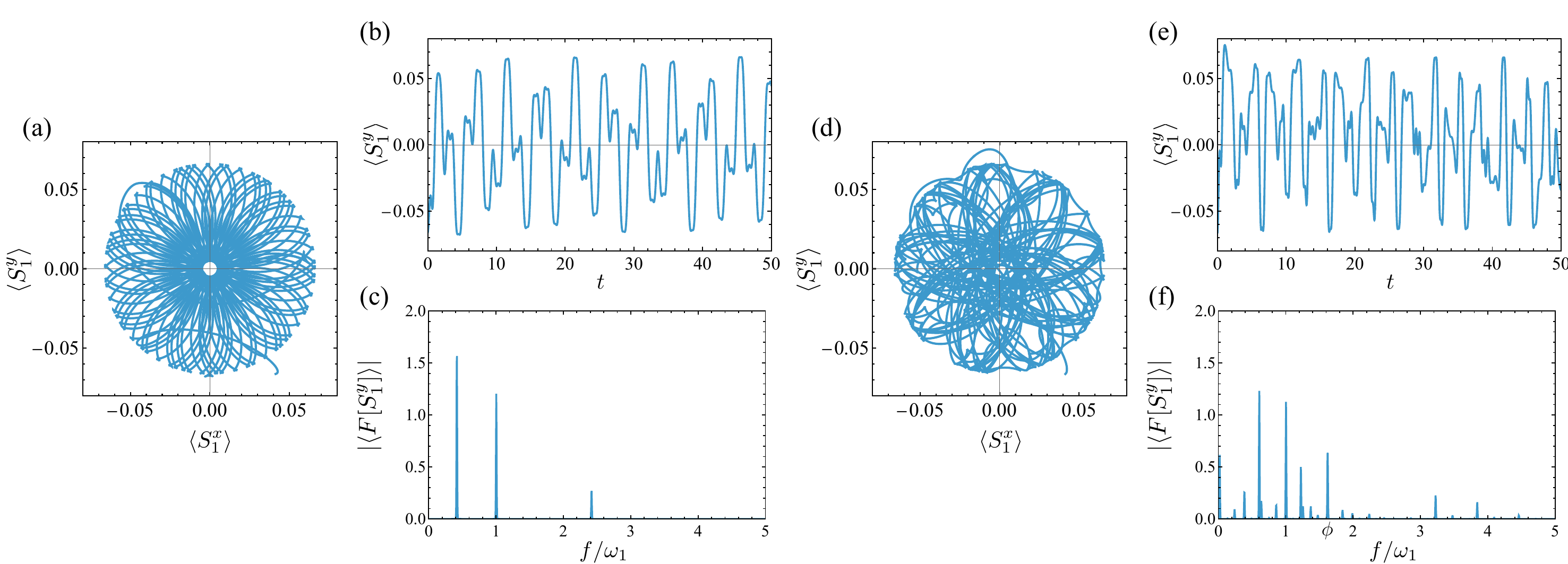}
  \caption{
Numerical simulations of the dynamics of a dissipative Hubbard model under periodic (a--c) and quasiperiodic (d--f) driving.
(a,d) Trajectories of the time evolution of the expectation values $\langle S_1^x\rangle$ and $\langle S_1^y\rangle$.
(b,e) Time evolution of $\langle S_1^y\rangle$.
(c,f) Magnitude of the Fourier spectra corresponding to (b) and (e).
The dynamics under the quasiperiodic driving leads to a much larger number of peaks than that under the periodic driving.
The Fourier analysis is performed using data in the time interval $t\in[0,500]$, with a window function
$\exp\!\left[-(t-300)^2/100^2\right]$.
The parameters are $L=4$, $J=U=\kappa_j=1$, and
$\varepsilon=\{-0.786,\,0.657,\,-0.133,\,-0.176\}$.
For periodic (quasiperiodic) driving, we set $B=\omega=\pi$ ($B_1=B_2=\omega_1=\pi$, and $\omega_2=\phi\pi$),
where $\phi=\frac{1+\sqrt{5}}{2}$. The initial state is a random pure state with fixed particle number $N=4$.
}
\label{fig:hubbard_quasi}
\end{figure*}

\subsubsection{Time-quasiperiodic case}
\label{sec:time_quasiperiodic_symmetry}
In this section, we see applications of Theorems \ref{thm:indep} and \ref{thm:time_dependent} to quasiperiodic systems, which cannot be explained by the existing approaches. The simplest example is the following toy model:

\begin{ex}
\label{ex:rorationg_dephasing_2}
Consider the GKSL equation on $\mathcal{H}=\mathbb{C}^3$ defined through
\begin{align}
H=\begin{pmatrix}
\omega_1 & 0 & 0 \\
0 & 0 & 0 \\
0 & 0 & -\omega_2
\end{pmatrix},L_t=\begin{pmatrix}
0 & e^{-i  \omega_1t } & 0 \\
e^{i \omega_1t } & 0 & e^{-i \omega_2t } \\
0 & e^{i \omega_2t } & 0
\end{pmatrix}.
\end{align}
Where $\omega_1,\omega_2>0$ are real constants. This Hamiltonian and jump operator are in the form of Example~\ref{ex:multi_frequency_drives}. Thus, when $\omega_1/\omega_2$ is irrational, the Liouvillian $\mathcal{L}_t$ is not periodic in time but satisfies Condition~\ref{con:recurrent}. 

The strong symmetry in the Schr\"odinger picture is given by
\begin{equation}
    \cst=\{c\mathbb{I}\mid c\in \mathbb{C}\},
\end{equation}
but since
\begin{equation}
    \tilde{L}_t=T^x:=\begin{pmatrix}
0 & 1 & 0 \\
1 & 0 & 1 \\
0 & 1 & 0
\end{pmatrix},
\end{equation}
the strong symmetry in the interaction picture is given by 
\begin{equation}
    \cint=\langle\mathbb{I},T^x\rangle.
\end{equation}
From Theorems~\ref{thm:indep} and \ref{thm:time_dependent}, there are time-dependent steady states and no non-trivial time-independent steady states. The steady state is explicitly given by:
\begin{equation}
    \rho^*_t=\begin{pmatrix}
\frac{1}{3}(1-b) & ae^{-i\omega_1 t} & be^{-i(\omega_1+\omega_2) t} \\
ae^{i\omega_1 t} & \frac{1}{3}(1+2b) & ae^{-i\omega_2 t} \\
be^{i(\omega_1+\omega_2) t} & ae^{i\omega_2 t} & \frac{1}{3}(1-b)
\end{pmatrix},
\end{equation}
where $a,b$ are real constants that depend on the initial state. 
\end{ex}

It is also possible to construct an example in a many-body system with a realistic setup:
\begin{ex}
We consider the Hubbard model with two-frequency drives, which can be realized with ultracold atoms on an optical lattice with two circularly polarized lasers:
\label{ex:hubbard_quasi}
    \begin{align}
    H_t=
    &-\tau\sum_{\langle j,j^\prime\rangle}\sum_{\sigma=\uparrow,\downarrow}\left(c_{j,\sigma}^\dagger c_{j^\prime,\sigma}+\text{h.c.}\right)\nonumber\\
    &+\sum_{j}\left[Un_{j,\uparrow}n_{j,\downarrow}+\mu_j n_j\right]\nonumber\\
    &+\frac{B_1}{2}\left[e^{-i\omega_1 t}S^++e^{i\omega_1 t}S^-\right]\nonumber\\
    &+\frac{B_2}{2}\left[e^{-i\omega_2 t}S^++e^{i\omega_2 t}S^-\right],\\
    L_j=&\sqrt{\kappa_j}n_j.
\end{align}
where we use the notations in Examples~\ref{ex:buca} and \ref{ex:chinzei}. When $B_1,B_2\neq0$ and $\omega_2/\omega_1$ is irrational, the Liouvillian $\mathcal{L}_t$ is not periodic in time but satisfies Condition~\ref{con:recurrent}. 

Despite $\mathcal{L}_t$ being non-periodic, we find that $\cst$ and $\cint$ are the same as in Example~\ref{ex:chinzei}:
\begin{align}
    \langle\mathbb{I},N\rangle&\subseteq\cst,\\
    \langle\mathbb{I},N,S^+,S^-,S^z\rangle&\subseteq\cint,\\
    \operatorname{span}\{(S^+)^m (S^-)^n\langle \mathbb{I},N\rangle\mid mn\neq0\}&\subseteq\cint\setminus\cst.
\end{align}
This is because, 
\begin{equation}
    [L_j,A]=0
\end{equation}
for any $A\in\langle\mathbb{I},N,S^+,S^-,S^z\rangle$, and $\langle\mathbb{I},N,S^+,S^-,S^z\rangle$ is invariant under the adjoint operation of $H_t$, i.e.,
\begin{equation}
    [H_t,\langle\mathbb{I},N,S^+,S^-,S^z\rangle]=\langle\mathbb{I},N,S^+,S^-,S^z\rangle
\end{equation}
Therefore,
\begin{align}
    [\tilde{L}_{j,t},A]=U_t^\dagger [L_{j,t},U_t A U_t^\dagger]U_t=0
\end{align}
for any $A\in\langle\mathbb{I},N,S^+,S^-,S^z\rangle$. In contrast, like Example~\ref{ex:chinzei}, $\langle\mathbb{I},S^+,S^-,S^z\rangle \cap \cst=\{c\mathbb{I}\mid c\in\mathbb{C}\}$. 

Since $\cint\setminus\cst\neq0$, there is a time-dependent steady state in our Definition~\ref{def:steady_state}. 
We can also show that the equality in~\eqref{eq:chinzei_cst} holds when $\tau,\kappa_j,B_1,B_2,\omega_1,\omega_2 \neq 0$ (see Appendix~\ref{sec:proof_hubbard_commutant} for the proof). This implies that, if the particle number is fixed, the Liouvillian admits time-dependent steady states but no non-trivial time-independent ones, corresponding to class (iv).

Since the existing methods based on strong dynamical symmetry and Floquet dynamical symmetry cannot be applied to quasiperiodic systems, this example illustrates the advantage of our approach using $\cst$ and $\cint$.

Figures~\ref{fig:hubbard_quasi}(d)--(f) summarize the numerical results for the present dynamics.
In Fig.~\ref{fig:hubbard_quasi}(d), we plot the trajectory formed by the expectation values of
$S^x_1$ and $S^y_1$, whereas Fig.~\ref{fig:hubbard_quasi}(e) shows the temporal behavior of $S^y_1$.
These plots reveal the emergence of a steady regime with explicit time dependence, in agreement with theoretical considerations based on symmetries.
Figure~\ref{fig:hubbard_quasi}(f) presents the magnitude of the Fourier spectrum corresponding to Fig.~\ref{fig:hubbard_quasi}(e).
In contrast to the previous case, a large number of peaks are observed, indicating that the evolution of the density matrix involves a superposition of oscillatory components with many distinct frequencies.

\end{ex}

\section{Conclusion and outlook}
\label{sec:conclusion}
We have established a general framework for rigorously classifying steady states of time-recurrent GKSL equations with Hermitian jump operators.
Our key idea is to distinguish two types of strong symmetry: strong symmetry in the Schr\"odinger picture $\cst$ and strong symmetry in the interaction picture $\cint$,  where a general hierarchy $\cst\subseteq\cint$ exists.

We have first shown criteria for the uniqueness of steady states.
While a straightforward generalization of strong symmetry in the time-independent GKSL equations is $\cst$, the condition $\cst=\{c\mathbb{I}\mid c\in\mathbb{C}\}$ does not guarantee the uniqueness of steady states. 
In contrast, we have proven that $\cint=\{c\mathbb{I}\mid c\in\mathbb{C}\}$ provides the equivalent condition for the unique steady state of time-recurrent GKSL equations.
Moreover, another practical criterion of the uniqueness is given in terms of the algebra $\mathcal{A}_{t}^{\mathrm{ad}}$ of the GKSL Liouvillian. When the Hamiltonian and the jump operators admit the Taylor expansion, this criterion gives a sufficient and necessary condition for the uniqueness (Fig.~\ref{fig:schematic}(a)). We have also demonstrated the application of the criterion using examples of two-level systems and a quantum spin chain, which could not be addressed by the existing methods.

Then, we have elucidated that the two distinct strong symmetries, $\cst$ and $\cint$, can completely characterize the steady-state classification given in Fig.~\ref{fig:schematic}(b).
More concretely, the existence of a non-trivial element in $\cst$ is equivalent to the existence of a non-trivial \textit{time-independent} steady state.
Moreover, the existence of an element in $\cint\setminus\cst$ is equivalent to the existence of a non-trivial \textit{time-dependent} steady state. 
We have seen that strong dynamical symmetry for time-independent GKSL equations and Floquet dynamical symmetry for time-periodic GKSL equations, which are known mechanisms to protect such a time-dependent steady state, can be understood from $\cint\setminus\cst\neq0$. Furthermore, we have constructed novel time-quasiperiodic GKSL equations, where the asymptotic time-dependent dynamics can be understood by  $\cint\setminus\cst\neq0$ but not by the conventionally known strong symmetries. The examples include, e.g., the Hubbard model with multi-frequency drives, which may be realized with ultracold atoms in optical lattices.

This work will stimulate many intriguing future questions.
One important direction is to extend the framework to cases where the jump operators are non-Hermitian. 
In such cases, one may encounter interesting scenarios, e.g., where the steady state is unique yet exhibits explicit time dependence.
Such extensions introduce significant challenges: in the proof of Theorem~\ref{thm:main1}, it was essential that the purity of the density matrix monotonically decreases in time, which is not the case for non-Hermitian jump operators. Therefore, to prove the convergence to unique steady states, alternative measures may be required to quantify the distance from the steady state.
We also suppose that the relations between the strong symmetries and the steady state are not simple in the case of non-Hermitian jump operators, as in the time-independent GKSL equations~\cite{wolf_QuantumChannelsOperations_2012,yoshida_UniquenessSteadyStates_2024,zhang_CriteriaDaviesIrreducibility_2024,hamazaki_IntroductionMonitoredQuantum_2026,fagnola_IrreducibilityQuantumMarkov_2025}.

Second, it is also important to see if our results are extended to more general time-dependent systems than time-recurrent ones. 
As we have seen in Example~\ref{ex:exp_decay_dephasing}, we can construct a counterexample of our results if arbitrary time-dependence is allowed.
While this example describes the dissipation whose strength is decaying in time, it is interesting to ask if our theorems still apply if we focus on non-decaying dissipation. 

Third, our findings on two distinct strong symmetries will open a new avenue for studying fundamental symmetries in GKSL equations. 
For instance, while we focused on strong symmetries, we can also define two distinct \textit{weak} symmetries $\cst_w$ and $\cint_w$ with respect to Schr\"odinger and interaction pictures for time-dependent GKSL equations. They may be defined from unitary operators $V$ that satisfy $V\mathcal{L}_t[\cdot]V^\dag =\mathcal{L}_t[V\cdot V^\dag]$ and $V\mathcal{\tilde{L}}_t[\cdot]V^\dag =\mathcal{\tilde{L}}_t[V\cdot V^\dag]$ for all $t$, respectively.
It is far from trivial how these weak symmetries affect dynamics of open quantum systems.
Furthermore, with such distinctions of (strong or weak) symmetries between Schr\"odinger and interaction pictures, the notion of the spontaneous symmetry breaking and phase transition in open quantum many-body systems~\cite{minganti_SpectralTheoryLiouvillians_2018,lieu_SymmetryBreakingError_2020,lee_QuantumCriticalityDecoherence_2023,sala_SpontaneousStrongSymmetry_2024,gu_SpontaneousSymmetryBreaking_2025,ziereis_StrongtoWeakSymmetryBreaking_2025,lessa_StrongtoWeakSpontaneousSymmetry_2025} could be enriched, which merits further studies.

Fourth, our work opens up the study of open quantum systems recurrently driven in time.
As we have already discussed with concrete examples of time-quasiperiodic driving (including many-body systems), such systems behave differently from time-independent and time-periodic open quantum dynamics.
Given the fact that time-periodic open quantum dynamics~\cite{prosen_NonequilibriumPhaseTransition_2011,vorberg_GeneralizedBoseEinsteinCondensation_2013,haddadfarshi_CompletelyPositiveApproximate_2015,dai_FloquetTheoremOpen_2016,fitzpatrick_ObservationDissipativePhase_2017,hartmann_AsymptoticFloquetStates_2017,ikeda_GeneralDescriptionNonequilibrium_2020,schnell_ThereFloquetLindbladian_2020,fedorov_PhotonTransportBoseHubbard_2021,schnell_HighfrequencyExpansionsTimeperiodic_2021,mizuta_BreakdownMarkovianityInteractions_2021,ikeda_NonequilibriumSteadyStates_2021,mori_FloquetStatesOpen_2023,chen_PeriodicallyDrivenOpen_2024} and time-quasiperiodic closed quantum dynamics~\cite{ho_SemiclassicalManymodeFloquet_1983,casati_AndersonTransitionOneDimensional_1989,martin_TopologicalFrequencyConversion_2017,malz_TopologicalTwodimensionalFloquet_2021,beatrez_ObservationCriticalPrethermal_2023,park_ControllableFloquetEdge_2023,roy_SingleMultifrequencyDriving_2024,he_ExperimentalRealizationDiscrete_2025,maity_FibonacciSteadyStates_2019,lapierre_FineStructureHeating_2020,pilatowsky-cameo_CompleteHilbertSpaceErgodicity_2023} are intensively investigated to engineer quantum states, recurrently (e.g., quasiperiodically) driven open quantum systems could provide another useful route to control quantum dynamics, for which our results will serve as a rigorous foundation.

Last but not least, beyond the context of open quantum systems, our theory may provide new insights into ergodic theory in mathematical physics.
Ergodic theory constitutes a fundamental framework underlying classical and quantum stochastic processes, the theory of matrix product states~\cite{fannes_FinitelyCorrelatedStates_1992b}, and statistical physics~\cite{deutsch_QuantumStatisticalMechanics_1991}.
Its relevance extends to a wide range of active research topics, including measurement-induced phase transitions~\cite{hamazaki_IntroductionMonitoredQuantum_2026} and deep thermalization~\cite{mark_MaximumEntropyPrinciple_2024}. The perspective developed in this work, such as the recurrent functions and symmetry classifications, may contribute more broadly to understanding how ergodic and non-ergodic behavior emerge in quantum dynamics.

\medskip

{\it Note added.}-- After completion of our work, we became aware of Ref.~\cite{anisur_DissipativeDickeTime_2026}, which discusses the emergence of time quasicrystals in an open Dicke model subjected to a quasiperiodic Fibonacci drive.

\begin{acknowledgments}
We thank Hal Tasaki for insightful comments. This work was supported by JST ERATO Grant No. JPMJER2302, Japan.
H.~Y.\ acknowledges support by the Special Postdoctoral Researchers Program at RIKEN. R.H. was supported by JSPS KAKENHI Grant No. JP24K16982.
\end{acknowledgments}

\appendix

\section{Proof that Example~\ref{ex:multi_frequency_drives} satisfies Condition~\ref{con:recurrent}}
\label{sec:multi_freq_proof}
Here, we prove that Example~\ref{ex:multi_frequency_drives} satisfies Condition~\ref{con:recurrent}.
The Liouvillian $\mathcal{L}_t$ for Example~\ref{ex:multi_frequency_drives} is given in the form 
$\mathcal{L}_t=\sum_{k=1}^K e^{i\lambda_k t}\mathcal{L}^{(k)}$,
where $K\in \mathbb{N}$ and $\lambda_k\in \mathbb{R}$ are constants and $\mathcal{L}^{(k)}$ are time-independent superoperators. 
Thus, $\mathcal{L}_t$ is continuous and bounded because
\begin{equation}
    \opn\mathcal{L}_t \opn_2 \leq \sum_{k=1}^K \opn\mathcal{L}^{(k)}\opn_2=:M
\end{equation}
by the triangle inequality. Furthermore, it is known that $\mathcal{L}_t$ is almost periodic, i.e., for every $\varepsilon>0$ there exists $l=l_\varepsilon$ such that for every $a\in\mathbb{R}$, there is a number $a<\tau<a+l$ with the property
\begin{equation}
    \opn\mathcal{L}_{\tau+t} - \mathcal{L}_{t} \opn_2<\varepsilon,\quad \forall t\in\mathbb{R}
\end{equation}
({Proposition 3.20} of \cite{corduneanu_AlmostPeriodicOscillations_2009}). Such numbers are called $\varepsilon$-translation numbers.
Thus, for any $\delta>0$, we can take the infinite sequence of $\varepsilon$-translation numbers $\{\tau_n\}_{n=1,2,\ldots}$
such that $\tau_{n+1}-\tau_n>\delta$. For any $t_0\geq0$, by defining $t_n=t_0+\tau_n$, we have
\begin{equation}
    \opn\mathcal{L}_{t_n+t} - \mathcal{L}_{t_0+t} \opn_2=\opn\mathcal{L}_{\tau_n+(t_0+t)} - \mathcal{L}_{(t_0+t)} \opn_2<\varepsilon
\end{equation}
for all $t\in [0,\delta]$. 

\section{Proof of Lemma~\ref{lem:lambda0_bound}}
\label{sec:main1_lemma1}
We prove Lemma~\ref{lem:lambda0_bound} by contrapositive. The contrapositive reads:
``If $\lambda_0=1$ for any $t_0\geq0$ and $\delta>0$, then $\cint\neq\{c\mathbb{I}\mid c\in \mathbb{C}\}$''

Assume $\lambda_0=1$ for any $t_0\geq0$ and $\delta>0$. Then, there exists a non-zero and traceless solution $\pi_t$ of the GKSL equation~\eqref{eq:GKSL} such that 
\begin{equation}
   \|\pi_t\|_2=\|\pi_0\|_2>0 \quad\forall t\geq0. 
\end{equation}
Next, we move to the interaction picture by~\eqref{eq:transform_int}. Note that the Hilbert-Schmidt norm is invariant under this, because $\|\tilde{\pi}_t\|_2=\|U_t^\dagger\pi_tU_t\|_2=\|\pi_t\|_2$. Thus, $\tilde{\pi}_t$ is a solution of the GKSL equation in the interaction picture~\eqref{eq:GKSL_int} such that 
\begin{equation}
    \|\tilde{\pi}_t\|_2=\|\tilde{\pi}_0\|_2>0 \quad\forall t\geq0
\end{equation}
Using the GKSL equation~\eqref{eq:GKSL_int}, we have
\begin{equation}
    \partial_t\|\tilde{\pi}_t\|_2=-\frac{\sum_{m=1}^M (\|[\tilde{L}_{m,t},\tilde{\pi}_t]\|_2)^2}{2\|\tilde{\pi}_t\|_2}=0 \quad\forall t\geq0.
\end{equation}
Then, for all $m$,
\begin{equation}
   [\tilde{L}_{m,t},\tilde{\pi}_t]=0\quad \forall t\geq0.
   \label{eq:proof_strong_int}
\end{equation}
Thus, again by the GKSL equation~\eqref{eq:GKSL_int},
\begin{align}
  \frac{d \tilde{\pi}_t}{d t}
  =-\frac{1}{2}\sum_{m}
  [\tilde{L}_{m,t},[\tilde{L}_{m,t},\tilde{\pi}_t]]=0 \quad \forall t\geq0.
\end{align}
Therefore, 
\begin{equation}
    \tilde{\pi}_t=\tilde{\pi}_0=\text{const.} \quad\forall t\geq0.
\end{equation}
From this and \eqref{eq:proof_strong_int}, one finds $\tilde{\pi}_0\in\cint$. Since $\tilde{\pi}_0$ is a nonzero and traceless operator, $\cint\neq\{c\mathbb{I}\mid c\in \mathbb{C}\}$. This proves the contrapositive, and hence Lemma~\ref{lem:lambda0_bound} holds.

\section{Proof of Lemma~\ref{lem:lambdan_bound}}
\label{sec:main1_lemma2}
{Condition~\ref{con:recurrent}} states that for all $\varepsilon>0$, there exists an infinite sequence of mutually disjoint time intervals $[t_n,t_n+\delta]$ $(n=0,1,\ldots)$ such that
\begin{equation}
    \opn\mathcal{L}_{t_n+t} - \mathcal{L}_{t_0+t} \opn_2<\varepsilon \quad \forall t\in [0,\delta].
\end{equation}
From~\eqref{eq:inequality_prime}, we have 
\begin{equation}
    \opn\mathcal{L}_{t_n+t} - \mathcal{L}_{t_0+t} \opn^\prime_2<\varepsilon \quad \forall t\in [0,\delta].
    \label{eq:bound_prime_l}
\end{equation}
From the triangle inequality, we have
\begin{align}
    |\lambda_n-\lambda_0|
    &\leq\opn\mathcal{V}_{t_n,t_n+\delta}-\mathcal{V}_{t_0,t_0+\delta}\opn_2^\prime.
    \label{eq:bound_lambda_v}
\end{align}
By writing $\mathcal{D}_t=\mathcal{V}_{t_n,t_n+t}-\mathcal{V}_{t_0,t_0+t}$, it holds that 
\begin{equation}  
    \partial_t\mathcal{D}_t=\mathcal{L}_{t_n+t}\mathcal{D}_t+(\mathcal{L}_{t_n+t} - \mathcal{L}_{t_0+t})\mathcal{V}_{t_0,t_0+t}.
\end{equation}
From \eqref{eq:bound_phi1phi2} and the triangle inequality, we have
\begin{equation}  
    \partial_t \opn\mathcal{D}_t\opn_2^\prime\leq\opn\mathcal{L}_{t_n+t}\opn_2^\prime\opn\mathcal{D}_t\opn_2^\prime+\opn\mathcal{L}_{t_n+t} - \mathcal{L}_{t_0+t}\opn_2^\prime\opn\mathcal{V}_{t_0,t_0+t}\opn_2^\prime.
\end{equation}
Here, we used that $\partial_t\opn\mathcal{D}_t\opn_2^\prime\leq \opn\partial_t\mathcal{D}_t\opn_2^\prime$, which follows from the triangle inequality.
    
From~{Condition~\ref{con:recurrent}}, $\opn\mathcal{L}_t\opn^\prime_2\leq M$ for all $t$. This, together with
\begin{equation}
    \opn\mathcal{V}_{t_0,t_0+t}\opn_2^\prime\leq  e^{\int_0^t \opn\mathcal{L}_s\opn_2^\prime ds } \leq e^{Mt}
\end{equation}
for any $0\leq t \leq \delta$ leads to
\begin{equation}
    \partial_t \opn\mathcal{D}_t\opn_2^\prime\leq M\opn\mathcal{D}_t\opn_2^\prime+e^{Mt}\opn\mathcal{L}_{t_n+t} - \mathcal{L}_{t_0+t}\opn_2^\prime.
\end{equation}
for any $0\leq t \leq \delta$. By applying Gr\"onwall's inequality,
\begin{equation}
    \opn\mathcal{D}_\delta\opn_2^\prime\leq e^{M\delta}\int_0^\delta\opn\mathcal{L}_{t_n+t} - \mathcal{L}_{t_0+t}\opn_2^\prime dt.
\end{equation}
Here, we used $D_0=\mathcal{V}_{t_n,t_n}-\mathcal{V}_{t_0,t_0}=0$. From \eqref{eq:bound_prime_l}, we have
\begin{equation}
    \opn\mathcal{D}_\delta\opn_2^\prime<\varepsilon\delta e^{M\delta}.
\end{equation}
This, together with \eqref{eq:bound_lambda_v}, yields
\begin{equation}
   |\lambda_n-\lambda_0|<\varepsilon\delta e^{M\delta}.
\end{equation}

\section{Proof of Theorem~\ref{thm:main2}}\label{sec:proof_main2}
From Theorem~\ref{thm:main1}, the steady state is unique if and only if $\cint=\{c\mathbb{I}\mid c\in\mathbb{C}\}$. Thus, it suffices to prove that 
\begin{equation}
\mathcal{C}_{t_0}^{\mathrm{ad}} =\{c\mathbb{I}\mid c\in\mathbb{C}\} \Longrightarrow
  \cint=\{c\mathbb{I}\mid c\in\mathbb{C}\}. 
  \label{eq:proof_main2_1}
\end{equation}
if $H_{t}$ and $L_{m,t}$ are smooth at $t=t_0$ and 
\begin{equation}
\cint=\{c\mathbb{I}\mid c\in\mathbb{C}\} \Longrightarrow\mathcal{C}_{t_0}^{\mathrm{ad}} =\{c\mathbb{I}\mid c\in\mathbb{C}\}
\label{eq:proof_main2_2}
\end{equation}
for some $t_0$ if $H_{t}$ and $L_{m,t}$ are analytical for every $t$.

First, from \eqref{eq:unitary_cad}, it holds that 
\begin{equation}
  \mathcal{C}_{t_0}^{\mathrm{ad}} =\{c\mathbb{I}\mid c\in\mathbb{C}\}\iff\tilde{\mathcal{C}}_{t_0}^{\mathrm{ad}} =\{c\mathbb{I}\mid c\in\mathbb{C}\}. 
\end{equation}
Next, we prove
\begin{equation}
    \tilde{\mathcal{C}}_{t_0}^{\mathrm{ad}}\supseteq\cint
    \label{eq:proof_main2_3}
\end{equation}    
if $H_{t}$ and $L_{m,t}$ are smooth at $t=t_0$ and 
\begin{equation}
    \tilde{\mathcal{C}}_{t_0}^{\mathrm{ad}}\subseteq\cint
    \label{eq:proof_main2_4}
\end{equation}  
if $H_{t}$ and $L_{m,t}$ are analytical for every $t$, which leads to \eqref{eq:proof_main2_1} and \eqref{eq:proof_main2_2}, respectively.

First, we prove~\eqref{eq:proof_main2_3} by proving that if $A\in\cint$, then $A\in\tilde{\mathcal{C}}_{t_0}^{\mathrm{ad}}$. From $A\in\cint$, 
\begin{equation}
    [\tilde{L}_{m,t},A]=0\quad\forall{m,t}.
    \label{eq:proof_main2_5}
\end{equation}
If $H_{t}$ and $L_{m,t}$ are smooth at $t=t_0$, so is $\tilde{L}_{m,t}$. Thus, by taking the $n$-th derivative of both sides of \eqref{eq:proof_main2_5}, we have
\begin{equation}
    [\partial^n_t\tilde{L}_{m,t}|_{t=t_0},A]=0\quad\forall{m,n},
\end{equation}
which leads to  $A\in\tilde{\mathcal{C}}_{t_0}^{\mathrm{ad}}$. Therefore, \eqref{eq:proof_main2_3} holds.

Next, we prove~\eqref{eq:proof_main2_4} by proving that if $A\in\tilde{\mathcal{C}}_{t_0}^{\mathrm{ad}}$ for some $t_0$, then $A\in\cint$. If $H_{t}$ and $L_{m,t}$ are analytical for every $t$, so is $\tilde{L}_{m,t}$, i.e., for every $t_0$, $\tilde{L}_{m,t_0}$ admits a Taylor expansion~\eqref{eq:taylor} in some open neighborhood $U(t_0)$ of $t=t_0$. Thus, assuming that $A\in\tilde{\mathcal{C}}_{t_0}^{\mathrm{ad}}$ for some $t_0$, i.e.,
\begin{equation}
   [\partial_t^n \tilde{L}_{m,t} \big|_{t=t_0},A] = 0 \quad\forall{m,n}
\end{equation}
for some $t_0$, then every terms in the right-hand side of~\eqref{eq:taylor} commutes with $A$, thus
\begin{equation}
[\tilde{L}_{m,t},A] = 0 \quad\forall{m}\text{ and }t\in U(t_0).
\end{equation}

Thus we can take $t_1\neq t_0$ such that $U(t_0) \cap U(t_1)\neq \emptyset$ and 
\begin{equation}
[\partial_t^n \tilde{L}_{m,t} \big|_{t=t_1},A] = 0 \quad\forall{m,n},
\end{equation}
because
\begin{equation}
[\tilde{L}_{m,t},A] = 0 \quad\forall{m}\text{ and }t\in U(t_0) \cap U(t_1).
\end{equation}
Therefore,
\begin{equation}
[\tilde{L}_{m,t},A] = 0 \quad\forall{m}\text{ and }t\in U(t_0) \cup U(t_1).
\end{equation}
Since $[0,\infty)$ is connected, we can repeat this argument to prove that
\begin{equation}
[\tilde{L}_{m,t},A] = 0 \quad\forall{m}\text{ and }t\in [0,\infty).
\end{equation}
Therefore, $A\in\cint$, which proves~\eqref{eq:proof_main2_4}.

\section{Counterexample to the converse part of Theorem~\ref{thm:main2} for a non-analytical GKSL equation}
\label{sec:counterexample}

Here, we provide a counterexample to the converse part of Theorem~\ref{thm:main2} for a non-analytical GKSL equation, which illustrates the importance of the assumption of analyticity of $\{H_t, \{L_{m,t}\}\}$.

\begin{ex}[Non-analytical case]
\label{ex:bump_function}

Consider a periodic Liouvillian $\mathcal{L}_t$ that is analytical at not all $t$ defined by:
\begin{align}
    H_t&=\begin{cases}
        g\sigma^x & \text{if } (n+\frac{1}{2})T\leq t\leq (n+\frac{3}{4})T \ (n=0,1,\ldots),\\
        0& \text{otherwise, } \\
    \end{cases}\\
    L_t&=\begin{cases}
        \sqrt{\kappa}\sigma^z& \text{if }nT\leq t\leq (n+\frac{1}{4})T\ (n=0,1,\ldots),\\
        0& \text{otherwise, } \\
    \end{cases}
\end{align}
where $g,\kappa>0$ are constants and $\sigma^\mu$ $(\mu=x,y,z)$ are the Pauli operators. This Liouvillian is not continuous in $t$, but we can make it continuous by coarse-graining (See Eq.~\eqref{eq:coarse‑graining}). Still, the coarse-grained Liouvillian is not analytical. 

This example can serve as a touchstone for assessing the validity of Theorem~\ref{thm:main2} for non-analytical GKSL equations by considering the one-cycle time-evolution superoperator
\begin{align}
\mathcal{U}_F = \mathcal{V}_{0,T},
\label{eq:time-evolution}
\end{align}
where $\mathcal{V}_{s,t}= \mathcal{T}e^{\int_{s}^{t}\mathcal{L}_{t'}dt'}$. 
Suppose that $\mathcal{U}_F(\rho^*)=z\rho^*$ with $|z|=1$.
Then
\begin{equation}
\rho^*_t = \mathcal{V}_{0,t}(\rho^*)
\end{equation}
defines a steady state that is generally time dependent.
Note that $\mathbb{I}/d$ always provides a time-independent steady state. The uniqueness of the steady state reduces to the uniqueness of eigenvalues of $\mathcal{U}_F$ with modulus $1$. If  such an eigenvalue is unique, $\mathcal{U}_F$ is called
\emph{mixing}~\cite{wolf_QuantumChannelsOperations_2012,burgarth_ErgodicMixingQuantum_2013}.
Using the Kraus representation, one can determine whether $\mathcal{U}_F$ is mixing. 

By integrating the equation, one obtains a Kraus representation of $\mathcal{U}_F$ as
\begin{equation}
    \mathcal{U}_F(\rho)=\mathcal{V}_{0,T}(\rho)= \sum_{j=0,1} K_j \rho K_j^\dagger,
\end{equation}
where
\begin{align}
    K_0&= \sqrt{\frac{1+e^{-\kappa T/2}}{2}} \left[\cos\left(\frac{gT}{4}\right)\mathbb{I}-i\sin\left(\frac{gT}{4}\right)\sigma^x\right],\\
    K_1&= \sqrt{\frac{1-e^{-\kappa T/2}}{2}}\left[\cos\left(\frac{gT}{4}\right)\sigma^z-\sin\left(\frac{gT}{4}\right)\sigma^y\right].
\end{align}

The structure of the steady state depends on $gT$.

\begin{enumerate}
    \item When $\sin(gT/4)=0$, i.e., $gT=4m\pi$ for some integer $m$, $K_0\propto\mathbb{I}$ and  $K_1\propto \sigma^z$. Thus, $\mathcal{U}_F$ has two eigenmodes with eigenvalue 1: $\mathbb{I}$ and $\sigma^z$. Therefore, the steady state is not unique, and the eigenmode $\sigma^z$ leads to a time-dependent steady state.
    
    \item When $\cos(gT/4)=0$, i.e., $gT=(4m-2)\pi$ for some integer $m$, $K_0\propto\sigma^x$ and  $K_1\propto \sigma^y$. Thus, $\sigma^z$ is an eigenmode of $\mathcal{U}_F$ with eigenvalue $-1$. As a result, the steady state is not unique, with the eigenmode $\sigma^z$ leading to a time-dependent steady state.

    \item Finally, when $\sin(gT/4),\,\cos(gT/4)\neq0$, i.e., $gT\neq 2m\pi$ for any integer $m$, $\mathcal{U}_F$ is mixing. This can be proved by checking that 
    $\operatorname{span}\{K_0^2,K_0K_1, K_1K_0, K_1^2\}=\mathcal{B}(\mathcal{H})$~\cite{wolf_QuantumChannelsOperations_2012,burgarth_ErgodicMixingQuantum_2013}. By writing $s=\sin(gT/2)$ and $c=\cos(gT/2)$, we have
    \begin{align}
        K_0^2&\propto c\mathbb{I}-is\sigma^x,\\
        K_0K_1&\propto c\sigma^z-s\sigma^y,\\
        K_1K_0&\propto \sigma^z,\\
        K_1^2&\propto \mathbb{I}.
    \end{align}
    Since $gT\neq 2m\pi$ for any integer $m$, $s\neq0$ and $c\neq1,-1$. Thus,
    \begin{align}
        \mathbb{I}&\propto K_1^2,\\
        \sigma^x&\propto K_0^2-cK_1^2,\\
        \sigma^y&\propto K_0K_1-cK_1 K_0,\\
        \sigma^z&\propto K_1 K_0,
    \end{align}
    which imply that $\operatorname{span}\{K_0^2,K_0K_1, K_1K_0, K_1^2\}=\mathcal{B}(\mathcal{H})$. Therefore, $\mathcal{U}_F$ is mixing, which means that the GKSL equation has a unique steady state. 
\end{enumerate}

Thus, if $gT\neq 2m\pi$ for any integer $m$, the GKSL equation has a unique steady state. However, for any $gT$, there is no $t_0$ such that $\mathcal{C}^\mathrm{ad}_{t_0}=\{c\mathbb{I}\mid c\in\,\mathbb{C}\}$, thus the converse part of Theorem~\ref{thm:main2} does not hold. 
This is because the inclusion relation~\eqref{eq:proof_main2_4} does not hold. To see this, we calculate $\cint$. First, $\tilde{L}_{t}=U_t^\dagger L_tU_t$ is given by:
\begin{equation}
    \tilde{L}_{t}=\begin{cases}
        \sqrt{\kappa}(\sigma^z\cos\theta_n+\sigma^y\sin\theta_n)& \text{if } nT\leq t\leq (n+\frac{1}{4})T,\\
        0& \text{otherwise}, \\
    \end{cases}
\end{equation}
where $\theta_n=\frac{ngT}{2}$. Thus, $\tilde{L}_{t}$ is non-analytical at $t=nT, (n+\frac{1}{4})T$. Furthermore, $\cint$ depends on $gT$, because if $gT=2m\pi$ for some integer $m$, then $\theta_n=m\pi$, so $\tilde{L}_{t}\propto\sigma^z$ for all $t$. Therefore,
\begin{equation}
    \cint=\begin{cases}
        \{c_1\mathbb{I}+c_2\sigma^z\mid c_1,c_2\in\,\mathbb{C}\}& \text{if }gT=2m\pi, \\
        \{c\mathbb{I}\mid c\in\,\mathbb{C}\}& \text{otherwise}. \\
    \end{cases}
\end{equation}
Now, even when $\cint=\{c\mathbb{I}\mid c\in\,\mathbb{C}\}$, there is no $t_0$ such that $\mathcal{C}^\mathrm{ad}_{t_0}=\{c\mathbb{I}\mid c\in\,\mathbb{C}\}$, which means that inclusion relation~\eqref{eq:proof_main2_4} does not hold. Note that this example is still consistent with Theorem~\ref{thm:main1}, because the steady state is unique if and only if $gT\neq2m\pi$ from the discussion using the Kraus representation, which exactly corresponds to the condition for $\cint=\{c\mathbb{I}\mid c\in\,\mathbb{C}\}$.
We also note that, for $gT=2m\pi$, we have $\cint\neq\{c\mathbb{I}\mid c\in\,\mathbb{C}\}$ with $\cst=\{c\mathbb{I}\mid c\in\,\mathbb{C}\}$, which correctly predict a time-dependent steady state without non-trivial time-independent steady states.

\end{ex}

\section{Proof of Lemma~\ref{lem:tfim2}}
\label{sec:proof_lem7}

To prove Lemma~\ref{lem:tfim2}, we first establish the following lemma.

\begin{lem}
Consider the GKSL equation defined through Eq.~\eqref{eq:example_3_majorana}.
For all integers $n$ with $0\leq n\leq V-1$, it holds that
\label{lem:tfim}
    \begin{align}
    \ad_t^{n}(L_t)&=f_{n,t}\gamma_{n+1}+g_{n,t}(\gamma_1,\ldots,\gamma_{n}),\label{eq:adn}
\end{align}
where 
\begin{align}  f_{2m,t}&=2^{2m}\sqrt{\kappa}B^m\cos^m(\omega t),\label{eq:feven}\\
f_{2m+1,t}&=2^{2m+1}\sqrt{\kappa}B^{m+1}\cos^{m+1}(\omega t)\label{eq:fodd}
\end{align}
and $g_{n,t}(\gamma_1,\ldots,\gamma_{n})$ is a linear combination of $\gamma_1,\ldots,\gamma_{n}$ with coefficients that depend smoothly on $t$. 
\end{lem}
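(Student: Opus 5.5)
The plan is induction on $n$, driven by an explicit formula for the action of $\ad_t$ on a single Majorana operator. Since $\ad_t$ is linear and maps the real span of $\gamma_1,\ldots,\gamma_{2V}$ into itself (up to the harmless $\partial_t$ of coefficients), the whole lemma reduces to tracking the coefficient of the highest-index Majorana operator.

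First, I will compute the elementary commutators using $\{\gamma_j,\gamma_k\}=2\delta_{jk}$. For $a\neq b$ one has $[i\gamma_a\gamma_b,\gamma_a]=-2i\gamma_b$ and $[i\gamma_a\gamma_b,\gamma_b]=2i\gamma_a$, while $[i\gamma_a\gamma_b,\gamma_k]=0$ for $k\notin\{a,b\}$. Plugging into $H_t$ of Eq.~\eqref{eq:example_3_majorana} and multiplying by $i$, I expect
\begin{align}
i[H_t,\gamma_{2j-1}]&=2B\cos(\omega t)\,\gamma_{2j}-2\gamma_{2j-2},\\
i[H_t,\gamma_{2j}]&=2\gamma_{2j+1}-2B\cos(\omega t)\,\gamma_{2j-1},
\end{align}
with the convention $\gamma_0=0$ and $\gamma_{2V+1}=0$. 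The sign conventions need a careful one-time check. The key structural fact is that $\ad_t$ raises the Majorana index by at most one. The coefficient of the raised term is $2B\cos(\omega t)$ when starting from an odd index and $2$ when starting from an even index.

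Next comes the induction. The base case is $n=0$: $\ad^0_t(L_t)=\sqrt{\kappa}\gamma_1$, so $f_{0,t}=\sqrt\kappa$ and $g_{0,t}=0$. For the step, assume $\ad_t^n(L_t)=f_{n,t}\gamma_{n+1}+g_{n,t}$ and apply $\ad_t(A)=i[H_t,A]+\partial_tA$. The term $\partial_t$ only differentiates the smooth coefficients and does not raise the index. By the formulas above, $i[H_t,g_{n,t}]$ lies in the span of $\gamma_1,\ldots,\gamma_{n+1}$. The term $i[H_t,f_{n,t}\gamma_{n+1}]$ produces $f_{n,t}c_{n+1}\gamma_{n+2}$ plus a multiple of $\gamma_n$. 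Here $c_{n+1}=2B\cos(\omega t)$ if $n+1$ is odd (i.e., $n$ even) and $c_{n+1}=2$ if $n$ is odd. Collecting terms, the new remainder $g_{n+1,t}$ is a smooth linear combination of $\gamma_1,\ldots,\gamma_{n+1}$, and it absorbs $\partial_tf_{n,t}\,\gamma_{n+1}$.

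This gives the recursion $f_{2m+1,t}=2B\cos(\omega t)f_{2m,t}$ and $f_{2m+2,t}=2f_{2m+1,t}$. It reproduces Eqs.~\eqref{eq:feven}--\eqref{eq:fodd}: from $f_{2m}=2^{2m}\sqrt\kappa B^m\cos^m$ one gets $f_{2m+1}=2^{2m+1}\sqrt\kappa B^{m+1}\cos^{m+1}$, and then $f_{2m+2}=2^{2m+2}\sqrt\kappa B^{m+1}\cos^{m+1}$. The induction is valid as long as the index $n+2$ does not exceed $2V$, so that the raising term is genuinely present; this covers the stated range.

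The main obstacle is purely bookkeeping. I need the signs in the commutators to be correct, and I must confirm that nothing from $g_{n,t}$ can reach index $n+2$. The latter holds because each $\gamma_k$ with $k\le n$ is raised at most to $\gamma_{k+1}$, and $k+1\le n+1$. The exact signs of the lower-order terms are irrelevant, since those terms are simply absorbed into $g$.
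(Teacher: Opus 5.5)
Your proposal is correct and follows the same route as the paper. Both argue by induction on $n$, track only the coefficient of the highest-index Majorana operator, and get the recursion $f_{2m+1,t}=2B\cos(\omega t)f_{2m,t}$ and $f_{2m+2,t}=2f_{2m+1,t}$ from the commutators of $H_t$ with $\gamma_{2j-1}$ and $\gamma_{2j}$. Your bookkeeping is actually more careful than the paper's displayed induction step, which omits the term $i[H_t,g_{n,t}]$; as you observe, that term lies in the span of $\gamma_1,\ldots,\gamma_{n+1}$, so it is absorbed into $g_{n+1,t}$ and leaves $f_{n+1,t}$ unchanged.
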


\begin{proof}[Proof of Lemma~\ref{lem:tfim}]
We prove Lemma~\ref{lem:tfim} by induction on $n$. For the base case $n=0$ and $n=1$, 
\begin{align}
 \ad_t^{0}(L_t)&=\sqrt{\kappa}\gamma_1,\\
 \ad_t^{1}(L_t)&=2\sqrt{\kappa}B\cos(\omega t)\gamma_2,
\end{align}
which are consistent with Eqs.~\eqref{eq:adn}, \eqref{eq:feven}, and \eqref{eq:fodd}. Now, assume {Lemma~\ref{lem:tfim}} holds for $n=2k-1$ with $1\leq k \leq V-1$. Then, for $n=2k$, we compute:
\begin{align}
\ad_t^{2k}(L_t)&=i[H_t,\ad_t^{2k-1}(L_t)]+\partial_t[\ad_t^{2k-1}(L_t)]\nonumber\\
&=-f_{2k-1,t}[B\cos(\omega t)\gamma_{2k-1}\gamma_{2k}+\gamma_{2k}\gamma_{2k+1},\gamma_{2k}]\nonumber\\
&\ \ \ \ +\partial_tf_{2k-1,t}\gamma_{2k}+\partial_tg_{2k-1,t}(\gamma_1,\ldots,\gamma_{2k-1})\nonumber\\
&=f_{2k,t}\gamma_{2k+1}+g_{2k,t}(\gamma_1,\ldots,\gamma_{2k})
\end{align}
where
\begin{align}
f_{2k,t}&=2f_{2k-1,t},\\
g_{2k,t}(\gamma_1,\ldots,\gamma_{2k})
&=-2B\cos(\omega t)f_{2k-1,t}\gamma_{2k-1}\nonumber\\
&\ \ \ \ +\partial_tf_{2k-1,t}\gamma_{2k}\nonumber\\
&\ \ \ \ +\partial_tg_{2k-1,t}(\gamma_1,\ldots,\gamma_{2k-1}).
\end{align}
Next, for $n=2k+1$, we have:
\begin{align}
\ad_t^{2k+1}(L_t)&=i[H_t,\ad_t^{2k}(L_t)]+\partial_t[\ad_t^{2k}(L_t)]\nonumber\\
&=-f_{2k,t}[\gamma_{2k}\gamma_{2k+1}+B\cos(\omega t)\gamma_{2k+1}\gamma_{2k+2},\gamma_{2k+1}]\nonumber\\
&\ \ \ \ +\partial_tf_{2k,t}\gamma_{2k+1}+\partial_tg_{2k,t}(\gamma_1,\ldots,\gamma_{2k})\nonumber\\
&=f_{2k+1,t}\gamma_{2k+2}+g_{2k+1,t}(\gamma_1,\ldots,\gamma_{2k+1}),
\end{align}
where
\begin{align}
f_{2k+1,t}&=2B\cos(\omega t)f_{2k,t},\\
g_{2k+1,t}(\gamma_1,\ldots,\gamma_{2k+1})
&=-2f_{2k,t}\gamma_{2k}\nonumber\\
&\ \ \ \ +\partial_tf_{2k,t}\gamma_{2k+1}\nonumber\\
&\ \ \ \ +\partial_tg_{2k,t}(\gamma_1,\ldots,\gamma_{2k}).
\end{align}
Thus, {Lemma~\ref{lem:tfim}} also holds for $n=2k$ and $n=2k+1$. By the principle of mathematical induction, we conclude that {Lemma~\ref{lem:tfim}} holds for all $n=0,\ldots, 2V-1$.
\end{proof}

\begin{proof}[Proof of Lemma~\ref{lem:tfim2}]
We prove Lemma~\ref{lem:tfim2} by induction on $n$. For the base case $n=0$, we have
\begin{align}
    L_t|_{t=0}&\propto\gamma_1.
\end{align}
Assume that Lemma~\ref{lem:tfim2} holds for $n=k-1$ with $1\leq k\leq V-1$. From Lemma~\ref{lem:tfim}, for $n=k$, we obtain
\begin{align}
    \ad_t^{k}(L_t)|_{t=0}&=f_{k,0}\gamma_{k+1}+g_{k,0}(\gamma_1,\ldots,\gamma_{k}),
\end{align}
which can be rearranged as
\begin{align}
\gamma_{k+1}=f_{k,0}^{-1}\left[\ad_t^{k}(L_t)|_{t=0}-g_{k,0}(\gamma_1,\ldots,\gamma_{k})\right],
    \label{eq:sigmaz}
\end{align}
where we used that $f_{k,0} \neq 0$ for all $k$. By the induction hypothesis, the right-hand side of equation~\eqref{eq:sigmaz} can be generated from the set $\{\mathbb{I},\ad_t^{m}(L_t)|_{t=0}\mid m=0,1,\ldots,k\}$.
Therefore, Lemma~\ref{lem:tfim2} also holds for $n=k$. By the principle of mathematical induction, we conclude that Lemma~\ref{lem:tfim2} holds for all $n=0,\ldots, 2V-1$.
\end{proof}

\section{Proof of Theorem~\ref{thm:indep}}
\label{sec:proof_thm_indep}
(\emph{1}. $\to$ \emph{2}.) 
We proceed by contradiction. Assume that
\begin{equation}
    \lim_{t\to\infty} \rho_t = \rho^*,
    \label{eq:proof_thm8_1}
\end{equation}
for some initial condition $\rho_0$, but 
\begin{equation}
\rho^* \notin \cst.
\label{eq:proof_thm8_2}
\end{equation}
From~\eqref{eq:proof_thm8_2}, equation \eqref{eq:commutant_degen} implies that
\begin{equation}
    \mathcal{L}_{t_0} (\rho^*)\neq0 
\end{equation}
for some $t_0\geq0$. Thus, we can choose a non-zero element of the matrix $\mathcal{L}_{t_0} (\rho^*)$, and we write it as $[\mathcal{L}_{t_0} (\rho^*)]_{kl}$. Without loss of generality, we assume that 
\begin{equation}
f:=\operatorname{Re}[\mathcal{L}_{t_0} (\rho^*)]_{kl}>0.
\end{equation}
Since $\mathcal{L}_t$ is continuous in time, there exists $\delta>0$ such that
\begin{equation}
    \operatorname{Re}[\mathcal{L}_{t} (\rho^*)]_{kl}>\frac{f}{2}>0 \quad \forall t\in[t_0,t_0+\delta].
\end{equation}
From Condition~\ref{con:recurrent}, for any $\varepsilon>0$, we can take an infinite sequence of mutually disjoint time intervals $[t_n,t_n+\delta]$ $(n=0,1,\ldots)$ such that
\begin{equation}
      \operatorname{Re}[\mathcal{L}_{t} (\rho^*)]_{kl}>\frac{f}{2}-\varepsilon \quad \forall t\in[t_n,t_n+\delta].
      \label{eq:proof_thm8_3}
\end{equation}

Next, from~\eqref{eq:proof_thm8_1}, for any $\alpha$, there exists $\tilde{t}$ such that
\begin{equation}
    \|\rho_t-\rho^*\|_2<\alpha\quad \forall t>\tilde{t}.
    \label{eq:proof_thm8_4}
\end{equation}
From Condition~\ref{con:recurrent}, $\opn\mathcal{L}_t\opn_2\leq M$, and thus
\begin{equation}
    \|\mathcal{L}_t(\rho_t)-\mathcal{L}_t(\rho^*)\|_2<M\alpha\quad \forall t>\tilde{t}.
    \label{eq:proof_thm8_5}
\end{equation}
By taking $n$ such that $\tilde{t}<t_n$, from \eqref{eq:proof_thm8_3} and \eqref{eq:proof_thm8_5}, we have
\begin{equation}
    \operatorname{Re}[\mathcal{L}_{t} (\rho_t)]_{kl}>\frac{f}{2}-M\alpha-\varepsilon \quad \forall t\in[t_n,t_n+\delta].
\end{equation}
Therefore, it holds that
\begin{align}
    \| \rho_{t_n+\delta}-\rho_{t_n}\|_2
    &\geq |[\rho_{t_n+\delta}-\rho_{t_n}]_{kl}|\nonumber\\
    &\geq \operatorname{Re}[\rho_{t_n+\delta}-\rho_{t_n}]_{kl}\nonumber\\
    &=\int_{t_n}^{t_n+\delta} \operatorname{Re} [\mathcal{L}_t(\rho_t)]_{kl}dt\nonumber\\
    &>\left(\frac{f}{2}-M\alpha-\varepsilon\right)\delta.
    \label{eq:proof_thm8_6}
\end{align}
On the other hand, equation \eqref{eq:proof_thm8_4} implies that
\begin{equation}
    \|\rho_{t_n+\delta}-\rho_{t_n}\|<2\alpha.
    \label{eq:proof_thm8_7}
\end{equation}
By comparing~\eqref{eq:proof_thm8_6} and \eqref{eq:proof_thm8_7}, it must hold that
\begin{equation}
   \left(\frac{f}{2}-M\alpha-\varepsilon\right)\delta <2\alpha.
\end{equation}
Since $\alpha$ and $\varepsilon$ can be taken to be arbitrarily small, by setting $\alpha<\min (\frac{f}{8M},\frac{f\delta}{8})$ and $\varepsilon<\frac{f}{8}$, we have
\begin{equation}
   \left(\frac{f}{2}-M\alpha-\varepsilon\right)\delta>\frac{f\delta}{4}>2\alpha.
\end{equation}
However, this is a contradiction. Therefore, the assumption that $\rho^* \notin \cst$ must be incorrect.  
Hence, if $\lim_{t\to\infty} \rho_t = \rho^*$ for some initial condition $\rho_0$, then $\rho^* \in \cst$.

(\emph{2}. $\to$ \emph{1}.) If $\rho^* \in \cst$, then $ \mathcal{L}_t (\rho^*) = 0$ for all $t$. Thus, by choosing $\rho_0=\rho^*$, we have $\lim_{t\to \infty} \rho_t = \rho^*$.

\section{Proof of Theorem~\ref{thm:time_dependent}}
\label{sec:proof_thm_time_dependent}

(\emph{1}. $\to$ \emph{2}.) Suppose $\cint\setminus\cst\neq \emptyset$. Since $\cint$ and $\cst$ are closed under Hermitian conjugation and they include $\mathbb{I}$, there is a density matrix $\rho^*$ such that 
\begin{equation}
    \rho^* \in\cint \text{ and } \rho^* \notin\cst.
\end{equation}
Since $\rho^* \in\cint$, $\tilde{\mathcal{L}}_t(\rho^*)=0$ for all $t$, and 
$\tilde{\rho}_t=\rho^*$ is a solution of the GKSL equation in the interaction picture~\eqref{eq:GKSL_int}. Thus, by relation~\eqref{eq:transform_int}, $\rho_t=U_t\rho^*U_t^\dagger$ is a solution in the Schr\"odinger picture.

We prove that $\rho_t=U_t\rho^*U_t^\dagger$ does not converge to any time-independent density matrix as $t\to \infty$ by contradiction. Assume that 
\begin{equation}
    \lim_{t\to\infty}U_t\rho^*U_t^\dagger=\sigma^*
    \label{eq:time_dependent_proof1}
\end{equation}
for some time-independent density matrix $\sigma^*$. Then, by {Theorem~\ref{thm:indep}}, $\sigma^*\in\cst$, and therefore $U_t^\dagger\sigma^* U_t=\sigma^*$ for all $t$. In particular,  \begin{equation}
   \lim_{t\to\infty}U_t^\dagger\sigma^*U_t=\sigma^*
   \label{eq:time_dependent_proof2}
\end{equation}
From \eqref{eq:time_dependent_proof1}, it holds that
\begin{equation}
    \lim_{t\to\infty}U_t^\dagger\sigma^*U_t=\rho^*
    \label{eq:time_dependent_proof3}
\end{equation}
From \eqref{eq:time_dependent_proof2} and \eqref{eq:time_dependent_proof3}, we find $\sigma^*=\rho^*$. However, since $\rho^* \notin\cst$, we also have $\sigma^*\neq\rho^*$, and this is a contradiction. Therefore, $\rho_t=U_t\rho^*U_t^\dagger$ does not converge to any time-independent density matrix as $t\to \infty$, which means that $\rho_t$ is a time-dependent steady state.

(\emph{2}. $\to$ \emph{1}.) Assuming $\cint=\cst$, we prove that any initial state relaxes to a time-independent state. From Lemma~\ref{lem:cint}, if
\begin{equation}
    \lim_{t\to \infty} \rho_t = \rho^*_t,
\end{equation}
there exists some density matrix $\tilde{\rho}^*\in \cint$ and 
\begin{equation}
   \rho^*_t=U_t \tilde{\rho}^*U_t^\dagger.
\end{equation}
Assumption $\cint=\cst$ implies that $\tilde{\rho}^*\in \cst$, which leads to $[U_t,\tilde{\rho}^*]=0$ for all $t$. Therefore,
\begin{equation}
   \rho^*_t=\tilde{\rho}^*,
\end{equation}
which proves the theorem.

\section{Proof of Lemma~\ref{lem:cint}}
\label{sec:proof_thm_cint}
(\emph{1}. $\to$ \emph{2}.) 
We first move to the interaction picture by \eqref{eq:transform_int}. Then, we prove that any initial state $\tilde{\rho}_0$ relaxes to some time-independent density matrix $\tilde{\rho}^*\in \cint$ which may depend on $\tilde{\rho}_0$. 

By defining 
\begin{equation}
    \aint=\langle \mathbb{I}, \{\tilde{L}_{m,t}\}\rangle,
\end{equation}
$\cint$ and $\aint$ are each other's commutant from von Neumann's bicommutant theorem, and the Hilbert space $\mathcal{H}$ can be decomposed into
representations of $\aint$ and $\cint$ as follows~\cite{moudgalya_HilbertSpaceFragmentation_2022}:
\begin{equation}
    \mathcal{H}=\bigoplus_{\alpha} \mathcal{H}^{(\mathcal{A})}_{\alpha} \otimes \mathcal{H}^{(\mathcal{C})}_{\alpha},
    \label{eq:hilbert_decomp}
\end{equation}
Here, $\mathcal{H}^{(\mathcal{A})}_{\alpha}$ and $\mathcal{H}^{(\mathcal{C})}_{\alpha}$ denotes 
$d_\alpha$ and $D_\alpha$ dimensional irreducible representation of $\aint$ and $\cint$. Under this decomposition, $O\in \cint$ and $\tilde{L}_{m,t}$ are written as
\begin{align} 
O=\bigoplus_{\alpha}
\mathbb{I}_{d_\alpha} \otimes O_\alpha,\ \tilde{L}_{m,t}=\bigoplus_{\alpha} 
\tilde{L}_{m,t,\alpha} \otimes \mathbb{I}_{D_\alpha}.
\label{eq:commutant_decomp}
\end{align}
According to~\eqref{eq:hilbert_decomp}, $\tilde{\rho}_t$ can be decomposed as 
\begin{equation}
\tilde{\rho}_t=\sum_{\alpha,\beta} \sum_k
\tilde{\rho}_{t,\alpha\beta,k}^{(\mathcal{A})} \otimes \tilde{\rho}_{t,\alpha\beta,k}^{(\mathcal{C})},
\label{eq:densitymatrix_decomp}
\end{equation}
where
\begin{align}
    \tilde{\rho}_{t,\alpha\beta,k}^{(\mathcal{A})}&: \mathcal{H}^{(\mathcal{A})}_{\beta}\to\mathcal{H}^{(\mathcal{A})}_{\alpha},\\
    \tilde{\rho}_{t,\alpha\beta,k}^{(\mathcal{C})}&: \mathcal{H}^{(\mathcal{C})}_{\beta}\to\mathcal{H}^{(\mathcal{C})}_{\alpha}.
\end{align}
Here, we used the Schmidt decomposition of the density matrix on $\mathcal{H}^{(\mathcal{A})}_{\alpha} \otimes \mathcal{H}^{(\mathcal{C})}_{\alpha}$ with the corresponding label $k$. By substituting \eqref{eq:commutant_decomp} and \eqref{eq:densitymatrix_decomp} to \eqref{eq:GKSL_int}, we have
\begin{align}
  \partial_t\tilde{\rho}_{t,\alpha\beta,k}^{(\mathcal{A})}
  &=-\frac{1}{2}\sum_{m}\mathcal{J}_{m,t,\alpha\beta}^2\left[\tilde{\rho}_{t,\alpha\beta,k}^{(\mathcal{A})}\right],\label{eq:GKSL_int_sector_A} 
  \\
  \partial_t\tilde{\rho}_{t,\alpha\beta,k}^{(\mathcal{C})}
  &=0.
  \label{eq:GKSL_int_sector_C}
\end{align}
where
\begin{equation}
    \mathcal{J}_{m,t,\alpha\beta}[A]:=\tilde{L}_{m,t,\alpha} A-A
  \tilde{L}_{m,t,\beta}.
\end{equation}
Since we consider the irreducible representations, Schur's lemma implies that
\begin{align}
   \cint_{\alpha \beta}
   &:=\bigcap_{m}\bigcap_{t=0}^\infty \Ker \mathcal{J}_{m,t,\alpha\beta}\\
   &=\begin{cases}
       \{c\mathbb{I}_{d_\alpha}\mid c\in\mathbb{C}\}& \text{if } \alpha=\beta, \\
       \emptyset& \text{if } \alpha\neq\beta.
   \end{cases} 
   \label{eq:cint_sector}
\end{align}
For $\alpha=\beta$, $\tilde{\rho}_{t,\alpha\alpha,k}^{(\mathcal{A})}\in \mathcal{B}(\mathcal{H}^{(\mathcal{A})}_{\alpha})$, and \eqref{eq:GKSL_int_sector_A} gives the GKSL equation in the interaction picture, thus Theorem~\ref{thm:main1} applies. Thus,
\begin{equation}    \lim_{t\to\infty}\tilde{\rho}_{t,\alpha\alpha,k}^{(\mathcal{A})}\propto\mathbb{I}
\end{equation}
for any initial condition from \eqref{eq:cint_sector}. 

For $\alpha\neq\beta$, $\tilde{\rho}_{t,\alpha\beta,k}^{(\mathcal{A})}$ is a mapping from a Hilbert space $\mathcal{H}^{(\mathcal{A})}_{\beta}$ to another Hilbert space $\mathcal{H}^{(\mathcal{A})}_{\alpha}$, thus Eq.~\eqref{eq:GKSL_int_sector_A} is not in the GKSL form. Nevertheless, by making slight modifications to the proof of Theorem~\ref{thm:main1}, one finds
\begin{equation}
 \lim_{t\to\infty}\tilde{\rho}_{t,\alpha\beta,k}^{(\mathcal{A})}=0.
 \label{eq:GKSL_off_diagonal}
\end{equation}
for any initial condition. Here, we make several remarks on Eq.~\eqref{eq:GKSL_off_diagonal}. First, for $\tilde{\rho}_{t,\alpha\beta,k}^{(\mathcal{A})}: \mathcal{H}^{(\mathcal{A})}_{\beta}\to\mathcal{H}^{(\mathcal{A})}_{\alpha}$ $(\alpha\neq\beta)$, the Hilbert-Schmidt norm is defined as 
\begin{equation}
\|\tilde{\rho}_{t,\alpha\beta,k}^{(\mathcal{A})}\|_2=\sqrt{\Tr[\tilde{\rho}_{t,\alpha\beta,k}^{(\mathcal{A})\dagger}\tilde{\rho}_{t,\alpha\beta,k}^{(\mathcal{A})}]}.
\end{equation}

Hereafter, we omit $\alpha,\beta,k$ and $(\mathcal{A})$. Then, by time evolution under \eqref{eq:GKSL_int_sector_A}, it does not increase, because
\begin{equation}
    \partial_t \|\tilde{\rho}_{t}\|_2=-\frac{\sum_{m=1}^M \Tr[(\mathcal{J}_{m,t,\alpha\beta}[\tilde{\rho}_{t}])^\dagger\mathcal{J}_{m,t,\alpha\beta}[\tilde{\rho}_{t}]]}{ \|\tilde{\rho}_{t}\|_2}\leq0.
\end{equation}
In particular, the equality holds if and only if $\mathcal{J}_{m,t,\alpha\beta}[\tilde{\rho}_{t}]=0$. From this and \eqref{eq:cint_sector}, by an argument analogous to that in the proof of Theorem~\ref{thm:main1}, one finds Eq.~\eqref{eq:GKSL_off_diagonal}.

Finally, from~\eqref{eq:GKSL_int_sector_C}, 
\begin{equation}
\tilde{\rho}_{t,\alpha\beta,k}^{(\mathcal{C})}=\tilde{\rho}_{0,\alpha\beta,k}^{(\mathcal{C})}
\end{equation}
for every $t$. Therefore, by writing $ \tilde{\rho}_{0,\alpha\beta}^{(\mathcal{C})}=\sum_{k}\tilde{\rho}_{0,\alpha\beta,k}^{(\mathcal{C})}$, we have
\begin{equation}
    \lim_{t\to\infty}\tilde{\rho}_t\propto\sum_{\alpha} \mathbb{I}_{d_\alpha} \otimes \tilde{\rho}_{0,\alpha\alpha}^{(\mathcal{C})}\in \cint.
\end{equation}

(\emph{2}. $\to$ \emph{1}.) If $\tilde{\rho}^* \in \cint$, then $ \tilde{\mathcal{L}}_t (\tilde{\rho}^*) = 0$ for all $t$. Thus, by choosing $\tilde{\rho}_0=\tilde{\rho}^*$, $\lim_{t\to \infty} \tilde{\rho}_t = \tilde{\rho}^*$. Thus, in the Schr\"odinger picture, $\lim_{t\to \infty} \rho_t = U_t\tilde{\rho}^*U_t^\dagger=\rho^*_t$.

\section{Proof of $\cst=\langle\mathbb{I},N\rangle$ in Examples \ref{ex:chinzei} and \ref{ex:hubbard_quasi}}
\label{sec:proof_hubbard_commutant}
In Examples \ref{ex:chinzei} and \ref{ex:hubbard_quasi}, we have seen that $\cst\supseteq\langle\mathbb{I},N\rangle$. In this section, we prove that the equality holds under appropriate assumptions on parameters. Since Example \ref{ex:chinzei} can be seen as a special case of Example \ref{ex:hubbard_quasi} ($B_1=B$, $\omega_1=\omega$, $B_2=0$), we focus on Example \ref{ex:hubbard_quasi}, whose Hamiltonian and jump operators are given by
\begin{align}
    H_t=
    &-\tau\sum_{\langle j,j^\prime\rangle\in\mathcal{N}}\sum_{\sigma=\uparrow,\downarrow}\left(c_{j,\sigma}^\dagger c_{j^\prime,\sigma}+\text{h.c.}\right)\nonumber\\
    &+\sum_{j\in\Lambda}\left[Un_{j,\uparrow}n_{j,\downarrow}+\mu_j n_j\right]\nonumber\\
    &+\frac{B_1}{2}\left[e^{-i\omega_1 t}S^++e^{i\omega_1 t}S^-\right]\nonumber\\
    &+\frac{B_2}{2}\left[e^{-i\omega_2 t}S^++e^{i\omega_2 t}S^-\right],\\
    L_j=&\sqrt{\kappa_j}n_j.
\end{align}
Here, $\Lambda$ denotes the set of sites and $\mathcal{N}$ denotes the set of bonds.
We assume that the lattice is connected, i.e., for any $j,k \in \Lambda$ such that $j\neq k$, there exists a finite sequence $l_1,\ldots,l_n \in \Lambda$ such that $l_1=j$, $l_n=k$, and $(l_i,l_{i+1})\in \mathcal{N}$ for $i = 1,\ldots,n-1$. 

The definition of $\cst$ is
\begin{align}
    \cst
    =\{O\in\mathcal{B}(\mathcal{H})\mid [H_t,O]=[L_{j},O]=0\ \forall j,t\}.
\end{align}
From von Neumann's bicommutant theorem~\cite{moudgalya_HilbertSpaceFragmentation_2022,moudgalya_SymmetriesCommutantAlgebras_2023},
\begin{equation}
    \cst=\langle\mathbb{I},N\rangle\iff \langle\mathbb{I}, \{H_t\},\{L_{j}\}\rangle=\langle\mathbb{I},N\rangle^\prime
\end{equation}
where $\langle\mathbb{I},N\rangle^\prime$ denotes the commutant of $\langle\mathbb{I},N\rangle$, i.e.,
   \begin{equation}
       \langle\mathbb{I},N\rangle^\prime=  \{O\in\mathcal{B}(\mathcal{H})\mid [O,N]=0\}.
   \end{equation} 
In our setup, the full operator algebra $\mathcal{B}(\mathcal{H})$ is spanned by polynomials in the creation and annihilation operators. Among these, the operators that commute with $N$ are spanned by polynomials containing equal numbers of creation and annihilation operators. Such polynomials are generated by $\mathbb{I}$ and $\{c^\dagger_{j,\sigma}c_{k,\tau} \mid   \sigma,\tau\in\{ \uparrow, \downarrow\},j,k\in\Lambda\}$. Therefore, it suffices to prove that
\begin{equation}
\langle\mathbb{I}, \{H_t\},\{L_{j}\}\rangle =\langle \mathbb{I},\{c^\dagger_{j,\sigma}c_{k,\tau}\}_{\sigma,\tau\in\{ \uparrow, \downarrow\},j,k\in\Lambda}\rangle.
\end{equation}
Since $H_t$ and $L_{j}$ can be written as a polynomial of $\{c^\dagger_{j,\sigma}c_{k,\tau}\}$, $\langle\mathbb{I}, \{H_t\},\{L_{j}\}\rangle \subseteq\langle \mathbb{I},\{c^\dagger_{j,\sigma}c_{k,\tau}\}_{\sigma,\tau\in\{ \uparrow, \downarrow\},j,k\in\Lambda}\rangle$ holds. Thus, it suffices to prove that
\begin{equation}
\langle\mathbb{I}, \{H_t\},\{L_{j}\}\rangle \supseteq\langle \mathbb{I},\{c^\dagger_{j,\sigma}c_{k,\tau}\}_{\sigma,\tau\in\{ \uparrow, \downarrow\},j,k\in\Lambda}\rangle.
\label{eq:hubbard_algebra_0}
\end{equation}
To this end, we prove the following:
\begin{align}
 S^+,S^-,S^z &\in \langle\mathbb{I}, \{H_t\},\{L_{j}\} \rangle,
\label{eq:hubbard_algebra_1}\\
\sum_{\sigma=\uparrow,\downarrow}c^\dagger_{j,\sigma}c_{k,\sigma} &\in \langle\mathbb{I}, \{H_t\},\{L_{j}\} \rangle\quad\forall j,k\in\Lambda,
\label{eq:hubbard_algebra_2}
\end{align}  

First, we prove~\eqref{eq:hubbard_algebra_1}. Noting that 
\begin{align}
    H_{t_i}-H_0
    = a_i S^++a_i^*S^-
    \in \langle\mathbb{I}, \{H_t\},\{L_{j}\}\rangle,
\end{align}
where
\begin{equation}
    a_i=\sum_{k=1,2}\frac{B_k}{2}(e^{-i\omega_k t_i}-1)
\end{equation}
and $t_i>0$, by choosing $t_1,t_2$ such that $a_1 a_2^*-a_1^*a_2\neq0$, we have
\begin{align}
    S^+,S^-
    &=\frac{a_2(H_{t_1}-H_0)-a_1(H_{t_2}-H_0)}{a_1 a_2^*-a_1^*a_2}\nonumber\\
    &\in \langle\mathbb{I}, \{H_t\},\{L_{j}\}\rangle,
\end{align}
which together with
\begin{equation}
    S^z=\frac{1}{2}[S^+,S^-]\in\langle\mathbb{I}, \{H_t\},\{L_{j}\}\rangle,
\end{equation}
implies \eqref{eq:hubbard_algebra_1}.

Next, we prove~\eqref{eq:hubbard_algebra_2}. When $\kappa_j>0$ for all $j$,
\begin{equation}
    n_j\in \langle\mathbb{I}, \{H_t\},\{L_{j}\}\rangle.
    \label{eq:hubbard_algebra_7}
\end{equation}
From~\eqref{eq:hubbard_algebra_1},  we have
\begin{align}
    T:=&H_0-\sum_{k=1,2}\frac{B_k}{2}(S^++S^-)\nonumber\\
     =&-\tau\sum_{\langle j,j^\prime\rangle}\sum_{\sigma=\uparrow,\downarrow}\left(c_{j,\sigma}^\dagger c_{j^\prime,\sigma}+\text{h.c.}\right)\nonumber\\     
     &+\sum_{j}\left[Un_{j,\uparrow}n_{j,\downarrow}+\mu_j n_j\right]\nonumber\\
     &\in\langle\mathbb{I}, \{H_t\},\{L_{j}\}\rangle.
\end{align}
Thus, by defining
\begin{equation}
    \mathcal{N}_j=\{k\in\Lambda\mid \langle j,k\rangle\in\mathcal{N}\},
\end{equation}
one finds
\begin{align}
[n_j,T]= -\tau&\sum_{l\in \mathcal{N}_j}\sum_{\sigma=\uparrow,\downarrow}(c^\dagger_{j,\sigma} c_{l,\sigma}-c^\dagger_{l,\sigma} c_{j,\sigma}), \\
[n_{k},[n_j,T]]&= \tau\sum_{\sigma=\uparrow,\downarrow}(c^\dagger_{j,\sigma} c_{k,\sigma}+c^\dagger_{k,\sigma} c_{j,\sigma}), \\
[n_{j},[n_{k},[n_j,T]]&= \tau\sum_{\sigma=\uparrow,\downarrow}(c^\dagger_{j,\sigma} c_{k,\sigma}-c^\dagger_{k,\sigma} c_{j,\sigma}).
\end{align}
Thus, if $\langle j,k\rangle\in\mathcal{N}$,
\begin{align}
\sum_{\sigma=\uparrow,\downarrow}c^\dagger_{j,\sigma} c_{k,\sigma}
&=\frac{[n_{k},[n_j,T]]+[n_{j},[n_{k},[n_j,T]]}{2\tau}\nonumber \\
&\in \langle\mathbb{I}, \{H_t\},\{L_{j}\}\rangle.
\end{align}
Moreover, since the lattice is connected, we can generate $\sum_{\sigma=\uparrow,\downarrow}c^\dagger_{l,\sigma}c_{m,\sigma}$ for any $l\neq m$ from $\sum_{\sigma=\uparrow,\downarrow}c^\dagger_{j,\sigma}c_{k,\sigma}$ such that $\langle j,k\rangle\in\mathcal{N}$ by successively using the commutation relation
\begin{equation}
\left[\sum_{\sigma=\uparrow,\downarrow}c^\dagger_{l,\sigma} c_{m,\sigma},\sum_{\tau=\uparrow,\downarrow}c^\dagger_{m,\tau} c_{n,\tau}\right]=\sum_{\sigma=\uparrow,\downarrow}c^\dagger_{l,\sigma} c_{n,\sigma}
\end{equation}
for $l\neq n$. Thus,
\begin{equation}
\sum_{\sigma=\uparrow,\downarrow}c^\dagger_{l,\sigma}c_{m,\sigma}\in\langle\mathbb{I}, \{H_t\},\{L_{j}\}\rangle.
\label{eq:hubbard_algebra_8}
\end{equation}
From \eqref{eq:hubbard_algebra_7} and \eqref{eq:hubbard_algebra_8}, we have \eqref{eq:hubbard_algebra_2}.

Next, from \eqref{eq:hubbard_algebra_1} and \eqref{eq:hubbard_algebra_2}, we prove \eqref{eq:hubbard_algebra_0}. First, from $N,S^z\in \langle\mathbb{I}, \{H_t\},\{L_{j}\}\rangle$, we generate the projection operator $P_{\uparrow(\downarrow)}$ to subspace of $\mathcal{H}$ with fermions with only up (down) spins. Since
\begin{equation}
    \prod_{\substack{m=0 \\ (m\neq n)}}^{2|\Lambda|}\frac{N-m\mathbb{I}}{n-m} \in \langle\mathbb{I}, \{H_t\},\{L_{j}\}\rangle
\end{equation}
is a projection operator to the eigenspace $N=n$ and 
\begin{equation}
    \prod_{l=-\frac{n}{2}}^{\frac{n}{2}-1}\frac{S^z-l\mathbb{I}}{\frac{n}{2}-l},\quad\prod_{l=-\frac{n}{2}+1}^{\frac{n}{2}}\frac{S^z-l\mathbb{I}}{-\frac{n}{2}-l}\in \langle\mathbb{I}, \{H_t\},\{L_{j}\}\rangle
\end{equation}
is a projection operator to the eigenspace $S^z=\frac{n}{2}$ and $S^z=-\frac{n}{2}$, $P_{\uparrow(\downarrow)}$ can be written as 
\begin{align}
    P_{\uparrow}&=\sum_{n=0}^{2|\Lambda|} \left[\prod_{\substack{m=0 \\ (m\neq n)}}^{2|\Lambda|}\frac{N-m\mathbb{I}}{n-m} \prod_{l=-\frac{n}{2}}^{\frac{n}{2}-1}\frac{S^z-l\mathbb{I}}{\frac{n}{2}-l}\right]\nonumber\\
    &\in \langle\mathbb{I}, \{H_t\},\{L_{j}\}\rangle,\\
    P_{\downarrow}&=\sum_{n=0}^{2|\Lambda|} \left[\prod_{\substack{m=0 \\ (m\neq n)}}^{2|\Lambda|}\frac{N-m\mathbb{I}}{n-m} \prod_{l=-\frac{n}{2}+1}^{\frac{n}{2}}\frac{S^z-l\mathbb{I}}{-\frac{n}{2}-l}\right]\nonumber\\
    &\in \langle\mathbb{I}, \{H_t\},\{L_{j}\}\rangle.
\end{align}
Therefore,
\begin{align}
    c^\dagger_{j,\uparrow}c_{k,\uparrow} =P_{\uparrow} \sum_{\sigma=\uparrow,\downarrow}c^\dagger_{j,\sigma}c_{k,\sigma}P_{\uparrow}\in \langle\mathbb{I}, \{H_t\},\{L_{j}\}\rangle,
    \label{eq:hubbard_algebra_3}\\
    c^\dagger_{j,\downarrow}c_{k,\downarrow} =P_{\downarrow} \sum_{\sigma=\uparrow,\downarrow}c^\dagger_{j,\sigma}c_{k,\sigma}P_{\downarrow}\in \langle\mathbb{I}, \{H_t\},\{L_{j}\}\rangle.
    \label{eq:hubbard_algebra_4}
\end{align}
Finally, we have
\begin{align}
   c^\dagger_{j,\uparrow}c_{k,\downarrow}&=[c^\dagger_{j,\uparrow}c_{k,\uparrow},S^+]\in \langle\mathbb{I}, \{H_t\},\{L_{j}\}\rangle,
   \label{eq:hubbard_algebra_5}\\
   c^\dagger_{j,\downarrow}c_{k,\uparrow}&=-[c^\dagger_{j,\uparrow}c_{k,\uparrow},S^-]\in \langle\mathbb{I}, \{H_t\},\{L_{j}\}\rangle.
   \label{eq:hubbard_algebra_6}
\end{align}
From~\eqref{eq:hubbard_algebra_3}, \eqref{eq:hubbard_algebra_4}, \eqref{eq:hubbard_algebra_5}, and \eqref{eq:hubbard_algebra_6}, we have \eqref{eq:hubbard_algebra_0}, which completes the proof.

\bibliography{reference}

\end{document}